\newcommand{\bkappa}{{\boldsymbol{\kappa}}}
\newcommand{\btau}{{\boldsymbol{\tau}}}
\newtheorem{lemma}{Lemma}[section]
\newtheorem{theorem}[lemma]{Theorem}
\newtheorem{proposition}[lemma]{Proposition}
\newtheorem{example}[lemma]{Example}
\newtheorem{definition}[lemma]{Definition}
\numberwithin{equation}{section}
\title{Correlation functions for a chain of short range oscillators}
\author{T. Grava}
\address{SISSA, via Bonomea 265, 34136 Trieste, Italy and School of Mathematics, University of Bristol, UK}
\email{grava@sissa.it}
\author{T. Kriecherbauer}
\address{Department of Mathematics, Universitat Bayreuth, Germany}
\email{thomas.kriecherbauer@uni-bayreuth.de}
\author{G. Mazzuca}
\address{SISSA, via Bonomea 265, 34136 Trieste, Italy }
\email{guido.mazzuca@sissa.it} 
\author{K. D. T.-R. McLaughlin}
\address{Department of Mathematics, Colorado State University, 1874 campus delivery, Fort Collins, CO 80523}
\email{kenmcl@rams.colostate.edu}
\date{\today}
\date{\today}
\begin{document}
	\maketitle
	\begin{abstract}
	We consider a system of harmonic oscillators with short range interactions and we study their correlation  functions when the initial data is sampled with respect to the Gibbs measure.
	Such correlation functions display rapid oscillations that travel through the chain.
	We show  that   the correlation functions always have 
	two  
	fastest peaks which move in  opposite directions and decay at rate $t^{-\frac{1}{3}}$  for position and momentum correlations and as $t^{-\frac{2}{3}}$ for energy correlations.  The 
	shape of these peaks is asymptotically described  by  the Airy function. Furthermore, the correlation functions have some non generic peaks with lower decay rates. In particular, there are 
	peaks which decay at rate 
	$t^{-\frac{1}{4}}$ for position and momentum correlators and with rate $t^{-\frac{1}{2}}$  for energy correlators. The shape of these peaks is described by the Pearcey integral. Crucial for our analysis is an appropriate generalisation of spacings, i.e.~differences of the positions of neighbouring particles, that are used as spatial variables in the case of nearest neighbour interactions. Using the theory of circulant matrices we are able to introduce a quantity that retains both localisation and analytic viability. This also allows us to define and analyse some additional quantities used for nearest neighbour chains.
	Finally, we study numerically  the evolution of the correlation functions 
	after adding nonlinear perturbations
 to our model. Within the time range of our numerical simulations the asymptotic description of the linear case  seems to persist for small nonlinear perturbations while stronger nonlinearities change shape and decay rates of the peaks significantly.
	\end{abstract}
	
	\section{Introduction}

	In this manuscript we consider a system of $N=2M+1$ particles interacting with a short range harmonic potential with
	 Hamiltonian of the form
	\begin{equation}
	\label{H0}
H=\sum_{j=0}^{N-1}\frac{p_j^2}{2}+\sum_{s=1}^m\frac{\kappa_s}{2}\sum_{j=0}^{N-1}(q_j-q_{j+s})^2\,,
\end{equation}
where $1\leq m\ll N$, $\kappa_1>0$, $\kappa_m>0$, and $\kappa_s\geq 0$ for $1 < s < m$. In order to make sense of \eqref{H0} we need to introduce boundary conditions. Throughout this paper we consider periodic boundary conditions. By that we mean that the indices $j$ are taken from $\Z / N\Z$ and therefore
\[
q_{N+j}=q_j,\quad p_{N+j}=p_j
\]
holds for all $j$.
The  Hamiltonian \eqref{H0} can be rewritten in the form 
			\begin{equation}
	\label{eq:A}
	H(\bp,\bq) := \frac{1}{2} \langle\bp, \bp\rangle +\frac{1}{2}\langle \bq,A \bq\rangle,
	\end{equation}
	where $\bp=(p_0,\dots,p_{N-1}) $, $\bq=(q_0,\dots,q_{N-1}),$  $\langle\,.,\,.\rangle$ denotes the standard scalar product in $\R^N$ and
	where $A\in\mbox{Mat}(N,\R)$ is a positive semidefinite  symmetric   circulant matrix generated by the vector  $\ba=(a_0,\dots,a_{N-1})$ namely $A_{kj}=a_{(j-k)\mbox{\tiny {mod $N$}}}$ or
	\begin{equation}
	\label{A}
			A = {\begin{bmatrix}
				a_{0}&  a_{1}&\dots & a_{N-2}& a_{N-1}
				\\
				a_{N-1}& a_{0}& a_{1}&& a_{N-2}
				\\
				\vdots & a_{N-1}& a_{0}&\ddots &\vdots 
				\\
				a_{2}&&\ddots &\ddots & a_{1}
				\\
				a_{1}& a_{2}&\dots & a_{N-1}& a_{0}
				\\
				\end{bmatrix}}\, ,
		\end{equation}
		where 
\begin{equation}
\label{v_a}
\begin{split}
&a_0=2\sum_{s=1}^m\kappa_s,\quad  a_s=a_{N-s}=-\kappa_s,\quad \mbox{ for $s=1,\dots,m$ and $a_s=0$ otherwise.}
\end{split}
\end{equation}
Due to the condition $\kappa_1 >0$ we have $\langle \bq,A \bq\rangle =0$ iff all spacings $q_{j+1}-q_j$ vanish. Therefore the kernel of $A$ is one-dimensional with the constant vector $(1,\ldots, 1)^\intercal$ providing a basis. This also implies that the lattice at rest has zero spacings everywhere. Observe, however, that one may introduce an arbitrary spacing $\Delta$ for the lattice at rest by the canonical transformation $Q_j = q_j + j\Delta$, $P_j = p_j$ which does not change the dynamics. The periodicity condition for the positions $Q_j$ then reads $Q_{N+j}=Q_j + L$ with $L=N \Delta$ (see e.g.\cite[Sec.~2]{Spohn2014}). 


The harmonic oscillator with  only  nearest neighbour interactions  is recovered by choosing 
\[
a_0=2\kappa_1,\quad a_1=a_{N-1}=-\kappa_1,
\]
and the remaining coefficients  are set to zero.

The equations of motion  for the Hamiltonian $H$  take the form
\[
\dfrac{d^2}{dt^2}q_j=\sum_{s=1}^m \kappa_s (q_{j+s}-2q_j+q_{j-s}),\quad j\in\Z / N\Z.
\]
The integration is obtained  by studying the dynamics in Fourier space  (see e.g. \cite{Lukkarinen}). 
 In this paper we study correlations between momentum, position and  local versions of energy. Following the standard procedure in the case of nearest neighbour interactions we replace the vector of position $\bq$ by a new variable $\br$ so that the Hamiltonian takes the form
 \[
H=\frac{1}{2}\langle \bp,\bp\rangle+\frac{1}{2}\langle\br,\br\rangle.
\] 
Such a change of variables may be achieved by any linear transformation
\begin{equation}
\label{r}
\br=T\bq,
\end{equation}  
with an $N\times N$ matrix $T$ that satisfies 
\begin{equation}
	\label{eq:T0}
		A = T^\intercal T,
	\end{equation}
where $ T^\intercal$ denotes the transpose of $T$. In the case of nearest neighbour interactions one may choose $r_j = \sqrt{\kappa_1} (q_{j+1}-q_j)$ corresponding to a circulant matrix $T$ generated by the vector $\boldsymbol{\tau}=\sqrt{\kappa_1}(-1, 1,0,\dots,0)$. We show in Proposition~\ref{prop:matrix_root} below that short range interactions given by matrices $A$ of the form~\eqref{A},~\eqref{v_a} also admit such a {\em localized square root}. More precisely, there exists a circulant $N\times N$ matrix $T$
 of the form
	\begin{equation}\label{form:T}
			T ={\begin{bmatrix}
				\tau_0&  \tau_{1}&\dots&  \tau_{m}&0 &\dots&0	\\			
				0& \tau_{0}& \tau_{1}&\dots&\tau_m&0&				\\
				&\ddots & \ddots& \ddots &\ddots &&\\
                                 \tau_m&0&\ddots & \ddots& \ddots &\ddots &\\
                                 				&\ddots & \ddots& \ddots &\ddots & \ddots&\ddots\\
				\tau_2&\dots&\tau_m&0&\dots&\tau_0&\tau_1\\
				\tau_1&\tau_2&\dots&\tau_m &0&0&\tau_{0}
				\\
				\end{bmatrix}}\, .
		\end{equation}
that satisfies \eqref{eq:T0}. The crucial point here is that $T$ is not the standard (symmetric) square root of the positive semidefinite matrix $A$ but a localized version generated by some vector $\boldsymbol{\tau}$
with zero entries everywhere, except possibly in the first $m+1$ components. Hence the $j$-th component of the {\em generalized elongation} $\br$ defined through \eqref{r} depends only on the components $q_s$ with $s=j,j+1,\ldots,j+m$.
It is worth noting that $\boldsymbol{1}=(1,\ldots, 1)^\intercal$ satisfies $T \boldsymbol{1}=0$ since $\langle \boldsymbol{1}, A\boldsymbol{1}\rangle =0$. This implies
$$\sum_{s=0}^m\tau_s=0\,, \quad 
r_j= \sum_{s=1}^{m} \tau_s (q_{j+s}-q_j)\,  \quad \text{and} \quad \sum_{j=0}^{N-1}r_j= (1,\ldots, 1) T \bq = 0.$$

The local energy $e_j$ takes the form
\[
e_j=\dfrac{1}{2}p_j^2+\dfrac{1}{2}r_j^2\,.
\]
The goal  of  this manuscript is to study the  behaviour of the correlation functions for the momentum $p_j$, the generalized elongation $r_j$ and the local energy $e_j$ 
when $N\to\infty$ and $t\to\infty$. 
Due to the spatial translation invariance of the Hamiltonian $H(\bp,\bq) = H(\bp,\bq+\lambda \boldsymbol{1})$, $\lambda \in \R$, that corresponds to the conservation of total momentum, we 
reduce the Hamiltonian system by one degree of freedom to obtain a normalizable Gibbs measure. This leads to the reduced phase space

		\begin{equation}
	\label{eq:phase_space}
	\cM := \left\{(\bp,\bq)\in \R^{N}\times \R^{N} \, : \, \sum_{k=0}^{N-1} p_k =0\, ;\, \sum_{k=0}^{N-1} q_k = 0 \right\}.
	\end{equation}
	We endow   $\cM$  with the Gibbs measure   at temperature $\beta^{-1}$, namely:
	\begin{equation}
	\label{eq:measure}
	\di \mu = Z_N(\beta)^{-1} \delta\left(\sum_{k=0}^{N-1} p_k\right)\delta\left(\sum_{k=0}^{N-1} q_k \right)e^{-\beta H(\bp,\bq)}\di \bp \di \bq
	\end{equation}
	where $Z_N(\beta)$ is the norming constant and $\delta(x)$ is the delta function. 

For convenience we introduce the vector
\[
\boldsymbol{u}(j,t)=(r_j(t),p_j(t),e_j(t)).
\]
We consider the  correlation functions 
	\begin{equation}
	\label{eq:correlation_functions}
	\begin{split}
&S^N_{\alpha\alpha'}(j,t) = \la u_{\alpha}(j,t)u_{\alpha'}(0,0) \ra -  \la u_{\alpha}(j,t)\ra\la u_{\alpha'}(0,0) \ra,\;\;\alpha,\alpha'=1,2,3, \\
	\end{split}
	\end{equation}
	where  the symbol $\la\,.\,\ra$ refers to averages with respect to $d\mu$ .		We calculate the limits 
	\[
	\lim_{N\to\infty}S^N_{\alpha\alpha'}(j,t)=S_{\alpha\alpha'}(j,t)\,.
	\]
	For the harmonic oscillator with nearest neighbor interactions such limits have  been calculated in   \cite{Mazur}.
	
	In an interesting series of papers, (see e.g. \cite{Spohn2016},  and also the collection \cite{LLP2016}) several researchers have considered the evolution of space-time correlation functions, for "anharmonic chains", which are nonlinear nearest-neighbor Hamiltonian systems of oscillators. The authors consider the deterministic evolution from random initial data sampled from a Gibbs ensemble, with a large number of particles  and study the correlation functions $S_{\alpha \alpha'}^{N}$.  

In addition to intensive computational simulations \cite{KD2016}, \cite{SpohnMendl17},   Spohn and collaborators  also propose and study a nonlinear stochastic conservation law model \cite{Spohn2014},\cite{Spohn2016}. Using deep physical intuition, it has been proposed that the long-time behaviour of space-time correlation functions of the deterministic Hamiltonian evolution from random initial data is equivalent to the behaviour of correlation functions of an analogous nonlinear stochastic system of PDEs.  Studying this stochastic model, Spohn eventually arrives at an asymptotic description of the "sound peaks" of the  correlation functions in normal modes coordinates which are related to $S_{\alpha\alpha'}$ by orthogonal transformation:
\begin{eqnarray}
\label{TW}
\tilde{S}_{\alpha \alpha} \cong \left( \lambda_{s} t \right)^{-2/3} f_{\mbox{KPZ}} \left( ( \lambda_{s} t )^{-2/3} ( x - \alpha c t ) \right) \ ,
\end{eqnarray}
using the notation of [Formula (3.1)]\cite{Spohn2014}. Here $  f_{\mbox{KPZ}}$ is a universal function that first emerges in the Kardar-Parisi-Zhang equation and it is  related to the Tracy-Widom distribution, \cite{TW}, (for a review see \cite{CorwinReviewPaper} and also \cite{Kriech}).
%
%
A common element to the above cited papers is the observation that such formulae should hold for non-integrable dynamics, while the correlation functions of integrable lattices of oscillators will exhibit {\textit{ballistic scaling}}, which means the correlation functions decay as $\frac{1}{t}$ for $t$ large.  For example, in \cite{KD2016} the authors present the results of simulations of the Toda lattice in 3 different asymptotic regimes (the harmonic oscillator limit, the hard-particle limit, and the full nonlinear system).  They present plots of the quantity $t S(x,t)$ as a function of the scaled spatial variable $x/t$ (here $S(x,t)$ represents any of the correlation functions).  The numerical results support the ballistic scaling conjecture in some of the asymptotic scaling regimes.
Further analysis in \cite{Spohn_Toda} gives a derivation of the ballistic scaling for the Toda lattice.
The decay of equilibrium correlation functions show similar  features as  anomalous heat transport  in one-dimensional systems \cite{Dhar},\cite{Lepri_Book}\cite{Onorato}  which leads to conjecture that the two phenomena are related  \cite{LLP2016}.

In \cite{SpohnMendl15} the authors also pursue a different connection to random matrices, and in particular to the Tracy-Widom distribution.  Over the last 15 years, there has emerged a story originating in the proof that for the totally asymmetric exclusion process on a 1-D lattice (TASEP), the fluctuations of the height function are governed (in a suitable limit) by the Tracy-Widom distribution.  Separately, a partial differential equations model for these fluctuations emerged, which takes the form of a stochastic Burgers equation:
\begin{eqnarray}
\label{eq:StochBurgers}
\frac{\partial u}{\partial t} = \nu \frac{\partial^{2} u}{\partial x^{2} } - \lambda u \frac{\partial u}{\partial x} + \frac{\partial \zeta}{\partial x} \ ,
\end{eqnarray}
where $\zeta$ is a stationary spatio-temporal white noise process.  (The mean behaviour of TASEP is actually described by the simpler Euler equation $ \displaystyle 
\frac{\partial u}{\partial t} =  - \lambda u \frac{\partial u}{\partial x}$.). From these origins there have now emerged proofs, for a small collection of initial conditions, that the fluctuations of the solution to (\ref{eq:StochBurgers}) are indeed connected to the Tracy-Widom distribution (see \cite{CorwinReviewPaper} and the references contained therein).  In \cite{SpohnMendl15}, the authors considered continuum limits of anharmonic lattices with random initial data, in which there are underlying conservation laws describing the mean behaviour that are the analogue of the Euler equation associated to (\ref{eq:StochBurgers}).  By analogy with the connection between TASEP and (\ref{eq:StochBurgers}), they proposed that the time-integrated currents are the analogue of the height function, and should exhibit fluctuations about their mean described by the Tracy-Widom distribution, again based on the use of the nonlinear stochastic pde system as a model for the deterministic evolution from random initial data.  As one example, they consider the quantity
\begin{eqnarray}
\label{Spohn1}
\Phi(x,t)=\int_{0}^{t} \frak{j}(x,t') dt' - \int_{0}^{x} \frak{u}(x',0)dx' \ ,
\end{eqnarray}
where $\frak{u}(x,t)$ arises as a sort of continuum limit of a particle system obeying a discrete analogue of a system of  conservation laws taking the form $\partial_{t} \frak{u}(x,t) + \partial_{x} \frak{j}(x,t) = 0$, in which $\frak{j}(x,t)$ is a local current density for $\frak{u}(x,t)$.  The authors suggest a dual interpretation of $\Phi(x,t)$ as the height function from a KPZ equation, and thus arrive at the proposal that 
\begin{eqnarray}
\label{Spohn2}
\Phi(x,t) \simeq a_{0} t + \left( \Gamma t \right)^{1/3} \xi_{TW} \ ,
\end{eqnarray}
where $a_{0}$ and $\Gamma$ are model-dependent parameters, and $\xi_{TW}$ is a random amplitude with Tracy-Widom distribution.

Our main result is the  analogue of the relations  \eqref{TW}  for  the harmonic oscillator with 
short range interactions and \eqref{Spohn2} for the harmonic oscillator.
	For  stating our result,  we first calculate the dispersion relation  $|\omega(k)|$  for the harmonic oscillator with short range interaction in the limit $N\to\infty$ obtaining
	\begin{equation}
	\label{dispersion0}
	f(k)=|\omega(k)|=\sqrt{2\sum_{s=1}^m\kappa_s\left(1-\cos(2\pi k s)\right)}\,,
	\end{equation}
see \eqref{eq:general_dispersion}. The points $k= 0,1$ contribute to the fastest moving peaks  of the correlation functions  that have a  velocity $\pm v_0$  where  $v_0= \sqrt{\sum_{s=1}^ms^2\kappa_s}=f'(0)/(2\pi)$.
 If $f''(k)<0$ for all $0<k\leq 1/2$ then as $t \to \infty$ the following holds uniformly in $j \in \Z$ (cf.~Theorem~\ref{th:theorem_slow}  and  Figure~\ref{Figure1}):
\begin{equation}
\begin{split}
\label{Airy0}
S_{\alpha\alpha'}(j,t)&= \frac{1}{2 \beta \lambda_0 t^{1/3}} \left[ (-1)^{\alpha+\alpha'}\mbox{Ai}\left(\dfrac{j-v_0t}{ \lambda_0 t^{1/3}}\right)+ \mbox{Ai}\left(-\dfrac{j+v_0t}{ \lambda_0 t^{1/3}}\right) \right]+\cO  \left(t^{-1/2}\right),\quad \alpha,\alpha'=1,2\\
S_{33}(j,t)&= \frac{1}{2 \beta^2 \lambda_0^2 t^{2/3}} \left[ \mbox{Ai}^2\left(\dfrac{j-v_0t}{ \lambda_0 t^{1/3}}\right)+ \mbox{Ai}^2\left(-\dfrac{j+v_0t}{ \lambda_0 t^{1/3}}\right) \right]+\cO  \left(t^{-5/6}\right)\,,
\end{split}
\end{equation}
where  $\mbox{Ai}(w) = \frac{1}{\pi} \int_0^\infty \cos(y^3/3+wy) d y$, $w \in \R$, is the Airy function, and  $\lambda_0 := \frac{1}{2}\Big(\frac{1}{v_0} \sum_{s=1}^m s^4\kappa_s \Big)^{1/3}.$
  The above formula is the linear analogue of the Tracy-Widom distribution in \eqref{TW}.
%
%
%
	 
	 Furthermore we can tune the spring intensities $\kappa_s$, $s=1,\dots,m$ in \eqref{dispersion0}  so that 
	 we can find an $(m-1)$-parameter family of potentials 
	such that for  $j\sim\pm  v^\ast t$, with $0\leq v^\ast<v_0$, one has 
	\[
	S_{\alpha\alpha'}(j,t)= \mathcal{O}\left(\frac{1}{t^{\frac{1}{4}}}\right),\;\;\alpha,\alpha'=1,2\,,\quad S_{33}(j,t)= \mathcal{O}\left(\frac{1}{t^{\frac{1}{2}}}\right),\quad \mbox{ as $t\to\infty$}\,.
	\]
	In this case the local behaviour of the correlation functions is described by the Pearcey integral (see Theorem~\ref{theoremP} and Figures~\ref{fig_ex1}, \ref{fig_ex2} below). 
	For example a potential with such behaviour  is given by a spring  interaction  of the form $\kappa_s=\dfrac{1}{s^2}$  for $s=1,\dots, m$ and $m$ even (see Example~\ref{example1} below).\\
In Section~\ref{sect3.2} we  study numerically  small nonlinear perturbations  of the harmonic oscillator with short range interactions and 
our results suggest that the behaviour of the fastest peak has a transition from  the Airy asymptotic  \eqref{Airy0} to the Tracy-Widom asymptotic \eqref{TW}, depending on the strength of the nonlinearity.
Namely  the  asymptotic  behaviour in  \eqref{TW} that has been conjectured for nearest neighbour interactions seems to persist also for sufficiently strong   nonlinear perturbations of the  harmonic oscillator with short range interactions.
Remarkably, our numerical simulations indicate that the non generic decay in time of other peaks in the correlation functions persists  under small nonlinear perturbations with the same power law $t^{-1/4}$ as in the linear case, see e.g. Figures~\ref{fig_ex1_non} and \ref{fig_ex3_non}. 
%
\vskip 0.2cm

So as not to overlook a large body of related work, we observe that the quantities we consider here are somewhat different than those considered in the study of thermal transport, though there is of course overlap.  (We refer to the Lecture Notes \cite{Lepri_Book} for an overview of this research area and also the seminal paper \cite{RLL}.)  As mentioned above, we study the dynamical evolution of space-time correlation functions and the statistical description of random height functions, where the only randomness comes from the initial data.  By comparison, in the consideration of heat conduction and transport in low dimensions, anharmonic chains are often connected at their ends to heat reservoirs of different temperatures, and randomness is present primarily in the dynamical laws, not only in fluctuations of initial data. 

\vskip 0.2cm

This manuscript is organized as follows.
In Section 2 we study the harmonic oscillator with short range interactions and we introduce  the necessary notation and the change of coordinates $\bq\to \br$ that enables us to study correlation functions.
We then  study the time decay  of the correlation functions  via steepest descent analysis and we show that the two fastest peaks travelling in opposite directions
 originate from the points $k=0$ and $k=1$ in the spectrum. Such peaks have a decay described by the Airy scaling. We then show the 
   existence of potentials such that the correlation functions have a slower   time decaying with respect to  "Airy peaks".
   In Section 3 we show that the harmonic oscillator  with short range interactions has a complete set of {\it local } integrals of motion in involution and the correlation functions
   of such integrals have the same structure as the energy-energy correlation function.
 Finally, we show that the evolution equations for the generalized position, momentum can be written in the form of  conservation laws which have a potential function.
 For the case of the harmonic oscillator with nearest neighbour interaction, we show that this function is  a Gaussian random variable
and determine the leading order behaviour of its variance as $t\to\infty$. 
This may be viewed as  the analogue of formula \eqref{Spohn2} for the linear case.
Technicalities and a description of our numerics are deferred to the Appendix.
	\section{The harmonic oscillator with short range interactions}

As it was explained in the introduction we rewrite the Hamiltonian for the harmonic oscillator with short range interactions
\[
H(\bp,\bq)=\sum_{j=0}^{N-1}\frac{p_j^2}{2}+\sum_{s=1}^m\frac{\kappa_s}{2}\sum_{j=0}^{N-1}(q_j-q_{j+s})^2 =
\sum_{j=0}^{N-1}\left(\frac{p_j^2}{2} +\frac{1}{2} \Big(\sum_{s=1}^m \tau_s (q_{j+s}-q_j)
\Big)^2\right)
\]
so that we may define a Hamiltonian density 
\[
e_j=\frac{p_j^2}{2}+\frac{1}{2} \Big(\sum_{s=1}^m \tau_s (q_{j+s}-q_j)
\Big)^2,
\]
which is local in the variables $(\bp,\bq)$ for fixed $m$.  Namely, if we let $N\to\infty$, the quantity $e_j$ involves a finite number of physical variables  $(\bp,\bq)$. 
Recall that the coefficients $\tau_s$ are the entries of the circulant localized square root $T$ of the matrix  $A$ by which we mean a solution of the equation \eqref{eq:T0} of the form \eqref{form:T}. The matrix $T$ will also play a role in constructing a complete set of integrals that have a local density in the sense that we just described for the energy. 
	
	In order to state our  result  we have to introduce some notation.   First of all,  a matrix $A$ of the form \eqref{A} with $\ba\in\R^N$ is called a circulant matrix generated by the vector $\ba$.
	\begin{definition}[$m$-physical vector and half-$m$-physical vector]
	\label{def:ph}
			Fix $m \in \N$. 
			For any odd $ N > 2m$, a   vector $\tilde{\bx}\in \R^{N}$  is said to be  {\em $m$-physical} generated by $\bx=(x_0,x_1,\dots, x_m)\in \R^{m +1}$ if $x_0=-2\sum_{s=1}^mx_s$ and 				
			\begin{align*}
				\tilde{x}_0 = & x_0 \, ,\\
				\tilde{x}_1= &\tilde{x}_{N-1} = x_1<0 ,\;\;\tilde{x}_m= \tilde{x}_{N-m} = x_m<0,\\
			\tilde{x}_k= &\tilde{x}_{N-k} = x_k \leq 0, \mbox{ for  $1< k <  m$},\\ 
			 \tilde{x}_{k} =& 0, \mbox{  otherwise,} 
						\end{align*}
	while the  vector  $\tilde{\bx}\in \R^{N}$  is called  {\em half-$m$-physical} generated by $\by\in  \R^{m +1}$ if  $y_0=-\sum_{s=1}^my_s$  and	\begin{align*}
		\tilde{x}_k=  & y_k,  \mbox{ for  $0 \leq k \leq m$   } \\
		\tilde{x}_k=&0,  \mbox{ for  $m< k \leq N-1$.   }
		\end{align*}
		\end{definition}
%
%
%
Following  the proof of a classic lemma by Fej\'er and Riesz, see e.g.~\cite[pg.~117 f]{book_Riesz_Nagy}, one can show that a circulant symmetric matrix $A$ of the form \eqref{eq:A}  generated by  a $m$-physical vector $\ba$ always has a circulant localized square root
 $T$  that is generated by a half-$m$ physical vector $\btau$.
\begin{proposition}
	\label{prop:matrix_root}
	Fix $m \in \N$. Let the circulant matrix $A$ be generated by an $m$-physical vector $\ba$, then there exist a circulant matrix $T$ generated by an half-$m$-physical vector $\btau$ such that:
	\begin{equation}
	\label{eq:T}
		A = T^\intercal T\,.
	\end{equation}
	Moreover, we can choose $\btau$ such that $\sum_{s=1}^m s \tau_s > 0$. Then one has $\sum_{s=1}^m s \tau_s=\sqrt{\sum_{s=1}^ms^2\kappa_s}$.
\end{proposition}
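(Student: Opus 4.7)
The plan is to diagonalize all circulant matrices simultaneously by the discrete Fourier transform and thereby reduce the existence of the localized square root $T$ to the classical Fej\'er-Riesz factorization of a non-negative trigonometric polynomial. The two normalizations (the half-$m$-physical constraint $\tau_0 = -\sum_{s \ge 1} \tau_s$ and the value of $\sum_{s \ge 1} s\tau_s$) will then be read off from the behaviour of the factorization at $\theta = 0$.

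First I would recall that a real circulant matrix generated by $(c_0, \ldots, c_{N-1})$ has eigenvalues $\hat{c}(k) := \sum_{s=0}^{N-1} c_s e^{2\pi i k s/N}$ for $k = 0,\ldots, N-1$, and that its transpose is circulant with generator $(c_0, c_{N-1}, \ldots, c_1)$, hence with eigenvalues $\overline{\hat c(k)}$. Consequently, for a real circulant $T$ generated by $\btau$ supported on $\{0, 1, \ldots, m\}$, the matrix identity $A = T^\intercal T$ is equivalent to the pointwise identity $\hat a(k) = |\hat\tau(k)|^2$ for every $k \in \{0, \ldots, N-1\}$. Using \eqref{v_a} one computes $\hat a(k) = P(2\pi k/N)$, where
\[
P(\theta) := 2\sum_{s=1}^m \kappa_s \bigl(1 - \cos(s\theta)\bigr)
\]
is a non-negative trigonometric polynomial of degree exactly $m$ (the latter because $\kappa_m > 0$).

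Next I would invoke the Fej\'er-Riesz lemma (see e.g.~\cite{book_Riesz_Nagy}) to produce real coefficients $\tau_0, \ldots, \tau_m$ such that $P(\theta) = \bigl|\sum_{s=0}^m \tau_s e^{i s \theta}\bigr|^2$. Setting $\tau_s = 0$ for $m < s < N$ produces a generator $\btau$ of half-$m$-physical type, and the corresponding circulant matrix $T$ of the form \eqref{form:T} satisfies $\hat\tau(k) = \sum_{s=0}^m \tau_s e^{2\pi i k s/N}$, whence $|\hat\tau(k)|^2 = P(2\pi k/N) = \hat a(k)$ for all $k$, establishing $A = T^\intercal T$. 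The normalizations are extracted by expanding both sides at $\theta = 0$: $P(0) = 0$ forces $\sum_{s=0}^m \tau_s = 0$, which is precisely the half-$m$-physical constraint. Matching the leading $\theta^2$-coefficients of
\[
P(\theta) = \Bigl(\sum_{s=1}^m s^2\kappa_s\Bigr)\theta^2 + O(\theta^4), \qquad \Bigl|\sum_{s=0}^m \tau_s e^{i s \theta}\Bigr|^2 = \Bigl(\sum_{s=1}^m s\tau_s\Bigr)^2 \theta^2 + O(\theta^3)
\]
then yields $\bigl(\sum_{s=1}^m s\tau_s\bigr)^2 = \sum_{s=1}^m s^2\kappa_s$. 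Replacing $\btau$ by $-\btau$ preserves $A = T^\intercal T$ while flipping the sign of $\sum s\tau_s$, so we may enforce $\sum_{s=1}^m s\tau_s = +\sqrt{\sum_{s=1}^m s^2\kappa_s}$.

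The main technical obstacle is the Fej\'er-Riesz factorization itself, and in particular the production of a $Q(z) = \sum_{s=0}^m \tau_s z^s$ with real (rather than merely complex) coefficients. The key input here is that $P$ is an even trigonometric polynomial: the associated algebraic polynomial $R(z) := z^m P(\theta)|_{z = e^{i\theta}}$ is self-reciprocal with real coefficients, so its roots occur in symmetric quadruples $\{\rho, \bar\rho, 1/\rho, 1/\bar\rho\}$ off the unit circle and in real reciprocal pairs or pairs of even multiplicity on the unit circle. Selecting one element of each reciprocal pair in a manner closed under complex conjugation yields a real polynomial $Q$. The double zero of $P$ at $\theta = 0$ (of order exactly two, since $\sum s^2\kappa_s > 0$) corresponds to a double root of $R$ at $z = 1$; assigning exactly one factor of $(z-1)$ to $Q$ encodes the constraint $\sum_{s=0}^m \tau_s = 0$ automatically.
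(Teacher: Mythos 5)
Your proposal is correct and follows essentially the same route as the paper: both reduce the matrix identity $A = T^\intercal T$ to a Fej\'er--Riesz factorization of the non-negative trigonometric polynomial $2\sum_{s=1}^m \kappa_s(1-\cos(s\theta))$, obtain real coefficients of $Q$ from the conjugate-reciprocal pairing of the roots of the associated self-reciprocal algebraic polynomial, and read off $\sum_{s=0}^m \tau_s = 0$ and $\left(\sum_{s=1}^m s\tau_s\right)^2 = \sum_{s=1}^m s^2\kappa_s$ from the behaviour at $z=1$ (you match Taylor coefficients at $\theta=0$; the paper equivalently evaluates $Q(1)^2=\ell(1)$ and $-2Q'(1)^2=\ell''(1)$). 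The only substantive difference is one of self-containment: the paper proves the Fej\'er--Riesz factorization in full following Riesz--Nagy (including the verification that the normalizing constant $d$ is real, via $a_0 = d^2\sum_j s_j^2$), whereas you cite the lemma and sketch the root-selection argument.
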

The proof of the proposition is contained in Appendix~\ref{Appendix_A}.

For  example,  if  we consider $m=1$, and $a_0=2\kappa_1$ and $a_1=a_{N-1}=-\kappa_1$.  The matrix $T$ is generated by the vector $\btau=(\tau_0,\tau_1)$ with $\tau_0=-\sqrt{\kappa_1}$ and $\tau_1= \sqrt{\kappa_1}$.
When $m=2$ and  $a_0=2\kappa_1+2\kappa_2$,  $a_1=a_{N-1}=-\kappa_1$,  $a_2=a_{N-2}=-\kappa_2$. The matrix $T$ is generated by the vector $\btau=(\tau_0,\tau_1,\tau_2)$
with 
\begin{align*}
&\tau_0=- \frac{\sqrt{\kappa_1}}{2}-\frac{1}{2}\sqrt{\kappa_1+4\kappa_2}, \;\;\tau_1=\sqrt{\kappa_1},\\
&\tau_2=-\frac{\sqrt{\kappa_1}}{2}+\frac{1}{2}\sqrt{\kappa_1+4\kappa_2},\end{align*}
so that the quantities $r_j$ are defined as 
\[
r_j=\tau_1(q_{j+1}-q_j)+\tau_2 (q_{j+2}-q_j)\,,\quad j\in\Z / N\Z \,.
\]
Next we integrate the equation of motions.
The Hamiltonian $H(\bp,\bq)$	represents clearly an integrable system  that can be integrated passing through Fourier transform.
Let $\cF$ be the discrete Fourier transform with entries  $\cF_{j,k}:= 
\frac{1}{\sqrt{N}} e^{- 2\im \pi j k/N}$ with $j,k=0,\dots, N-1$. It is immediate to verify that
	\begin{equation}
	\label{DHT_prop}
	\cF^{-1} = \widebar{\cF} \qquad \cF^\intercal = \cF.
	\end{equation}
	Thanks to the above  properties,  the transformation defined by
	\begin{equation}
	\label{eq:F_variables}
	(\wh p, \wh q) = (\widebar \cF p, \cF q)
	\end{equation} 
	is canonical.  Furthermore  $\widebar{\wh p}_j = \wh p_{N-j}$  and $\widebar{\wh q}_j = \wh q_{N-j}$, for $j=1,\dots,N-1$, while  $\wh p_0$ and  $\wh q_0$ are real variables. 
The matrices  $T$ and $A$ are  circulant matrices and so they are reduced to diagonal form by   $\cF$:
\[
\cF A \cF^{-1}= \cF T^\intercal T\cF^{-1}=\overline{(\cF T\cF^{-1})}^\intercal (\cF T\cF^{-1})\;.
\]
Let  $\omega_j$ denote the eigenvalues of the matrix $T$ ordered so that $\cF T\cF^{-1}=$ diag$(\omega_j)$. Then  $|\omega_j|^2$  are the (non negative) eigenvalues of the matrix $A$ and 
\begin{equation}
\label{omega}
|\omega_j|^2=\sqrt{N}( \overline{ \cF} \tilde{\ba})_j,\quad \omega_j=\sqrt{N}( \overline{\cF}\tilde{ \boldsymbol{\tau}})_j, \quad j=0,\dots, N-1,
\end{equation}
where  $\tilde{\ba}$ is  the $m$-physical vector generated by  $\ba$ and  $ \tilde{\boldsymbol{\tau}}$ is the half $m$-physical vector generated by 
  $ \boldsymbol{\tau}$ according to Definition~\ref{def:ph}. It follows that 
  \begin{equation}
  \label{sym_omega}
\omega_0=0,\quad   \omega_j=\overline{\omega}_{N-j}\,,\quad j=1,\dots,N-1,
  \end{equation}
  which implies $|\omega_{j}|^2=|\omega_{N-j}|^2$, $j=1,\dots,N-1$.
 The   Hamiltonian $H$, can be written  as the sum of $N-1$ oscillators
  	\begin{equation}
	H(\wh \bp, \wh \bq) = \frac{1}{2}\left( \sum_{j=1}^{N-1} |\wh p_j |^2 + |\omega_j|^2 |\wh q_j|^2  \right)\, =  \sum_{j=1}^{\frac{N-1}{2}} |\wh p_j |^2 + |\omega_j|^2 |\wh q_j|^2 \, .
	\end{equation} 	
 	There are no terms involving $\wh p_0, \wh q_0$ since the conditions defining $\cM$ \eqref{eq:phase_space} imply that $\wh p_0=0$ and $\wh q_0=0$. 
	 The Hamilton equations are 
\begin{equation}
\label{H_eq}
		\begin{cases}
		\dfrac{d}{dt}\wh{q}_{j}
		=\overline{\wh p_{j}}\vspace{.2cm}\\
		\dfrac{d}{dt}\overline{\wh{p}}_{j} 
		=- |\omega_j|^2\wh q_{j}\, .\\
		\end{cases}
	\end{equation}
	Thus the general solution reads:
	\begin{equation}
	\label{evolution_pq}
	\begin{split}
			&\wh q_{j}(t) = \wh q_{j}(0 )\cos(|\omega_{j}|t) + \frac{\overline{\wh p_{j}(0)}}{|\omega_j|}\sin(|\omega_j| t)\, , \\
			&\overline{\wh p_{j}(t)} = \overline{\wh p_{j}(0)}\cos(|\omega_{j}|t) - |\omega_j|\wh q_{j}(0)\sin(|\omega_j| t) \,,\quad j=1,\dots, N-1,
	\end{split} 
	\end{equation}
	 and $\wh q_{0}(t)=0$ and $\wh p_{0}(t)=0$. 
	Inverting the Fourier transform, we recover the variables $\bq=\cF^{-1}\wh \bq$, $\bp=\cF\wh \bp$ and $\br=\cF^{-1}\wh \br$ where
		\begin{equation}
	\label{eq:rjhat}
		\wh r_j =\omega_j\wh q_j, \, j = 0, \ldots, N-1\,.
	\end{equation}
	
%
%
%
%

	\subsection*{Correlation Decay}
		We now study  the decay of correlation functions for Hamiltonian systems  of the form~\eqref{eq:A}. 
	We recall the definition \eqref{eq:measure} of the Gibbs measure at temperature $\beta^{-1}$ on the reduced phase space $\cM$, namely:
	\begin{equation*}
	\di \mu = Z_N(\beta)^{-1} \delta\left(\sum_{k=0}^{N-1} p_k\right)\delta\left(\sum_{k=0}^{N-1} q_k \right)e^{-\beta H(\bp,\bq)}\di \bp \di \bq	\end{equation*}
	where $Z_N(\beta)$ is the  norming constant of the probability measure.	
	For a function $f=f(\bp,\bq)$ we define its average as
	\begin{equation*}
	 \la f \ra := \int_{\R^{2N}} f(\bp,\bq)  \, \, \di \mu .
	\end{equation*}	
We first  compute  all correlation functions \eqref{eq:correlation_functions}, then we will evaluate the limit $N\to\infty$.
		We first observe that \eqref{eq:measure} in the variables $(\wh \bp, \wh \bq) := (\widebar \cF \bp, \cF \bq)$ becomes
	
	\begin{equation}
	\label{eq:measureDFT}
	\di \mu = Z_N(\beta)^{-1} \prod_{j=1}^{\frac{N-1}{2}}e^{-\beta(|\wh p_j|^2 +|\omega_j|^2|\wh q_j|^2 )}\di \wh p_j \di \wh q_j 
	\end{equation}
	where   $\di \wh p_j \di \wh q_j =\di \Re\wh p_j \di \Im\wh p_j \di \Re\wh q_j \di \Im\wh q_j$ and we recall that  $\wh p_j = \overline{\wh p}_{N-j}$, $\wh q_j = \overline{\wh q}_{N-j}$, $\wh r_j =\omega_j\wh q_j,$  for $j=1,\dots, N-1$. 

From the evolution of  $\wh p_j$ and $\wh q_j$   in \eqref{evolution_pq}  and \eqref{eq:rjhat},  we arrive at the relations
	\begin{align}
	& \la \wh p_j(t)\overline{\wh p_k(0)} \ra = \la \overline{\wh p_k(0)}\left(\wh p_j (0)\cos(|\omega_j|t) - |\omega_j|\overline{\wh q_{j}(0)}\sin\left(|\omega_j| t\right)\right)\ra = \delta_{j,k} \frac{1}{\beta}\cos(|\omega_j|t), \\ 
		& \la \wh p_j(t)\wh r_k(0) \ra = \la \omega_k \wh q_k(0)\left(\wh p_j (0)\cos(|\omega_j|t)- |\omega_j |\overline{ \hat{q}_{j}(0)}\sin\left(|\omega_j| t\right)\right)\ra = -\delta_{j,k} \frac{\omega_j}{|\omega_j|\beta}\sin\left(|\omega_j| t\right),\\
		&\la \wh r_j(t)\wh p_k(0)\ra = \la \omega_j\wh p_k(0)\left(\wh q_j (0)\cos(|\omega_j|t)  +   \frac{\overline{\wh p_{j}(0)}}{|\omega_j|}\sin\left(|\omega_j| t\right) \right) \ra= \delta_{j,k}\frac{\omega_j}{|\omega_j|\beta}\sin(|\omega_j|t)\\
	&\la \wh r_j(t)\overline{\wh r_k(0)}\ra = \la \overline{\omega}_k\omega_j\overline{\wh q_k(0)}\left(\wh q_j (0)\cos(|\omega_j|t)  +   \frac{\overline{\wh p_{j}(0)}}{|\omega_j|}\sin\left(|\omega_j| t\right) \right) \ra= \delta_{j,k} \frac{1}{\beta}\cos(|\omega_j|t).
	\end{align} 
				Now we are ready to compute explicitly the correlation functions in the physical variables. We show the computation for the case $S^{N}_{11}(j,t) $, and we leave to the reader the details for the other cases:  
	
	\begin{equation}
	\label{eq:SppExplicit1}
	\begin{split}
	S^{N}_{11}(j,t) & = \la r_{j}(t)r_0(0) \ra = \frac{1}{N}\la\sum_{k,l=1}^{N-1}\wh r_k(t)\wh r_l(0)e^{2\pi \imath \frac{jk}{N}}\ra \\ 
	&= \frac{1}{N\beta} \sum_{l=1}^{N-1}\cos\left(|\omega_l|t\right)\cos\left(2\pi \frac{lj}{N}\right)=S^{N}_{22}(j,t)\,.
	\end{split}
	\end{equation}	
	In the same way we have that:
	\begin{align}
	\label{eq:CrrExplicit}
	&S^{N}_{12}(j,t)=  \frac{1}{N\beta}\sum_{l=1}^{N-1}\sin(|\omega_l|t)\cos\left(2\pi  \frac{lj}{N} + \arg (\omega_l) \right)\\
	\label{eq:CprExplicit}
	&S^{N}_{21}(j,t)= - \frac{1}{N\beta}\sum_{l=1}^{N-1}\sin(|\omega_l|t)\cos\left(2\pi  \frac{lj}{N}- \arg (\omega_l) \right)\\
	\label{eq:CzeroExplicit}
		&S^{N}_{31}(j,t)= S^{N}_{32}(j,t) = S^{N}_{13}(j,t)  = S^{N}_{23}(j,t) = 0 \\
	\label{eq:CeeExplicit}	
	&S^{N}_{33}(j,t)  	
=\frac{1}{2}((S^N_{11})^2 + (S^N_{22})^2 + (S^N_{12})^2+(S^N_{21})^2)+ \frac{3(N-1)}{2N^2\beta^2}\,.
	\end{align} 	
	
	The dispersion relation  given by \eqref{omega} takes the form
	\begin{equation}
	\begin{split}
		\label{dispersion}
		&\omega_\ell=-\sum_{s=1}^m\tau_s\left(1-\cos\left(2\pi \frac{s\ell}{N}\right)\right)+\im \sum_{s=1}^m\tau_s\sin\left(2\pi \frac{s\ell}{N}\right)\\
		&|\omega_\ell|^2= \sum_{s=0}^{N-1}a_se^{-2\pi\im\frac{s\ell}{N}}=  2\sum_{s=1}^m\kappa_s\left(1-\cos\left(2\pi \frac{s\ell}{N}\right)\right)\,,
	\end{split}
	\end{equation} 
	where  we substitute for the $a_s$ their values as in \eqref{v_a}. 
	We are interested in obtaining the continuum limit of the above correlation functions.
	We first define  $\omega(k)$ 
	to provide continuum limits of $\omega_\ell $ and  $|\omega_\ell |^2$, namely
	\begin{equation}
	\begin{split}
		\label{eq:general_dispersion}
		\omega(k)&:=-\sum_{s=1}^m\tau_s\left(1-\cos\left(2\pi sk\right)\right)+\im \sum_{s=1}^m\tau_s\sin\left(2\pi sk\right)\\
		|\omega(k)|^2& = 2\sum_{s=1}^m\kappa_s\left(1-\cos(2\pi k s)\right)\,,
	\end{split}
	\end{equation} 
where the variable $\ell/N$ has been approximated with $k\in[0,1]$. One may use equation \eqref{eq:polynomial_eq} to check the consistency of the two equations of \eqref{eq:general_dispersion}. To this end observe that $\omega(k)=Q(e^{-2\pi ik})$,  $\widebar{\omega(k)}=Q(e^{2\pi ik})$, and $|\omega(k)|^2=\ell(e^{2\pi ik})$.
\begin{lemma}\label{fandtheta}
Let $\omega(k)$ be defined as in \eqref{eq:general_dispersion}, set $f(k):=|\omega(k)|$, and denote $\theta(k):=$ arg$(\omega(k))$ for $0\leq k \leq 1$, where the ambiguity in the definition of $\theta$ is settled by requiring $\theta$ to be continuous with $\theta(0) \in (-\pi, \pi]$. Then, for all $k\in[0,1]$ we have
\begin{align}
\label{omegasym}
&\omega(1-k)=\widebar{\omega(k)},\\
\label{omegasym2}
&f(1-k)=f(k),\\
\label{thetasym}
& \theta(1-k)\equiv - \theta(k)\quad (\mbox{mod} \;\;2\pi).
\end{align}
Furthemore, the functions $f$ and $\theta-\frac{\pi}{2}$ are $C^\infty$ on $[0, 1]$ and they both possess odd $C^\infty$-extensions at $k=0$ which implies in particular $\;\theta(0)=\frac{\pi}{2}\,$.
\end{lemma}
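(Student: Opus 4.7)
The three symmetry relations \eqref{omegasym}--\eqref{thetasym} follow by direct substitution of $1-k$ into \eqref{eq:general_dispersion}. Since $\cos(2\pi s(1-k)) = \cos(2\pi sk)$ and $\sin(2\pi s(1-k)) = -\sin(2\pi sk)$, the formula immediately gives $\omega(1-k) = \overline{\omega(k)}$; taking modulus and argument then yields \eqref{omegasym2} and \eqref{thetasym}.

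For the smoothness assertions, the expression $|\omega(k)|^2 = 4\sum_{s=1}^m \kappa_s \sin^2(\pi s k)$ together with the assumption $\kappa_1 > 0$ shows that $\omega$ vanishes on $[0,1]$ exactly at the two endpoints. Away from these points $\omega$ is real-analytic with nonzero modulus, so $f$ and any continuous local branch of $\theta$ are real-analytic there. By the symmetry \eqref{omegasym} the behaviour near $k=1$ is the complex conjugate of the behaviour near $k=0$, so it is enough to analyse a neighbourhood of $k=0$.

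The crucial observation is that $\omega$ has a simple zero at $k=0$. Expanding \eqref{eq:general_dispersion} one gets
\[
\omega(k) = 2\pi i\, v_0\, k + O(k^2),\qquad v_0 := \sum_{s=1}^m s\tau_s > 0,
\]
where positivity of $v_0$ is guaranteed by Proposition~\ref{prop:matrix_root}. Hence $\Psi(k):=\omega(k)/k$ extends to a real-analytic function on $\R$ with $\Psi(0) = 2\pi i v_0 \neq 0$. Writing $\Psi = A + iB$ with $A,B$ real-analytic, we have $A(0) = 0$ and $B(0) = 2\pi v_0 > 0$, hence $B > 0$ on some interval $(-\delta, \delta)$. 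On this interval define
\[
\tilde f(k) := k\sqrt{A(k)^2 + B(k)^2},\qquad \tilde g(k) := -\arctan\bigl(A(k)/B(k)\bigr).
\]
Both are $C^\infty$. For $k\in[0,\delta)$, $\tilde f(k) = |k\Psi(k)| = |\omega(k)| = f(k)$, and $\tilde g(k) + \pi/2$ supplies a continuous choice of $\arg\omega(k)$ lying close to $\pi/2$; by the continuity of $\theta$ and the constraint $\theta(0)\in(-\pi,\pi]$, this choice must coincide with $\theta$, which forces in particular $\theta(0) = \pi/2$.

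Finally, the identity $\omega(-k) = \overline{\omega(k)}$, again an immediate consequence of the parities of $\cos$ and $\sin$, translates into $\Psi(-k) = -\overline{\Psi(k)}$, i.e.\ $A$ odd and $B$ even. Consequently $\tilde f$ is odd, and $\tilde g$ is odd because $A/B$ is odd and $\arctan$ is odd. This yields the required smooth odd extensions of $f$ and $\theta - \pi/2$ through $k=0$, completing the proof. The only genuine subtlety is the choice of branch for $\arg\omega$ at the zero of $\omega$; factoring out the simple zero and working with the nonvanishing function $\Psi$ is what reduces this to a routine smoothness statement.
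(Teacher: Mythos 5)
Correct, and essentially the paper's own approach: both proofs reduce to $k=0$ by symmetry, factor out the simple zero of $\omega$ there, and exploit parity together with $\sum_{s=1}^m s\tau_s>0$ from Proposition~\ref{prop:matrix_root}; your $\tilde f(k)=k\lvert\Psi(k)\rvert$ coincides with the paper's representation $f(k)=2\pi k\bigl(\sum_{s=1}^m s^2\kappa_s\,\mathrm{sinc}^2(\pi s k)\bigr)^{1/2}$, and your $\tilde g(k)=-\arctan\bigl(A(k)/B(k)\bigr)$ is the same device as the paper's expansion of $\cot\theta(k)=\Re\omega(k)/\Im\omega(k)$, whose smooth odd extension yields that of $\theta-\pi/2$. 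The only difference is presentational: you derive the parities of $A$ and $B$ conceptually from the identity $\omega(-k)=\overline{\omega(k)}$, whereas the paper reads them off explicit expansions.
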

\begin{proof}
The symmetries  follow directly from the definition of $\omega$ in~\eqref{eq:general_dispersion}. From~\eqref{eq:general_dispersion} we also learn that $|\omega(k)|^2 \geq 2 \kappa_1 (1-\cos(2\pi k)) > 0$ for $k\in (0, 1)$. Thus the smoothness of $f$ and $\theta$ only needs to be investigated for $k\in\{0,1\}$. By symmetry we only need to study the case $k=0$.
The smoothness of  the function $\theta$ may be obtained from the expansion near $k=0$
\[
\cot (\theta(k)) = -k\pi \frac{\sum_{s=1}^m s^2\tau_s}{\sum_{s=1}^m s \tau_s} + \cO(k^3)
\]
together with $\sum_{s=1}^m s \tau_s > 0$ (see Proposition~\ref{prop:matrix_root}). Since $\cot (\theta(0))=0$ and $\Im \omega(k) > 0$ for small positive values of $k$ we conclude that $\theta(0)=\frac{\pi}{2}$ from the requirement  $\theta(0) \in (-\pi, \pi]$. This also implies the existence of a smooth odd extension of $\theta-\frac{\pi}{2}$ at $k=0$ because $\cot (\theta(k))$ has such an extension. For the function $f$ the claims follow from the representation
\[
f(k)=2\pi k \left(\sum_{s=1}^m s^2\kappa_s\mbox{ sinc}^2(\pi s k)\right)^{1/2}
\]
near $k=0$ where sinc$(x)=\frac{\sin(x)}{x}$ denotes the smooth and even sinus cardinalis function. 
\end{proof}

\begin{lemma}\label{lemma_limit}
In the limit $N\to\infty$ the correlation functions have the following expansion
\begin{align*}
& S^{N}_{\alpha\alpha'}(j,t)+\frac{\delta_{\alpha\alpha'}}{N\beta}= S_{\alpha\alpha'}(j,t)+ \mathcal O \left(N^{-\infty}\right),\quad \alpha, \alpha'=1,2,\\
& S^{N}_{33}(j,t)= S_{33}(j,t)+ \mathcal O \left(N^{-1}\right),
\end{align*}
where $\delta_{\alpha\alpha'}$ denotes the Kronecker delta, 
	\begin{align}
	\label{eq:expCpp}
		S_{11}(j,t) &=S_{22}(j,t) =\frac{1}{\beta}\int_{0}^1\cos\left(|\omega(k)|t\right)\cos\left(2\pi kj\right) \di k\\
				\label{eq:expCpr}
		S_{12}(j,t) &=\frac{1}{\beta}\int_{0}^1\sin\left(|\omega(k)|t\right)\cos\left(2\pi kj +\theta(k)\right) \di k,\\
				\label{eq:expCrp}
		S_{21}(j,t) &=-\frac{1}{\beta}\int_{0}^1\sin\left(|\omega(k)|t\right)\cos\left(2\pi kj -\theta(k)\right) \di k,\\
		\label{S33}
		S_{33}(j,t)& =\frac{1}{2}(S_{11}^2 + S_{22}^2 + S_{12}^2+S_{21}^2),
	\end{align}
	and  $\theta(k)=\arg\omega(k)$  with $\omega(k)$ as in \eqref{eq:general_dispersion}.
	\end{lemma}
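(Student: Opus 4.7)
The explicit formulas (\ref{eq:SppExplicit1})--(\ref{eq:CeeExplicit}) already display $S^N_{\alpha\alpha'}(j,t)$ for $\alpha,\alpha'\in\{1,2\}$ as $N$-point Riemann sums with nodes $k_\ell=\ell/N$, $\ell=1,\dots,N-1$, of precisely the integrands appearing in the definitions of $S_{\alpha\alpha'}(j,t)$ in (\ref{eq:expCpp})--(\ref{eq:expCrp}). The plan is therefore to show that each such integrand admits a $C^\infty$, $1$-periodic extension to $\R$ and then invoke the classical fact that for such functions the trapezoidal rule approximates the integral with an error $\mathcal{O}(N^{-\infty})$ (a consequence of Poisson summation and the rapid decay of Fourier coefficients of smooth periodic functions). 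Completing the partial sum to the full range $\ell=0,\dots,N-1$ contributes a boundary term equal to $1/N$ times the value of the integrand at $k=0$; this accounts precisely for the $\delta_{\alpha\alpha'}/(N\beta)$ correction, since for $\alpha=\alpha'\in\{1,2\}$ the integrand at $k=0$ equals $\cos(0)\cos(0)=1$, whereas for $(\alpha,\alpha')\in\{(1,2),(2,1)\}$ the factor $\sin(|\omega(0)|t)=0$ kills the boundary contribution and no correction appears.

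The delicate point, and the step I expect to be the main obstacle, is the smoothness verification at the zeros of $\omega$, where $|\omega(k)|$ itself, extended $1$-periodically, is only continuous: its one-sided derivatives at the integers have opposite signs. The way around this relies on the identity $|\omega(k)|^2=4\sum_s\kappa_s\sin^2(\pi sk)$, which is smooth, even in $k$, and $1$-periodic. Near $k=0$ one has $|\omega(k)|=2\pi|k|\sqrt{h(k)}$ with $h$ smooth, even and positive, so evenness of cosine yields $\cos(|\omega(k)|t)=\cos(2\pi k\sqrt{h(k)}\,t)$, a $C^\infty$ and $1$-periodic function. For the off-diagonal correlators I would rewrite
\begin{equation*}
\sin(|\omega(k)|t)\cos(2\pi kj\pm\theta(k))=\frac{\sin(|\omega(k)|t)}{|\omega(k)|}\bigl[\mathrm{Re}\,\omega(k)\cos(2\pi kj)\mp\mathrm{Im}\,\omega(k)\sin(2\pi kj)\bigr],
\end{equation*}
and use that $\mathrm{Re}\,\omega$, $\mathrm{Im}\,\omega$ are trigonometric polynomials, while $\sin(|\omega(k)|t)/|\omega(k)|=\mathrm{sinc}(|\omega(k)|t)$ is smooth by evenness of the sinc function; hence the full integrand is $C^\infty$ and $1$-periodic. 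Lemma \ref{fandtheta} together with the sign condition $\sum_s s\tau_s>0$ from Proposition \ref{prop:matrix_root} justifies these manipulations and fixes the branch of $\theta$ near $k=0$.

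With smoothness and periodicity in hand, the trapezoidal-rule estimate yields $S^N_{\alpha\alpha'}(j,t)+\delta_{\alpha\alpha'}/(N\beta)=S_{\alpha\alpha'}(j,t)+\mathcal{O}(N^{-\infty})$ for $\alpha,\alpha'\in\{1,2\}$, with implicit constants depending on $j$ and $t$ through derivatives of the integrand. The $S_{33}$ estimate follows algebraically from the identity (\ref{eq:CeeExplicit}): each $S^N_{\alpha\alpha'}$ differs from $S_{\alpha\alpha'}$ by $\mathcal{O}(1/N)$, hence $(S^N_{\alpha\alpha'})^2=S_{\alpha\alpha'}^2+\mathcal{O}(1/N)$, and the residual term $3(N-1)/(2N^2\beta^2)$ is itself $\mathcal{O}(1/N)$, producing the stated error.
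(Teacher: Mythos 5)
Your proof is correct, and its skeleton is the same as the paper's: complete the explicit sums \eqref{eq:SppExplicit1}--\eqref{eq:CprExplicit} by the missing $\ell=0$ term (equal to $\tfrac{1}{N\beta}$ for the diagonal correlators and vanishing for the off-diagonal ones because $\omega(0)=0$), invoke the Poisson-summation fact that $\tfrac{1}{N}\sum_{\ell=0}^{N-1}g(\ell/N)=\int_0^1 g\,\di k+\mathcal O(N^{-\infty})$ for smooth $1$-periodic $g$, and deduce the $S_{33}$ statement algebraically from \eqref{eq:CeeExplicit}. Where you genuinely differ is in how you verify the key hypothesis, namely the $C^\infty$ $1$-periodicity of the integrands. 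The paper does this by symmetry matching: Lemma~\ref{fandtheta} gives odd smooth extensions of $f$ and $\theta-\tfrac{\pi}{2}$ at $k=0$ together with $f(1-k)=f(k)$ and $\theta(1-k)\equiv-\theta(k)$, so the integrand's values at $-k$ and at $1-k$ agree, making the periodic extension smooth. You instead rewrite the integrands as products of manifestly smooth $1$-periodic functions: for the diagonal case, evenness of cosine plus $f(k)=2\pi|k|\sqrt{h(k)}$; for the off-diagonal case, the identity expressing $\sin(ft)\cos(2\pi kj\pm\theta)$ through $\mathrm{Re}\,\omega$, $\mathrm{Im}\,\omega$ (trigonometric polynomials) and $\sin(ft)/f$, the latter being smooth because $\mathrm{sinc}$ is even and entire, hence a smooth function of the trigonometric polynomial $f^2$. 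This variant has the advantage of bypassing all branch bookkeeping for $\theta$ and of depending only on the elementary structure of $\omega$, whereas the paper's symmetry computation reuses exactly the representation \eqref{S12exp} that is needed later for the steepest-descent analysis. Two cosmetic slips, neither affecting validity: $\sin(|\omega|t)/|\omega| = t\,\mathrm{sinc}(|\omega|t)$, not $\mathrm{sinc}(|\omega|t)$; and $\cos\bigl(2\pi k\sqrt{h(k)}\,t\bigr)$ is only a local representation of $\cos(f(k)t)$ near $k=0$, not itself a globally $1$-periodic function --- what is $1$-periodic and smooth is $k\mapsto\cos(f(k)t)$, with smoothness at the integers certified by that local formula.
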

\begin{proof}
For any periodic $C^\infty$-function $g$ on the real line with period $1$, $g(k)=\sum_{n\in \Z}\hat{g}_n e^{2\pi ikn}$, one has
\[
\frac{1}{N} \sum_{\ell=0}^{N-1} g\left(\frac{\ell}{N}\right) = \sum_{m\in\Z} \hat{g}_{mN} = \int_0^1 g(k) \di k + \mathcal O \left(N^{-\infty}\right)\,.
\]
It follows from Lemma~\ref{fandtheta} that the integrands in \eqref{eq:expCpp}-\eqref{eq:expCrp} can be extended to $1$-periodic smooth functions because we have for small positive values of $k$ that
\begin{eqnarray*}
\cos\left(f(-k)t\right)\cos\left(-2\pi kj\right) &=& \cos\left(f(k)t\right)\cos\left(-2\pi kj\right) \;\;=\;\; \cos\left(f(1-k)t\right)\cos\left(2\pi(1- k)j\right)\,,\\
\sin\left(f(-k)t\right)\cos\left(-2\pi kj \pm\theta(-k)\right) &=& -\sin\left(f(k)t\right)\cos\left(-2\pi kj \pm(\pi-\theta(k))\right)\\ 
&=& \sin\left(f(1-k)t\right)\cos\left(2\pi(1-k)j \pm\theta(1-k)\right)\,.
\end{eqnarray*}
Observing in addition that the summands corresponding to $\ell=0$ are missing in \eqref{eq:SppExplicit1}-\eqref{eq:CprExplicit} the first claim is proved. Together with \eqref{eq:CeeExplicit} this also implies the second claim.
\end{proof}



Next we analyse the leading order behaviour (as $t \to \infty$) of the limiting correlation functions $S_{\alpha\alpha'}(j,t)$ using the method of steepest descent. In order to explain the phenomena that may occur we start by discussing $S_{11}$. Denote
\begin{equation}\label{def:fxiphi}
	\xi:=\dfrac{j}{t}  \quad \mbox{and} \quad \phi_\pm(k, \xi) :=  f(k) \pm2 \pi \xi k\,.	
\end{equation} 
With these definitions and using the symmetry \eqref{omegasym2}
we may write
\begin{equation}\label{eq:heuristic.1}
	S_{11}(j,t)= \frac{1}{2\beta}\Re \int_{0}^{1}\left( e^{it(f(k)+2\pi \xi k)} +e^{it(f(k)-2\pi \xi k)}\right) d k
	= \frac{1}{\beta}\Re \int_{0}^{1} e^{it\phi_- (k, \xi)}  d k\,.
\end{equation} 
The leading order behaviour ($t \to \infty$) of such an integral is determined by the stationary phase points $k_0 \in [0, 1]$, i.e.~by the solutions of the equation
$\frac{\partial}{\partial k} \phi_-(k_0, \xi) = 0$ which depend on the value of $\xi$.

Such stationary phase points do not need to exist. In fact, as we see in Lemma~\ref{lemma:omega} b) below, the range of $f'$ is given by some interval
$[-2\pi v_0, 2 \pi v_0]$ so that there are no stationary phase points for $| \xi | > v_0$. 
As in the proof of Lemma~\ref{lemma_limit} one can argue that the integrand $\Re e^{it\phi_- (k, j/t)}$ can be extended to a periodic smooth function of $k$ on the real line with period $1$. It then follows from integration by parts that $S_{11}(j,t)$ decays rapidly in time.
More precisely, for every fixed $\delta > 0$ we have
\begin{equation}\label{eq:rapid_decay}
S_{11}(j,t)= \cO  \left(t^{-\infty}\right)\quad\mbox{as $t \to \infty$, uniformly for $|j|\geq(v_0 + \delta)t$.} 
\end{equation}
This justifies the name of sound speed  for  the quantity $v_0$.

In the case $| \xi | \leq v_0$ there always exists at least one stationary phase point $k_0=k_0(\xi) \in [0, 1]$. 
Each stationary phase point may provide an additive contribution to the leading order behaviour of $\int_{0}^{1} e^{it\phi_- (k, \, j/t)}  d k$ for $j$ near $\xi t$. 
However, the order of the contribution depends on the multiplicity of the stationary phase point. 
For example, let $k_0$ be a stationary phase point of $\phi_- (\cdot, \xi)$, i.e.~$\frac{\partial}{\partial k} \phi_- (k_0, \xi) = 0$.
 Denote by $\ell$ the smallest integer bigger than $1$ for which  $\frac{\partial^\ell}{\partial k^\ell} \phi_-(k_0, \xi) \neq 0$. 
 Then $k_0$ contributes a term of order $t^{1/\ell}$ to the $t$-asymptotics of $\int_{0}^{1} e^{it\phi_- (k, \, j/t)}  d k$ for $j$ in a suitable neighbourhood of $\xi t$.

Before treating the general situation let us recall the case of nearest neighbour interactions. There we have
\[
f(k) = f_{1}(k) =\sqrt{2 \kappa_1 (1- \cos(2\pi k))} = 2 \sqrt{\kappa_1} \sin(\pi k)\,, \quad k \in [0, 1]\,.
\]
The range of $f_{1}'$ equals $[-2\pi v_0, 2 \pi v_0]$ with $v_0= \sqrt{\kappa_1}$. For every $|\xi| \leq v_0$ there exists exactly one stationary phase point $k_0(\xi)\in [0, 1]$ of $\phi_- (\cdot, \xi)$ that is determined by the relation $\cos(\pi k_0(\xi)) = \xi/v_0$.
A straight forward calculation gives
\[
\frac{\partial^2}{\partial k^2} \phi_-(k_0(\xi), \xi) = f_{1}''(k_0(\xi)) = -2 \pi^2 \sqrt{v_0^2-\xi^2} = 0 \; \Leftrightarrow \; \xi=\pm v_0\,.
\]
Moreover, we have $k_0(v_0)=0$ and $k_0(-v_0)=1$ and therefore $\frac{\partial^3}{\partial k^3} \phi_-(k_0(\pm v_0), \pm v_0) = \mp2 \pi^3 v_0 \neq 0$. This implies that in addition to \eqref{eq:rapid_decay} we have $S_{11}(j,t)= \cO (t^{-1/2})$, except for $j$ near $\pm v_0 t$ where $S_{11}(j,t)= \cO (t^{-1/3})$. In order to determine the behaviour near the least decaying peaks that travel at speeds $\pm v_0$ we expand $f_{1}$ near the stationary phase points. Let us first consider $\xi=v_0$ with $k_0=0$. Introducing $\lambda_0=\frac{1}{2\pi}|f_{1}'''(0)/2|^{1/3}=\frac{1}{2}v_0^{1/3}$ we obtain
\[
f_{1}(k)=2\pi v_0 k - \frac{1}{3} (2 \pi \lambda_0 k)^3 + \cO( k^5 )\,, \quad \mbox{as $k \to 0$.}
\]
Substituting $y = 2 \pi \lambda_0 t^{1/3} k$ leads for $k$ close to $0$ to the asymptotic expression 
\[
t \phi_- (k, \, j/t)= \frac{v_0 t-j}{\lambda_0 t^{1/3}} y - \frac{1}{3} y^3 + \cO( t^{-2/3} )\,, \quad \mbox{as $t \to \infty$.}
\]
Using the well-known representation  $\mbox{Ai}(w) = \frac{1}{\pi} \int_0^\infty \cos(y^3/3+wy) d y$, $w \in \R$, of the Airy function and performing a similar analysis around the stationary phase point $k_0=-1$ for $\xi=-v_0$ one obtains an asymptotic formula for the region not covered by \eqref{eq:rapid_decay}
\begin{equation}\label{eq:slow_decay}
S_{11}(j,t)= \frac{1}{2 \beta \lambda_0 t^{1/3}} \left[ \mbox{Ai}\left(\dfrac{j-v_0t}{ \lambda_0 t^{1/3}}\right)+ \mbox{Ai}\left(-\dfrac{j+v_0t}{ \lambda_0 t^{1/3}}\right) \right]+\cO\!\!\left(t^{-1/2}\right),\mbox{ $t \to \infty$, uniformly for $|j|<(v_0 + \delta)t$} 
\end{equation}
for $\delta >0$  (see e.g. \cite{Miller}). Observe that due to the decay of Ai$(w)$ for $w \to \pm\infty$, the Airy term is dominant  roughly in the regions described by $v_0 t - o (t) <|j|<v_0 t + o ((\ln t)^{2/3})$.

From the arguments just presented it is not difficult to see that the derivation of \eqref{eq:slow_decay} only uses the following properties of  $f=f_{1}$:
\begin{equation}\label{cond:1}
f''(k) < 0 \quad \mbox{for all} \quad 0 < k \leq\frac{1}{2}\,,
\end{equation}
together with
\begin{equation}\label{cond:2}
\quad f''(0) = 0\,,\quad f'''(0) < 0\,, \quad \mbox{and} \quad f(1-k)=f(k)\quad \mbox{for all} \quad 0 \leq k < \frac{1}{2}\,.
\end{equation}

Conditions \eqref{cond:1} and \eqref{cond:2} imply that statements \eqref{eq:rapid_decay} and \eqref{eq:slow_decay} hold with $v_0=\frac{f'(0)}{2 \pi} > 0$ and $\lambda_0=\frac{1}{2\pi}|f'''(0)/2|^{1/3}$.

It follows from equation \eqref{omegasym2} and from statement a) of~Lemma~\ref{lemma:omega} below that the conditions of \eqref{cond:2} are always satisfied in our model. Condition \eqref{cond:1}, however, might fail. Indeed, it is not hard to see that there exist open regions in the $\bkappa$-space $\R_+^m$ where there always exist stationary phase points $k_0 \in (0, 1) $ of higher multiplicity, i.e.~with $f''(k_0)=0$. In this situation the value of $v:= \frac{f'(k_0)}{2 \pi}$ lies in the open interval $(-v_0, v_0)$ (cf.~Lemma~\ref{lemma:omega} b). Then the decay rate of $S_{11}(j,t)$ for $j$ near $vt$ is at most of order $t^{-1/3}$. The decay is even slower (at least of order $t^{-1/4}$) if $f'''(k_0)=0$ holds in addition. We show in Theorem \ref{theoremP} that this may happen for $\bkappa$ in some submanifold of $\R_+^m$ of codimension $1$ (see also Examples \ref{example1} and \ref{example2}). Nevertheless, if $\kappa_2$, $\ldots$, $\kappa_m$ are sufficiently small in comparison to $\kappa_1$  then condition \eqref{cond:1} is always satisfied as we show in Theorem \ref{th:theorem_slow} c).

Before stating our main results of this section, Theorems \ref{th:theorem_slow} and \ref{theoremP}, we first summarize some more properties of the function $f$. 
\begin{lemma}\label{lemma:omega}
Given $(\kappa_1,\dots,\kappa_m)$ with $\kappa_1>0$, $\kappa_m>0$, and $\kappa_j\geq 0$ for $1<j<m$. Denote $f(k)=|\omega(k)|$ for $0\leq k \leq 1$ as introduced in Lemma~\ref{fandtheta} and define $v_0:=(\sum_{s=1}^m s^2\kappa_s)^{\frac{1}{2}}$. Then the following holds:
\begin{itemize}
\item[a)] $\;f(0)=f''(0)=0\,$, $\;f'(0)=2\pi v_0\,$, and $\;f'''(0)=-\frac{2 \pi^3}{v_0}\sum_{s=1}^m s^4\kappa_s\,$.
\item[b)] $f'([0, 1]) = [-2\pi v_0,2\pi v_0]$. $f'$ attains its maximum only at $k=0$ and its minimum only at $k=1$.
\item[c)] Fix $\kappa_1 >0$. Then the map $f$ can be extended as a $C^\infty$-function of the variables $(k,\kappa_2,\ldots,\kappa_m)$ on the set $[0,1]\times[0, \infty)^{m-1}$. 
\end{itemize}  
\end{lemma}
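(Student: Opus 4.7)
The three statements are largely computational, so the plan is to extract each one from the explicit Fourier-type representations of $f$ and $f^2$.

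For part a), my plan is to reuse the representation
\[
f(k) = 2\pi k\left(\sum_{s=1}^m s^2\kappa_s\,\mbox{sinc}^2(\pi s k)\right)^{1/2}
\]
established in the proof of Lemma~\ref{fandtheta}. Since $\mbox{sinc}^2(x) = 1 - \tfrac{x^2}{3} + \mathcal O(x^4)$, expanding the sum inside the square root to second order in $k$, taking a Taylor expansion of the square root, and collecting the $k$ and $k^3$ coefficients will produce $f(k) = 2\pi v_0 k - \tfrac{\pi^3}{3v_0}(\sum_s s^4 \kappa_s)\, k^3 + \mathcal O(k^5)$. From Lemma~\ref{fandtheta} we already know that $f$ admits a smooth odd extension at $k=0$, hence $f(0)=f''(0)=0$ automatically; the claimed values of $f'(0)$ and $f'''(0)$ then follow by reading off the coefficients.

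For part b), I will differentiate $f(k)^2 = 4\sum_{s}\kappa_s \sin^2(\pi s k)$ to obtain the identity
\[
f(k)\,f'(k) = 2\pi \sum_{s=1}^m s\kappa_s\sin(2\pi s k) = 4\pi \sum_{s=1}^m s\kappa_s \sin(\pi s k)\cos(\pi s k).
\]
Then the Cauchy--Schwarz inequality applied to the vectors $(\sqrt{\kappa_s}\sin(\pi s k))_s$ and $(s\sqrt{\kappa_s}\cos(\pi s k))_s$ yields
\[
(f(k) f'(k))^2 \le 16\pi^2 \Big(\sum_s \kappa_s \sin^2(\pi s k)\Big)\Big(\sum_s s^2\kappa_s \cos^2(\pi s k)\Big) = 4\pi^2 f(k)^2\sum_s s^2\kappa_s\cos^2(\pi s k),
\]
so that $|f'(k)|\le 2\pi v_0$ whenever $f(k)\neq 0$. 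To pin down the equality cases I will use that the bound $\cos^2(\pi sk)\le 1$ is saturated for every $s$ with $\kappa_s>0$ only if $\pi k s \in \pi \Z$ for all such $s$; since $\kappa_1>0$ this forces $k\in\Z$, hence $k\in\{0,1\}$ on $[0,1]$. Combined with $f'(0)=2\pi v_0$ from part a) and the symmetry $f'(1-k)=-f'(k)$ from \eqref{omegasym2}, this gives $f'(1)=-2\pi v_0$ and strict inequality on $(0,1)$, which settles the uniqueness of the extrema and the identification of the range by continuity.

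For part c), I will write $f$ in the form $f(k) = 2\pi k\,\sqrt{G(k,\kappa_2,\dots,\kappa_m)}$ with
\[
G(k,\kappa_2,\dots,\kappa_m) := \sum_{s=1}^m s^2 \kappa_s\,\mbox{sinc}^2(\pi s k),
\]
which is manifestly $C^\infty$ jointly in all its arguments. Since $\kappa_1>0$ is fixed and $\mbox{sinc}(\pi k)>0$ for $k\in[0,1)$, we have $G \ge \kappa_1 \mbox{sinc}^2(\pi k)>0$ on $[0,1)\times[0,\infty)^{m-1}$, so the square root is smooth there; this handles every $k\in[0,1)$. The only remaining issue is the point $k=1$, which I will dispatch by invoking the symmetry $f(k)=f(1-k)$ (from \eqref{omegasym2}) to reduce smoothness at $k=1$ to smoothness at $k=0$, already covered.

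The main obstacle is the uniqueness of the maximum and minimum of $f'$ in part b): the Cauchy--Schwarz bound is easy, but tracking when it (and the auxiliary estimate $\cos^2\le 1$) is saturated, and separately handling the boundary point $k=1$ where $f(k)=0$ makes the formula $f'=\frac{2\pi\sum s\kappa_s\sin(2\pi sk)}{f(k)}$ degenerate, requires the symmetry argument. Parts a) and c) are essentially bookkeeping once the representation $f(k)=2\pi k\sqrt{G(k,\kappa)}$ is in hand.
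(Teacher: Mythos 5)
Your proposal is correct and follows essentially the same route as the paper: part a) and c) from the $f(k)=2\pi k\,(\sum_s s^2\kappa_s\,\mathrm{sinc}^2(\pi s k))^{1/2}$ representation, and part b) via Cauchy--Schwarz applied to $f f'$ (the paper writes $f=(\sum_s h_s^2)^{1/2}$ with $h_s=2\sqrt{\kappa_s}\sin(\pi s k)$, which is the identical computation), with the equality case pinned down by $\kappa_1>0$ and $|\cos(\pi k)|<1$ on $(0,1)$, and the endpoints handled by $f'(0)=2\pi v_0=-f'(1)$. Your extra care at $k=1$ in part c) (reducing to $k=0$ by the symmetry $f(1-k)=f(k)$) is a minor refinement of a point the paper leaves implicit, not a different method.
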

\begin{proof}
Statement a) follows directly from the last formula in the proof of Lemma~\ref{fandtheta} and from the expansion sinc$^2(x) = 1 -\frac{x^2}{3} +\cO(x^4)$ for small values of $x$:
\begin{equation*}\label{rep_f}
f(k)=2\pi k \left(\sum_{s=1}^m s^2\kappa_s\mbox{ sinc}^2(\pi s k)\right)^{1/2}=\; 2 \pi v_0 k-\frac{\pi^3}{3v_0}\left(\sum_{s=1}^m s^4\kappa_s\right) k^3 + \cO(k^5)\,.
\end{equation*}
This representation also settles statement c). As we know already $f'(0)=2\pi v_0=-f'(1)$ we may establish statement b) by verifying that $|f'(k)|<2\pi v_0$ holds for all $k\in (0, 1)$. To this end we write $f= (\sum_{s=1}^m h_s^2)^{1/2}$ with $h_s(k) = 2\sqrt{\kappa_s}\sin(\pi s k)$. Using the Cauchy-Schwarz inequality we obtain for $0<k<1$ that
\[
|f'(k)| = \frac{|\sum_{s=1}^m h_s(k) h_s'(k)|}{\left(\sum_{s=1}^m h^2_s(k)\right)^{1/2}} \leq \left(\sum_{s=1}^m (h'_s)^2(k)\right)^{1/2} = 2\pi \left(\sum_{s=1}^m s^2\kappa_s \cos^2(\pi s k)\right)^{1/2} < \; 2\pi v_0 \,,
\]
where the last inequality follows from $|\cos(\pi k)|<1$ and $\kappa_1 > 0$.
\end{proof}

We are now ready to state our first main result in this section.
\begin{theorem}
		\label{th:theorem_slow}
Let $m \in \N$, fix $\delta >0$, denote $f(k)=|\omega(k)|$ as introduced in Lemma~\ref{fandtheta}, and set 
\begin{equation}
\label{v0lambda0}
v_0 :=  \sqrt{\sum_{s=1}^m s^2\kappa_s}, \quad\lambda_0 := \frac{1}{2}\left(\frac{1}{v_0} \sum_{s=1}^m s^4\kappa_s \right)^{1/3}.
\end{equation}
\begin{itemize}
\item[a)] For all $\alpha$, $\alpha'=1,2,3$ we have rapid decay as $t \to \infty$, uniformly for $|j|>(v_0 + \delta)t$, i.e.
\begin{equation*}
S_{\alpha\alpha'}(j,t)= \cO  \left(t^{-\infty}\right).
\end{equation*}
\item[b)] If $f''(k)<0$ for all $0<k\leq 1/2$ then as $t \to \infty$ the following holds uniformly for $|j|<(v_0 + \delta)t$:
\begin{eqnarray}
\label{S11A}
S_{11}(j,t)&=& \frac{1}{2 \beta \lambda_0 t^{1/3}} \left[ \mbox{Ai}\left(\dfrac{j-v_0t}{ \lambda_0 t^{1/3}}\right)+ \mbox{Ai}\left(-\dfrac{j+v_0t}{ \lambda_0 t^{1/3}}\right) \right]+\cO  \left(t^{-1/2}\right) \;= \; S_{22}(j,t)\,,\\
\label{S12A}
S_{12}(j,t)&=& \frac{1}{2\lambda_0t^{1/3}\beta}\left(\mbox{Ai}\left(-\frac{j+v_0t}{\lambda_0 t^{1/3 } }\right)-\mbox{Ai}\left(\frac{j-v_0t}{\lambda_0 t^{1/3 } }\right)\right)+\cO(t^{-\frac{1}{2}})= \; S_{21}(j,t)\,,\\
\label{S33A}
S_{33}(j,t)&=& \frac{1}{2 \beta^2 \lambda_0^2 t^{2/3}} \left[ \mbox{Ai}^2\left(\dfrac{j-v_0t}{ \lambda_0 t^{1/3}}\right)+ \mbox{Ai}^2\left(-\dfrac{j+v_0t}{ \lambda_0 t^{1/3}}\right) \right]+\cO  \left(t^{-5/6}\right)\,.
\end{eqnarray}
\item[c)]
For every $\kappa_1>0$ there exists $\varepsilon = \varepsilon(\kappa_1) > 0$ such that for all $(\kappa_2, \ldots, \kappa_m) \in [0, \varepsilon)^{m-1}$ we have $f''(k) < 0$ for all $0<k\leq 1/2$.
\end{itemize}
\end{theorem}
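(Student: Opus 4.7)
The plan is to organize the proof around the three parts in the order stated, where parts (a) and (c) are relatively short and part (b) contains the main analytic work.

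For part (a), I would work from the explicit integral representations in Lemma~\ref{lemma_limit}. Each $S_{\alpha\alpha'}$ with $\alpha,\alpha'\in\{1,2\}$ is (the real part of) an oscillatory integral of the form $\int_0^1 e^{it\phi_\pm(k,j/t)} g(k)\,dk$ with $\phi_\pm(k,\xi)=f(k)\pm 2\pi\xi k$ and $g$ smooth, and the integrands extend to $1$-periodic $C^\infty$ functions on $\R$ (as in the proof of Lemma~\ref{lemma_limit}). By Lemma~\ref{lemma:omega} b) the range of $f'$ is $[-2\pi v_0,2\pi v_0]$, so for $|j/t|\geq v_0+\delta$ we have $|\partial_k\phi_\pm|\geq 2\pi\delta$ uniformly. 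Repeated integration by parts, pushing derivatives onto the smooth periodic factor $g/\partial_k\phi_\pm$, then yields $\cO(t^{-\infty})$ uniformly, and the result for $S_{33}$ follows from the product formula~\eqref{S33}.

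For part (b), the analysis is via stationary phase/steepest descent on the same integrals. Under \eqref{cond:1} together with Lemma~\ref{lemma:omega} a) and the symmetry $f(1-k)=f(k)$ from Lemma~\ref{fandtheta}, the only degenerate stationary points of $\phi_-(\cdot,\xi)$ with $|\xi|\leq v_0$ occur at $k=0$ (when $\xi=v_0$) and $k=1$ (when $\xi=-v_0$), where $f''=0$ but $f'''\neq 0$; all other critical points in the interior are simple. I would use a smooth partition of unity to split the integration into a piece supported near $k=0$, one near $k=1$, and one on a compact subset of $(0,1)$ kept away from both endpoints. On the compact interior piece, classical stationary phase gives a contribution of $\cO(t^{-1/2})$ uniformly in $\xi$. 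Near $k=0$ I would insert the Taylor expansion $f(k)=2\pi v_0 k - \tfrac{1}{3}(2\pi\lambda_0 k)^3 + \cO(k^5)$, rescale $y=2\pi\lambda_0 t^{1/3} k$, extend the domain of integration to $[0,\infty)$ with an $\cO(t^{-\infty})$ error, and recognize the limit as the Airy integral; the $k=1$ endpoint produces the second Airy term via the symmetry. This gives \eqref{S11A}. For $S_{12}$ and $S_{21}$ the same analysis applies with the additional factor from $\theta$; since $\theta(0)=\pi/2$ and $\theta-\pi/2$ has a smooth odd extension at $0$ (Lemma~\ref{fandtheta}), the contribution near $k=0$ acquires a factor of $\sin$ instead of $\cos$, which produces the sign pattern in \eqref{S12A}. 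Finally, \eqref{S33} expresses $S_{33}$ as a sum of squares of the already-analyzed correlators; squaring the Airy asymptotics and noting that cross-products between the $k=0$ peak (centered at $j=v_0 t$) and the $k=1$ peak (centered at $j=-v_0 t$) are spatially separated and contribute $\cO(t^{-\infty})$, while the product of the leading Airy term with its $\cO(t^{-1/2})$ correction is $\cO(t^{-5/6})$, yields \eqref{S33A}.

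For part (c), at the base point $\kappa_2=\cdots=\kappa_m=0$ one has $f(k)=2\sqrt{\kappa_1}\sin(\pi k)$, so $f''(k)=-2\pi^2\sqrt{\kappa_1}\sin(\pi k)<0$ on $(0,1/2]$, while $f'''(0)=-2\pi^3\sqrt{\kappa_1}<0$. By Lemma~\ref{lemma:omega} c) the map $(k,\kappa_2,\ldots,\kappa_m)\mapsto f(k)$ is $C^\infty$ on $[0,1]\times[0,\infty)^{m-1}$, so $f''$ and the coefficient $f'''(0)$ depend continuously on the perturbation uniformly in $k$. I would split $(0,1/2]$ into $[\eta,1/2]$ and $(0,\eta]$ for a suitably small $\eta>0$: on the compact piece $[\eta,1/2]$, uniform continuity of $f''$ in the parameters gives $f''<0$ for all perturbations in a small ball; on $(0,\eta]$, the expansion $f''(k)=f'''(0)k+\cO(k^2)$ with coefficients controlled uniformly over a small parameter ball forces $f''(k)<0$ once $\eta$ and the perturbation are small enough.

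The main obstacle is the uniformity in part (b): the stationary phase analysis must be carried out uniformly in $\xi=j/t$ across the whole range $|j|<(v_0+\delta)t$, patching the $t^{-1/2}$ contributions from $\xi$-dependent interior critical points with the $t^{-1/3}$ Airy contributions from the fixed degenerate endpoints and verifying that the transition region is handled cleanly. This is technical but standard and I would invoke the steepest descent framework along the lines of \cite{Miller}.
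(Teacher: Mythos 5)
Your proposal is correct and follows essentially the same route as the paper: non-stationary phase/integration by parts on the periodically extended integrands for part (a), stationary-phase analysis with the cubic Taylor expansion and the rescaling $y=2\pi\lambda_0 t^{1/3}k$ at the degenerate endpoint stationary points for part (b) (including the use of $\theta(0)=\pi/2$ to get the sign pattern in $S_{12}$, $S_{21}$), and a perturbation-off-nearest-neighbour argument via the parameter smoothness of Lemma~\ref{lemma:omega}~c) for part (c).

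One inaccuracy is worth correcting. In your treatment of $S_{33}$ you claim that the cross products $\mbox{Ai}\left(\frac{j-v_0t}{\lambda_0 t^{1/3}}\right)\mbox{Ai}\left(-\frac{j+v_0t}{\lambda_0 t^{1/3}}\right)$ are $\mathcal{O}(t^{-\infty})$ because the two peaks are spatially separated. This is false: the Airy function decays only like $|w|^{-1/4}$ (with oscillations) as $w\to-\infty$, so for $j\approx v_0 t$ the second factor is of size $t^{-1/6}$, and the cross term, including the prefactor $t^{-2/3}$ coming from squaring, is of size $t^{-5/6}$ — not rapidly decaying. Your conclusion survives only because $\mathcal{O}(t^{-5/6})$ happens to be exactly the error allowed in \eqref{S33A}; had the stated error been any smaller, this step would have produced a wrong bound. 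The paper does not elaborate on this point (it simply notes that \eqref{S33A} follows from \eqref{S33} and the expansions of $S_{\alpha\alpha'}$, $\alpha,\alpha'=1,2$), but a complete argument must account for the algebraic size of these cross terms rather than discard them as negligible.
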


\begin{proof}

The rapid decay claimed in statement a) can be argued in the same way as \eqref{eq:rapid_decay} for $S_{11}=S_{22}$. 
Due to relations \eqref{eq:CzeroExplicit} and \eqref{S33} one only needs to consider $S_{12}$ and $S_{21}$. Indeed, using Lemma~\ref{fandtheta} one may show that the imaginary parts of the integrands used in the representation of $S_{12}$ and $S_{21}$ in \eqref{S12exp} below have smooth extensions to all $k\in\R$ that are $1$-periodic. This is all that is needed because $|\frac{\partial}{\partial k} \phi_\pm (k,\,j/t)|>2\pi\delta$ by Lemma~\ref{lemma:omega} b) uniformly for $k\in[0,1/2]$ and  $|j|>(v_0 + \delta)t$.

We have already argued above that conditions \eqref{cond:1}, \eqref{cond:2} suffice to derive the first claim of statement b) with $v_0=\frac{f'(0)}{2 \pi} > 0$ and $\lambda_0=\frac{1}{2\pi}|f'''(0)/2|^{1/3}$. The expressions for $f'(0)$ and $f'''(0)$ stated in Lemma~\ref{lemma:omega} a) justify the definitions of \eqref{v0lambda0}. 

Using the symmetry relations \eqref{omegasym2} and \eqref{thetasym} we derive a representation for $S_{12}$ and $S_{21}$ that is suitable for a steepest descent analysis
\begin{equation}
\begin{split}
\label{S12exp}
S_{12}(j,t) =&\frac{1}{\beta}\int_{0}^{1/2}\Big(\sin(f(k)t - 2\pi kj-\theta(k)) + \sin(f(k)t + 2\pi kj+\theta(k))\Big) \di k \\
=& \frac{1}{\beta}\Im \int_{0}^{1/2} \Big(e^{it\phi_- (k,\,j/t)}e^{-i\theta(k)} +  e^{it\phi_+ (k,\,j/t)} e^{i\theta(k)} \Big)d k \\
S_{21}(j,k)=&- \frac{1}{\beta}\Im \int_{0}^{1/2} \Big(e^{it\phi_- (k,\,j/t)}e^{i\theta(k)} +  e^{it\phi_+ (k,\,j/t)} e^{-i\theta(k)} \Big)d k \\
\end{split}
\end{equation}
where $\phi_\pm (k,\xi) =f(k)\pm2\pi\xi k$ as in \eqref{def:fxiphi} above.
Expanding for $k$ close to zero one obtains $\phi_\pm (k,j/t)=2\pi v_0 k-\frac{1}{3}(2\pi)^3\lambda_0^3k^3\pm2\pi k\frac{j}{t} +\cO(k^5)$.
Substituting $y = 2 \pi \lambda_0 t^{1/3} k$ leads  to the asymptotic expression 
\[
t\phi_\pm (k,j/t)=\frac{v_0t\pm j}{\lambda_0t^{\frac{1}{3}}}y-\frac{1}{3}y^3 + \cO( t^{-\frac{2}{3}})\, \quad \mbox{as $t\to \infty$.}
\]
Keeping in mind that $\theta(0)=\frac{\pi}{2}$ we obtain 
\begin{eqnarray*}
S_{12}(j,t) &=& \frac{1}{2\lambda_0t^{1/3}\beta}\left(\mbox{Ai}\left(-\frac{j+v_0t}{\lambda_0 t^{1/3 } }\right)-\mbox{Ai}\left(\frac{j-v_0t}{\lambda_0 t^{1/3 } }\right)\right)+\cO(t^{-\frac{1}{2}})=S_{21}(j,t)\,.
\end{eqnarray*}

Regarding the expansion for $t\to\infty$  of $S_{33}(j,t)$ it follows immediately from the expression \eqref{S33} and the expansions of $S_{\alpha\alpha'}(j,t)$ with $\alpha,\alpha'=1,2$.

Statement c) follows from the continuous dependence of the derivatives $f''$ and $f'''$ on the parameters $(\kappa_2, \ldots, \kappa_m)$ (see Lemma~\ref{lemma:omega} c) and from simple facts for the case of nearest neighbour interactions $f_{1}(k)=2\sqrt{\kappa_1}\sin(\pi k)$ discussed above. Indeed,
from $f''(0)=0$ and from $f'''_{1}(0) < 0$ it follows that there exists such an $\varepsilon >0$ such that $f'''(k) < 0$ and hence also $f''(k) < 0$ for $k$ in some region $(0, \delta)$ uniformly in $(\kappa_2, \ldots, \kappa_m) \in [0, \varepsilon)^{m-1}$. As $f''_{1}(k) < -2\pi^2\sqrt{\kappa_1}\sin(\pi \delta)$ for all $k \in [\delta, 1/2]$ we may prove the claim in this region by reducing the value of $\varepsilon$ if necessary.
\end{proof}

\begin{figure}[ht]
	\centering
	\includegraphics[scale=0.18]{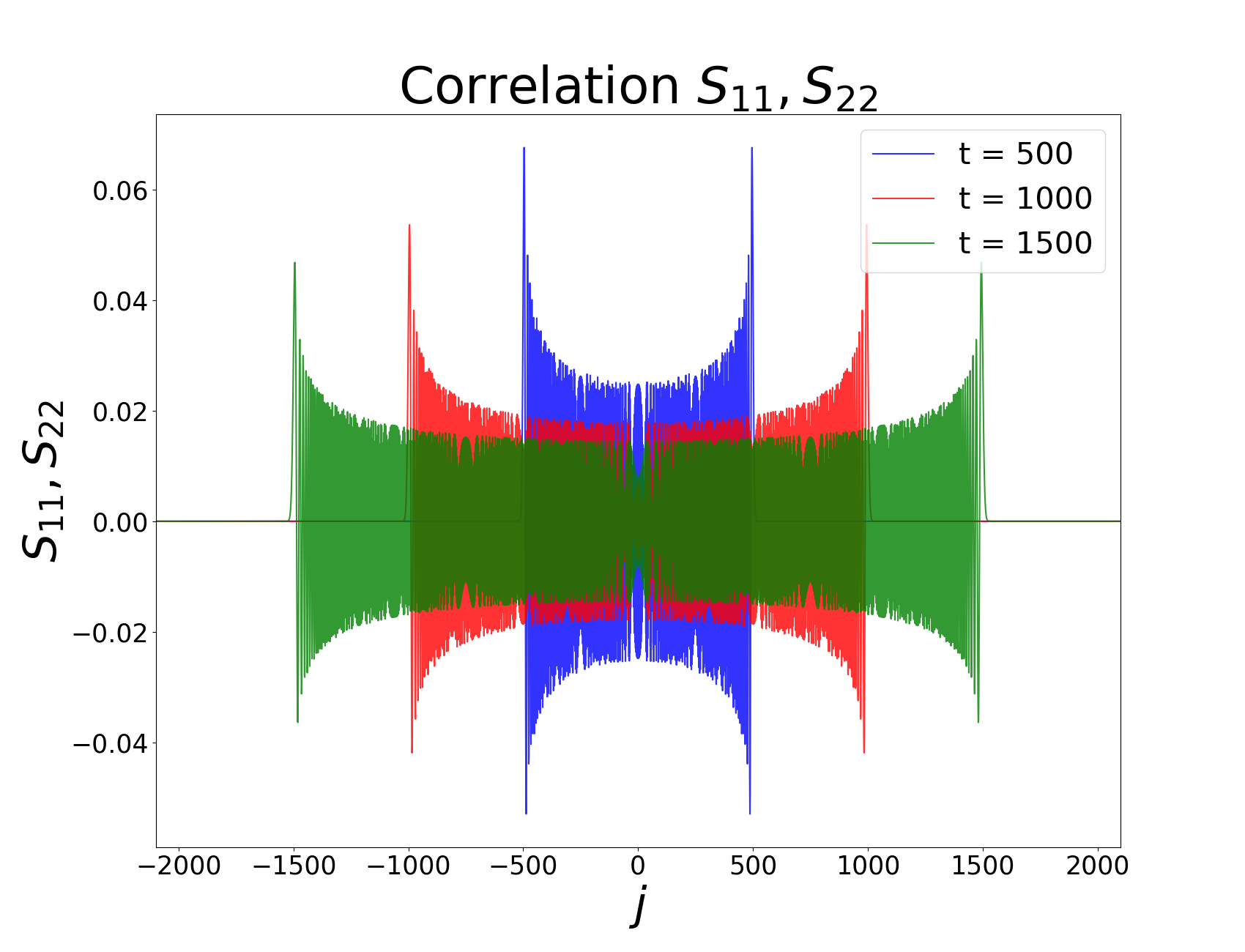}
	\includegraphics[scale=0.18]{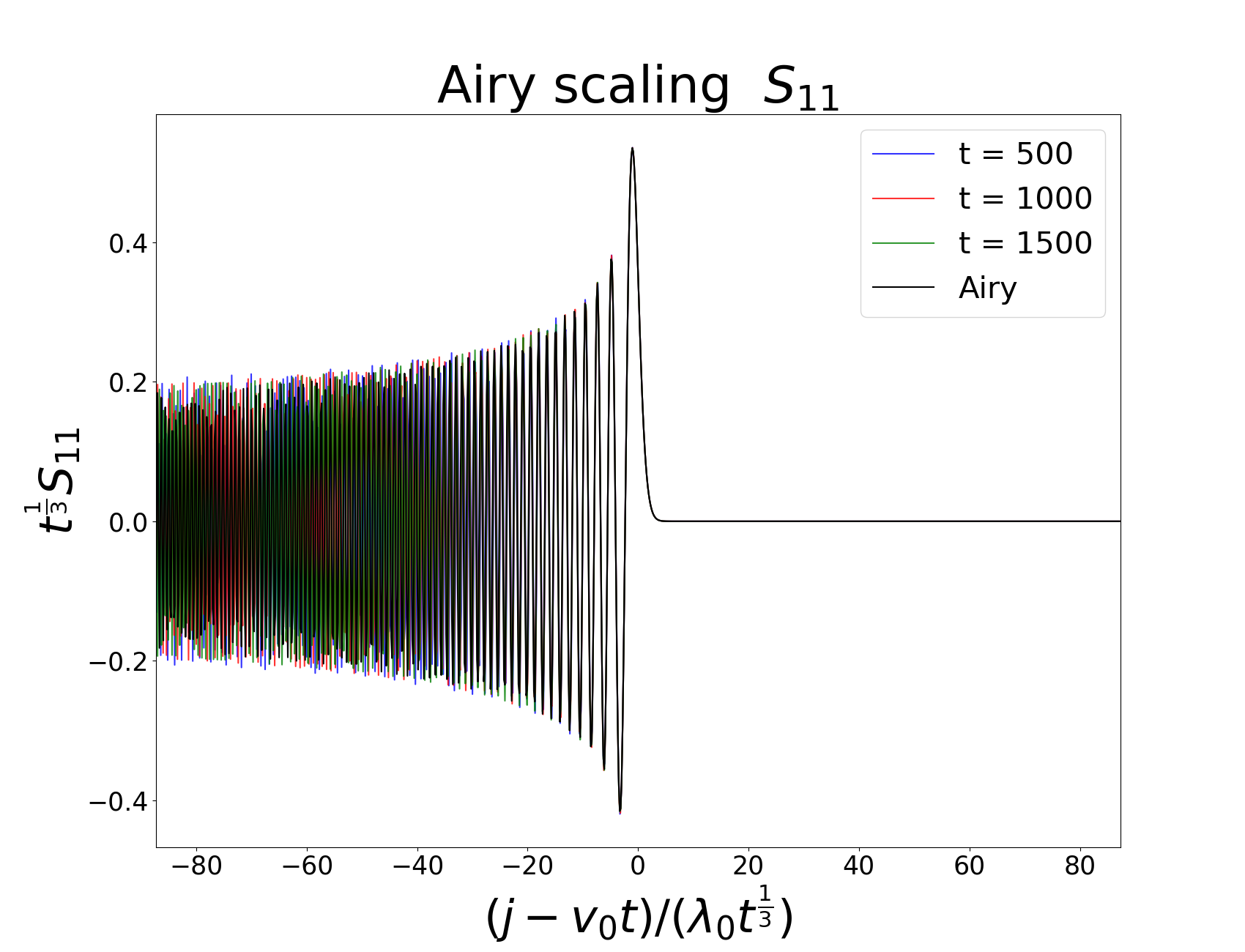}
	\includegraphics[scale=0.18]{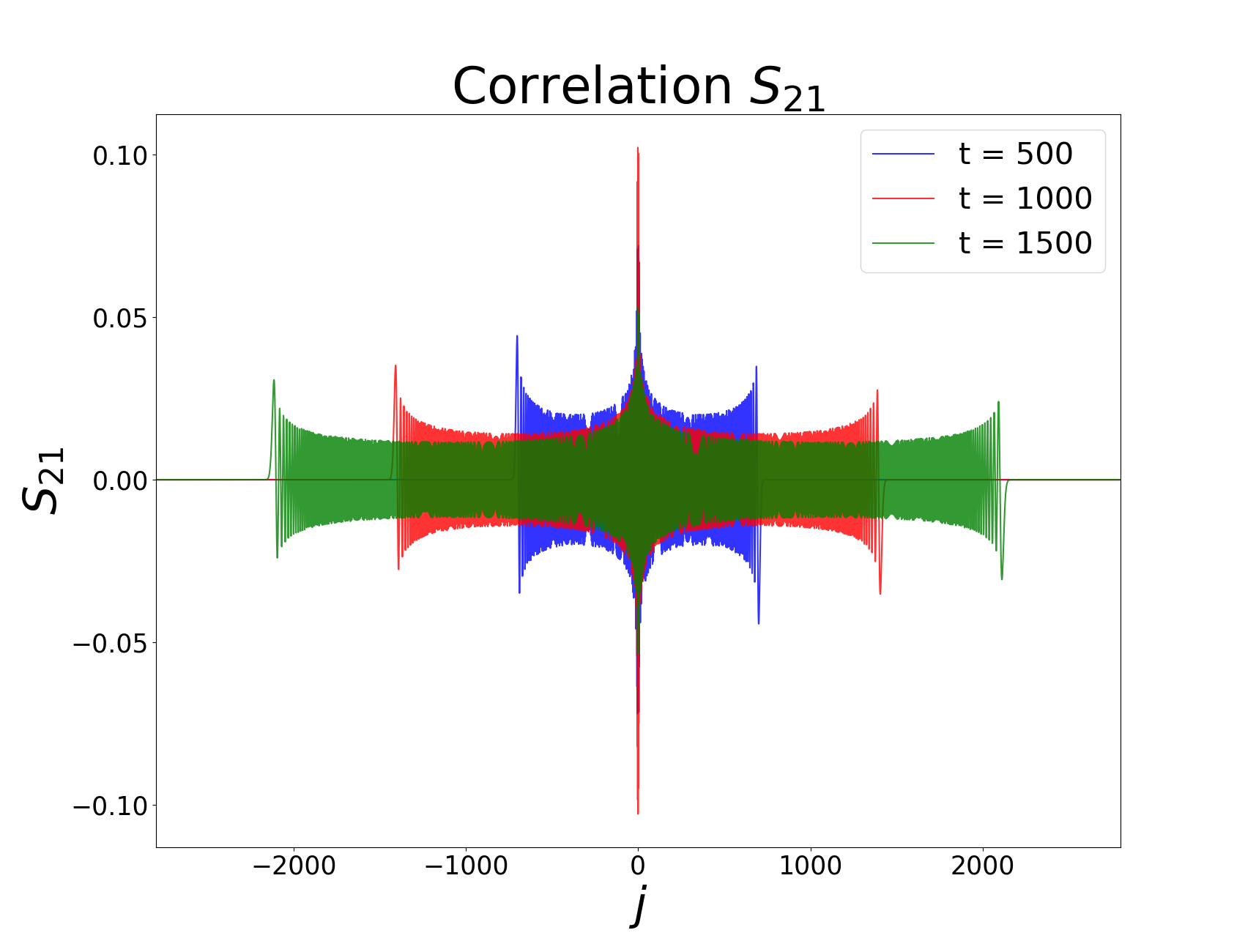}
	\includegraphics[scale=0.18]{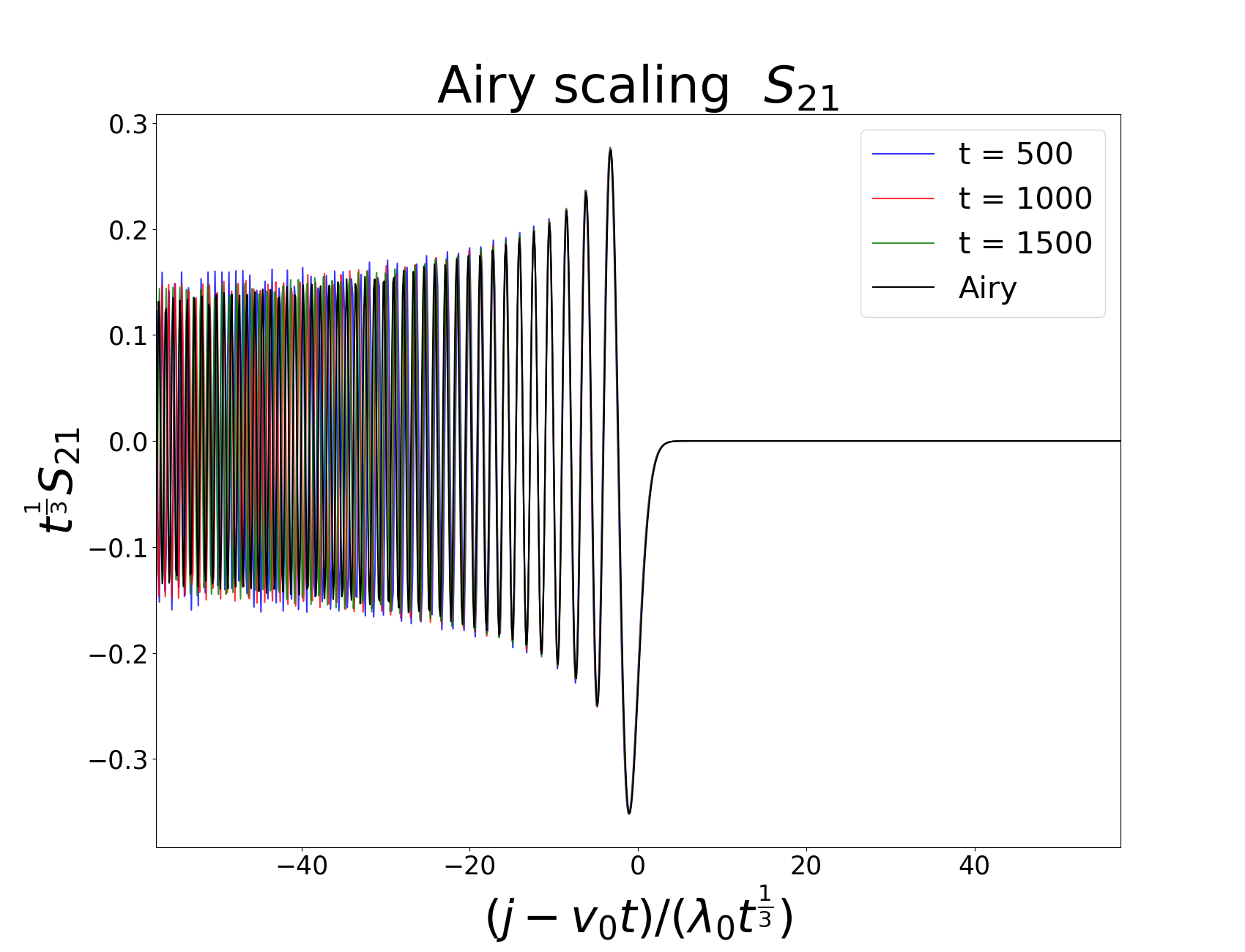}
		\includegraphics[scale=0.18]{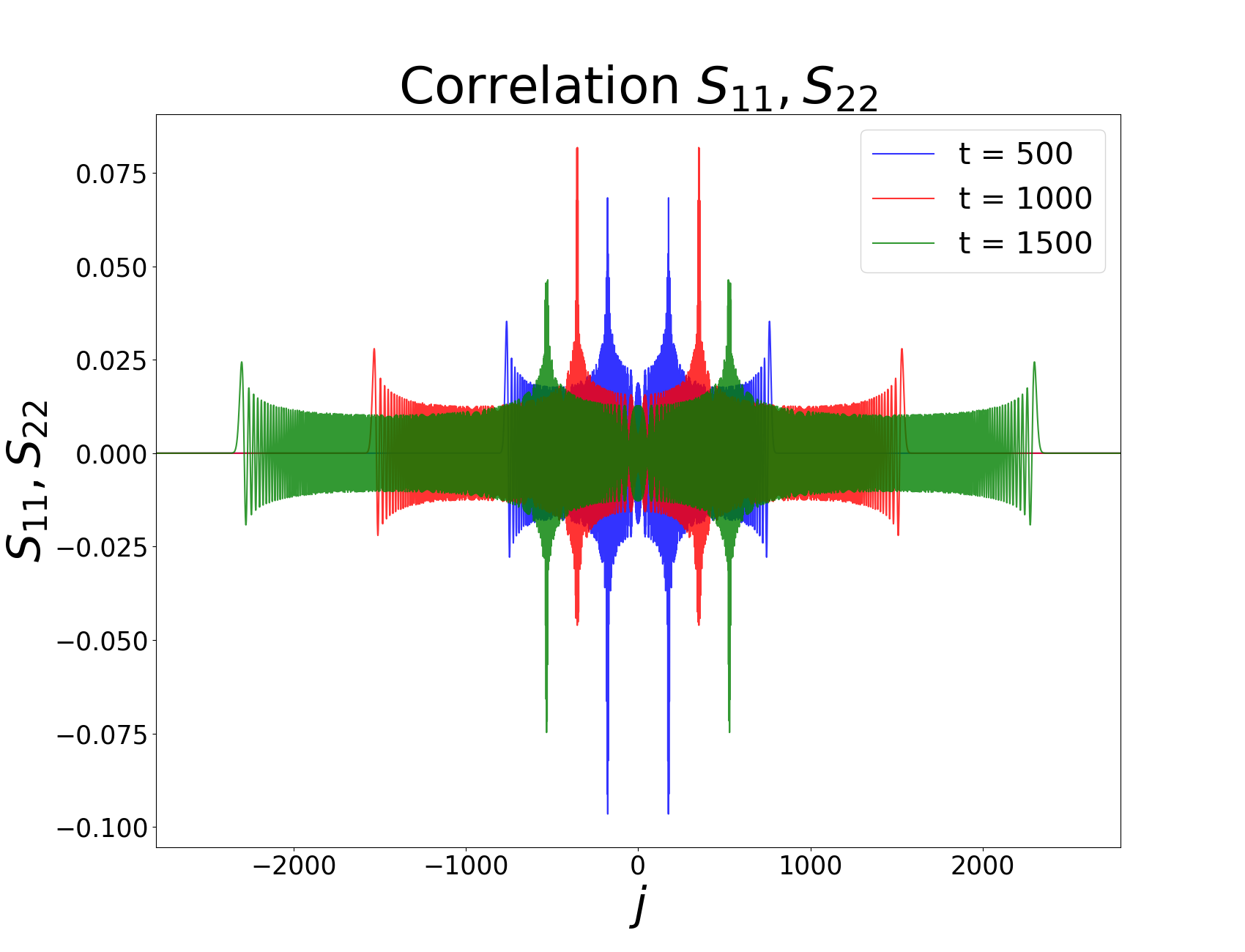}
		\includegraphics[scale=0.18]{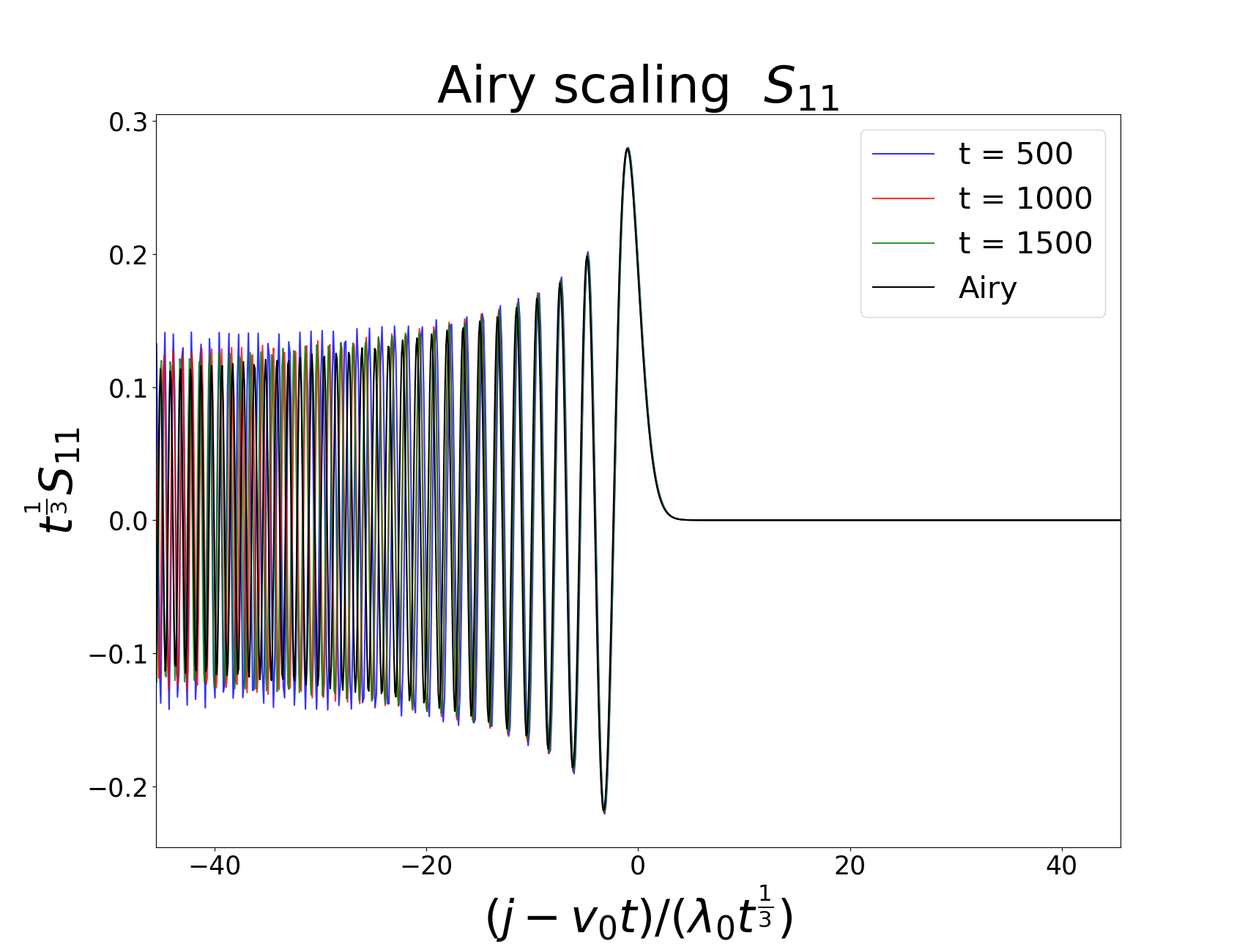}

	\caption{Correlation functions $S_{\alpha\alpha'}$ for the  harmonic oscillator with nearest neighbour interaction  with $\kappa_1=1$ (top left) and the harmonic potential with $\kappa_s=\frac{1}{s^2}$ for $s=1,2$ in Example~\ref{example1}  (center left) and the potential of Example~\ref{example2} in the bottom left.
	In the second column   the  Airy scaling \eqref{S11A}  of the corresponding fastest moving  peaks.
	 The Airy asymptotic is  perfectly matching the fastest peak and capturing several oscillations.}
\label{Figure1}
\end{figure}
 
\color{black}
Theorem~\ref{th:theorem_slow} provides the leading order asymptotics of the limiting correlations $S_{\alpha\alpha'}(j,t)$ for $t\to \infty$ in the simple situation that the second derivative of the dispersion relation is strictly negative on the open interval $(0,1)$ (cf.~condition~\eqref{cond:1}). 
Moreover, statement c) shows that there is a set of positive measure in parameter space $\bkappa \in \R_+^m$ where this happens. For general values of $\bkappa$, however,  different phenomena may appear. In particular, there might exist stationary phase points of higher order leading to slower time-decay of the correlations (see discussion before the statement of Lemma~\ref{lemma:omega}). 
By a naive count of variables and equations one might expect that decay rates $t^{-1/(3+p)}$ occur on submanifolds of parameter space of dimension $m-p$. Theorem~\ref{theoremP} shows that this is indeed the case for $p=1$. 
Moreover, we present in this situation a formula for the leading order contribution of the corresponding stationary phase points to the asymptotics of $S_{\alpha\alpha'}(j,t)$. 
Despite being non-generic in parameter space it is interesting to note that decay rates $t^{-1/4}$ can be observed numerically (see Figures~\ref{fig_ex1} and~\ref{fig_ex2}).
There is also a second issue that may arise if condition~\eqref{cond:1} fails. Namely, for $v \in (-v_0, v_0)$ there can be several values of $k \in (0,\frac{1}{2}]$ satisfying $f'(k)\pm 2\pi v =0$ so that the contributions from all these stationary points need to be added to describe the leading order behaviour for $j$ near $vt$.
\begin{theorem}\label{theoremP}
Recall from \eqref{eq:general_dispersion} the formula for the dispersion relation$$f(k)=|\omega(k)| =\sqrt{2\sum_{s=1}^m\kappa_s\left(1-\cos(2\pi k s)\right)}\,.$$

\noindent
a) For $m\geq 3$
  there is  an  $(m-1)$-parameter family of potentials for which there exist $k^\ast=k^\ast(\bkappa)\in(0,\frac{1}{2})$ with
\begin{equation}
\label{t4_xi}
f''(k^\ast)=0,\;\;f'''(k^\ast)=0,\;\;f^{(iv)}(k^\ast)\neq 0,\;\;\mbox{ and } \;\;0 < v^\ast:=\frac{f'(k^\ast)}{2\pi}<v_0,
\end{equation}
with $v_0$ as in \eqref{v0lambda0}.
Set $\lambda^{\ast}:=\dfrac{1}{2\pi}(|f^{(iv)}(k^\ast)|/4!)^{\frac{1}{4}}>0$. 
Then for $j\to\infty$ and $t\to\infty$
in such a way that  
\[
\frac{v^\ast t - j}{\lambda^{\ast}t^{\frac{1}{4}}}
\]
is bounded, the contribution of the stationary phase point $k^\ast$ to the correlation functions is given by:
\begin{align}
\label{Pearcey1}
S_{11}(j,t),S_{22}(j,t)\;:&\quad \dfrac{1}{2\beta\pi\lambda^{\ast}t^{\frac{1}{4}}}
\Re\left(e^{it\phi_-(k^\ast,j/t)}{\mathcal P}_\pm\left(\frac{v^\ast t-j}{\lambda^{\ast}t^{\frac{1}{4}}}\right)\right)+\cO(t^{-\frac{1}{2}})\,,\\
\label{Pearcey12}
S_{12}(j,t)  \;:&\quad \dfrac{1}{2\beta\pi\lambda^{\ast}t^{\frac{1}{4}}} \Im\left(e^{it\phi_-(k^\ast,j/t)-i\theta(k^\ast)}{\mathcal P}_\pm\left(\frac{v^\ast t-j}{\lambda^{\ast}t^{\frac{1}{4}}}\right)\right) + \cO(t^{-\frac{1}{2}})\,,\\
\label{Pearcey21}
S_{21}(j,t)  \;:&\quad -\dfrac{1}{2\beta\pi\lambda^{\ast}t^{\frac{1}{4}}} \Im\left(e^{it\phi_-(k^\ast,j/t)+i\theta(k^\ast)}{\mathcal P}_\pm\left(\frac{v^\ast t-j}{\lambda^{\ast}t^{\frac{1}{4}}}\right)\right) +\cO(t^{-\frac{1}{2}})\,,
\end{align}
where  $\phi_\pm (k,\xi) =f(k)\pm2\pi\xi k$, $\theta(k)=\arg \omega(k)$ as defined in Lemma~\ref{fandtheta}, ${\mathcal P}_\pm(a)$ denote the Pearcey integrals, cf.~Appendix~\ref{Appendix_D},
\begin{equation}
\label{Pearcey}
{\mathcal P}_\pm(a)=\int_{-\infty}^{\infty}e^{i(\pm y^4+ay)}dy,\;\;
\;\;a\in\mathbb{R},
\end{equation}
and $ {\mathcal P}_\pm$ has to be chosen according to the sign of $f^{(iv)}(k^\ast)$.
If $j\to -\infty$ with bounded $(v^\ast t+j)/(\lambda^{\ast} t^{1/4})$ the contributions of the stationary point $k^\ast$ can be obtained from the ones presented in \eqref{Pearcey1}-\eqref{Pearcey21} by replacing $\phi_-$ by $\phi_+$, $\theta$ by $-\theta$, and $j$ in the argument of ${\mathcal P}_\pm$ by $-j$.

\noindent
b)
When $k^\ast=\frac{1}{2}$ one has $f'(1/2)=0$ and $f'''(1/2)=0$ by the symmetry \eqref{omegasym2}. 
For each $m \geq 2$ there is an $(m-1)$-parameter family of potentials so that $f''(1/2)=0$ and $f^{(iv)}(1/2)\neq 0$ holds in addition. In this case the contribution of  the stationary phase point $k^\ast=1/2$ to the correlation functions in the asymptotic regime $t\to\infty$ with bounded $j/t^{\frac{1}{4}}$ is given by ($\lambda^{\ast}$ defined as in statement a) with $k^\ast=\frac{1}{2}$)
 \begin{equation}
 \begin{split}
 \label{Pearcey0}
S_{11}(j,t), S_{22}(j,t) \;:&\quad \frac{(-1)^j}{2 \beta\pi\lambda^{\ast}t^{\frac{1}{4}}}\Re\left(e^{i t f(\frac{1}{2})}{\mathcal P}_\pm\left(\frac{j}{\lambda^{\ast}t^{\frac{1}{4}}}\right)\right)+\cO(t^{-\frac{1}{2}})\\
S_{12}(j,t), S_{21}(j,t)  \;:&\quad -\mbox{sgn}\,(\!\sum\limits_{s \,odd}\tau_s)\frac{(-1)^j}{2 \beta\pi\lambda^{\ast}t^{\frac{1}{4}}}\Im\left(e^{i t f(\frac{1}{2})}{\mathcal P}_\pm\left(\frac{j}{\lambda^{\ast}t^{\frac{1}{4}}}\right)\right)+\cO(t^{-\frac{1}{2}})\\
S_{33}(j,t)  \;:&\quad \frac{1}{4 \beta^2\pi^2(\lambda^{\ast})^2t^{\frac{1}{2}}}\left| {\mathcal P}_\pm\left(\frac{j}{\lambda^{\ast}t^{\frac{1}{4}}}\right) \right|^2+\cO(t^{-\frac{3}{4}})\,.
\end{split}
\end{equation}
				\end{theorem}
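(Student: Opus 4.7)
My plan is to combine a dimension-counting/parameter-construction argument for the existence of the degenerate configurations with a classical steepest descent analysis at a quartic stationary phase point. The integral representations \eqref{eq:heuristic.1} and \eqref{S12exp} for $S_{\alpha\alpha'}$, together with \eqref{S33}, reduce everything to understanding
$\int e^{it\phi_\pm(k,j/t)} g(k)\, dk$
near the stationary phase point $k^\ast$, where $g$ is smooth (either constant or $e^{\pm i\theta(k)}$).

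For the existence statement in part (a), I would work with $F(k):=f^2(k)=2\sum_s\kappa_s(1-\cos(2\pi sk))$, which is linear in the parameters $\kappa_1,\ldots,\kappa_m$. Using $2ff'=F'$, $2(f')^2+2ff''=F''$, and $6f'f''+2ff'''=F'''$, the conditions $f''(k^\ast)=f'''(k^\ast)=0$ are equivalent to two equations that, for any fixed $k^\ast\in(0,1/2)$, depend linearly on $(\kappa_1,\ldots,\kappa_m)$. For $m\geq 3$ the linear system has at least an $(m-2)$-dimensional solution set, and letting $k^\ast$ vary yields the claimed $(m-1)$-parameter family. I would verify via Lemma~\ref{lemma:omega}~b) that $f'(k^\ast)\in(0,2\pi v_0)$ and exhibit at least one explicit solution (e.g.\ by relating it to Example~\ref{example1}) to ensure the family is non-empty and $f^{(iv)}(k^\ast)\neq 0$ can be imposed by an open condition. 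For part (b), the symmetry \eqref{omegasym2} forces $f'(1/2)=f'''(1/2)=0$ automatically (from the smooth odd extension of $f'$ at $1/2$, which one gets from Lemma~\ref{fandtheta} together with $f(1-k)=f(k)$), so only the single equation $f''(1/2)=0$ remains, yielding an $(m-1)$-parameter family for $m\geq 2$.

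For the asymptotic analysis, I would localize the integrals in \eqref{eq:heuristic.1} and \eqref{S12exp} to a neighbourhood of $k^\ast$ by a smooth cutoff; the complementary part contributes $\mathcal O(t^{-1/2})$ by the standard stationary phase lemma, since away from $k^\ast$ any remaining stationary point is at most quadratic. Near $k^\ast$ I would Taylor expand
\begin{equation*}
\phi_-(k,j/t)=\phi_-(k^\ast,j/t)+(2\pi v^\ast-2\pi j/t)(k-k^\ast)+\tfrac{1}{24}f^{(iv)}(k^\ast)(k-k^\ast)^4+\mathcal O((k-k^\ast)^5),
\end{equation*}
substitute $y=2\pi\lambda^\ast t^{1/4}(k-k^\ast)$ (with the sign chosen according to $\mathrm{sgn}\,f^{(iv)}(k^\ast)$), and obtain
\begin{equation*}
t\,\phi_-(k,j/t)=t\,\phi_-(k^\ast,j/t)+\frac{v^\ast t-j}{\lambda^\ast t^{1/4}}\,y\pm y^4+\mathcal O(t^{-1/4}),
\end{equation*}
uniformly for $y$ in any fixed compact set. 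The smooth factor $e^{\pm i\theta(k)}$ (in the $S_{12},S_{21}$ case) evaluates to $e^{\pm i\theta(k^\ast)}$ at leading order. Integrating and recognising the Pearcey integral \eqref{Pearcey} gives the prefactor $(2\pi\lambda^\ast t^{1/4})^{-1}$ and the formulas \eqref{Pearcey1}--\eqref{Pearcey21}. The contribution at $\xi\sim -v^\ast$ comes from the symmetric stationary point of $\phi_+$, handled identically via the symmetry \eqref{omegasym2}, \eqref{thetasym}. For $S_{33}$ I would use \eqref{S33} and square the leading Pearcey contributions.

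For part (b), at $k^\ast=1/2$ the integrals \eqref{eq:heuristic.1}, \eqref{S12exp} must be taken over a full neighbourhood of $1/2$ (not just $[0,1/2]$), and because $f'(1/2)=0$ both $\phi_\pm$ share the same stationary point, producing a factor $e^{\pm 2\pi i(1/2)j}=(-1)^j$ upon combining them. The sign factor $\mathrm{sgn}(\sum_{s\text{ odd}}\tau_s)$ arises from evaluating $\theta(1/2)$: from \eqref{eq:general_dispersion} one computes $\omega(1/2)=-2\sum_{s\text{ odd}}\tau_s$, which is real, so $\theta(1/2)\in\{0,\pi\}$ and the corresponding sign enters the sine terms in \eqref{eq:expCpr}--\eqref{eq:expCrp}. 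The main obstacle I anticipate is the existence claim in part (a): one must argue that, as $k^\ast$ ranges over $(0,1/2)$, the corresponding linear subspaces of $\kappa$-solutions fit together into a genuine manifold of dimension $m-1$ lying in $\mathbb{R}_+^m$ (with $\kappa_1,\kappa_m>0$) on which additionally $f^{(iv)}(k^\ast)\neq 0$ holds. This requires a transversality or implicit function theorem argument, plus a non-vanishing check for $f^{(iv)}$ (likely via a Vandermonde-type computation using $\kappa_m>0$).
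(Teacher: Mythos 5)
Your stationary-phase analysis (localization near $k^\ast$, the quartic Taylor expansion, the substitution $y=2\pi\lambda^\ast t^{1/4}(k-k^\ast)$, the evaluation of $e^{\pm i\theta}$ at $k^\ast$, the endpoint/$(-1)^j$ mechanism at $k^\ast=\frac12$, and squaring for $S_{33}$) is essentially identical to the paper's proof and is fine. The genuine gap is in your existence argument for part (a). You claim that, for fixed $k^\ast\in(0,\frac12)$, the conditions $f''(k^\ast)=f'''(k^\ast)=0$ become two equations that are \emph{linear} in $(\kappa_1,\ldots,\kappa_m)$. That is false. With $F:=f^2$ (which is indeed linear in $\bkappa$), your own identities give $F''=2(f')^2+2ff''$ and $F'=2ff'$, so $f''(k^\ast)=0$ is equivalent to $2F(k^\ast)F''(k^\ast)-(F'(k^\ast))^2=0$, which is genuinely quadratic in $\bkappa$ because $f'(k^\ast)\neq 0$ at the interior stationary points of interest. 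This is precisely what distinguishes part (a) from part (b): at $k^\ast=\frac12$ one has $F'(\frac12)=0$ automatically, so there the remaining condition \eqref{t4_1} is linear. Only the combination $F'''(k^\ast)=0$ (equivalent to $f'''(k^\ast)=0$ once $f''(k^\ast)=0$) is linear; compare the paper's \eqref{eq_odd} (linear) with \eqref{eq_all} (quadratic). Consequently ``the linear system has at least an $(m-2)$-dimensional solution set'' is unjustified: the solution set for fixed $k^\ast$ is a hyperplane intersected with a quadric, and one must actually exhibit real points of this intersection lying in the admissible cone ($\kappa_s\geq 0$, $\kappa_1,\kappa_m>0$) with $f^{(iv)}(k^\ast)\neq 0$ and $v^\ast>0$ (note also that Lemma~\ref{lemma:omega}~b) only yields $|f'(k^\ast)|<2\pi v_0$, not the required positivity of $f'(k^\ast)$).

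This is exactly the part where the paper has to work: it fixes $k^\ast=\frac14$, solves the linear equation \eqref{eq_odd} for $\kappa_{m-1}$, observes that the quadratic equation \eqref{eq_all} is linear in $\kappa_4$ and solves for $\kappa_4$, verifies positivity of these solutions under explicit constraints on the remaining $\kappa_s$, arranges $f'(\frac14)>0$ by taking $\kappa_1$ large and $f^{(iv)}(\frac14)\neq 0$ by adjusting $\kappa_2$, and only then applies the Implicit Function Theorem to $(f'',f''')$ in the variables $(k,\kappa_4)$ to spread this single point into the $(m-1)$-dimensional manifold (the case $m=3$ being Example~\ref{example2}). Your plan defers exactly this step to ``exhibit at least one explicit solution (e.g.\ by relating it to Example~\ref{example1})'', but Example~\ref{example1} concerns $k^\ast=\frac12$, i.e.\ part (b), and provides no solution of \eqref{t4_xi} with $k^\ast\in(0,\frac12)$; for $m\geq 4$ your proposal produces no point of the variety at all. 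So the existence claim in part (a) --- the one substantive assertion of the theorem beyond standard stationary-phase technology --- remains unproven in your outline, and the transversality argument you correctly anticipate has nothing to be anchored to.
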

				\begin{proof} We begin by proving formula \eqref{Pearcey1}  for the momentum or position correlations $S_{22}(j,t)=S_{11}(j,t)$ 
				under the assumption that we have found a $k^{\ast} \in(0,1/2)$ for which all the relations of \eqref{t4_xi} are satisfied.
				From \eqref{eq:heuristic.1}  and Lemma~\ref{fandtheta}  we obtain
	\begin{align}
	\label{cor1}
	&S_{11}(j,t)=S_{22}(j,t)  =\frac{1}{\beta}\Re \int_{0}^{\frac{1}{2}}\left( e^{it(f(k)+2\pi k\frac{j}{t})} +e^{it(f(k)-2\pi k\frac{j}{t})}\right) d k.
			\end{align}
In order to compute the contribution of the stationary phase point $k^{\ast}$ to the large $t$ asymptotics of the integral in \eqref{cor1} we expand
\[
f(k)=f(k^\ast)+2\pi v^\ast(k-k^\ast)+f^{(iv)}(k^\ast)(k-k^\ast)^4/4!+O((k-k^\ast)^5)\,.
\]
Introducing the change of variables 
$$y=2\pi \lambda^{\ast}(k-k^\ast)t^{\frac{1}{4}},\quad \lambda^{\ast}=\dfrac{1}{2\pi}(|f^{(iv)}(k^\ast)|/4!)^{\frac{1}{4}}$$
one obtains 
$$
tf(k)-2\pi j k=tf(k^\ast)-2\pi jk^\ast+y\frac{v^\ast t-j}{\lambda^{\ast}t^{\frac{1}{4}}}\pm y^4+\cO(t^{-\frac{1}{4}})
$$
where the $\pm$ sign is determined by the sign of $f^{(iv)}(k^\ast)$.
Then using the Pearcey integral \eqref{Pearcey}, the expansion \eqref{Pearcey1} can be derived in  a straightforward way from \eqref{cor1}.
In a similar way the expansions \eqref{Pearcey12} and \eqref{Pearcey21} 
are  obtained by applying the above analysis to the expression \eqref{S12exp}.\\

In the situation $k^{\ast}=1/2$ of statement b) one uses in addition that $t\phi_{\pm}(1/2, j/t)=tf(1/2) \pm j\pi$, $\omega(1/2)=-\sum_{s=1}^m\tau_s (1-\cos(\pi s))=-2\sum_{s \,odd}\tau_s$, see \eqref{eq:general_dispersion}, and consequently $e^{\pm i\theta(1/2)} = -$ sgn$(\sum_{s \,odd}\tau_s)$. The leading order contribution of the stationary phase point $k^{\ast}=1/2$ to the integral representation of, say, $S_{12}$ in \eqref{S12exp} is then given by
\[
-\mbox{sgn}\,\left(\!\sum\limits_{s \,odd}\tau_s\right)\frac{(-1)^j}{2 \beta\pi\lambda^{\ast}t^{\frac{1}{4}}}\Im\left(e^{i t f(\frac{1}{2})}\left(\int_{-\infty}^{0}e^{i(\pm y^4-wy)}dy + \int_{-\infty}^{0}e^{i(\pm y^4+wy)}dy\right)\right)
\]
with $w=\frac{j}{\lambda^{\ast}t^{\frac{1}{4}}}$. In this way and with the help of \eqref{S33} all relations of \eqref{Pearcey0} can be deduced.

We now show the existence of a codimension 1 manifold in parameter space that exhibits such higher order stationary phase points in the situation of b)
where $k^{\ast}=1/2$. As we have $f'''(1/2)=0$ by symmetry  \eqref{omegasym2} we only need to solve
\begin{equation}
\label{t4_1}
f''\left(\frac{1}{2}\right)=0\quad\mbox{which is equivalent to} \quad \sum_{s=1}^ms^2(-1)^{s+1}\kappa_s=0\,.
\end{equation}

The solution of the above equation is
\begin{equation}
\label{kappa_m}
\kappa_m=\frac{(-1)^m}{m^2}\sum\limits_{s=1}^{m-1}s^2(-1)^{s+1}\kappa_s.
\end{equation}
It is clear from the above relation that  for $m$ even, choosing $\kappa_1$ sufficiently big one has $\kappa_m>0$ while for $m$ odd, it is sufficient to choose $\kappa_{s+1}>\frac{s^2}{(s+1)^2}\kappa_s > 0$,   $s$ odd and  $1\leq s \leq m-2$.
Note that  in the situation of \eqref{kappa_m} $f^{(iv)}(\frac{1}{2})\neq 0$ holds iff $\sum_{s=1}^m\kappa_s s^4(-1)^{s+1}\neq 0$. This condition simply removes an $(m-2)$-dimensional plane from our manifold \eqref{kappa_m} which defines a hyperplane in the positive cone of the $m$-dimensional parameter space.
Therefore we have found an $(m-1)$-parameter  family of potentials such that the correlation functions decay as in \eqref{Pearcey0}.\\

Finally, we show for $m \geq 4$ our claim about the solution set of \eqref{t4_xi}. The case $m=3$ is treated in Example~\ref{example2}.
Our strategy is to first show that there exists a $\bkappa^{\ast}$ that satisfies $f''(1/4, \bkappa^{\ast}) = 0$, $f'''(1/4, \bkappa^{\ast}) = 0$, $f'(1/4, \bkappa^{\ast}) > 0$, and $f^{(iv)}(1/4, \bkappa^{\ast}) \neq 0$. We then invoke the Implicit Function Theorem to show the existence of the $(m-1)$-dimensional solution manifold, where the stationary phase point $k^{\ast}\sim 1/4$ may and will depend on the parameters.
The conditions $f''(\frac{1}{4}, \bkappa)=0$ and $f'''(\frac{1}{4}, \bkappa)=0$
imply
\begin{align}
\label{eq_odd}
f'''\left(\frac{1}{4}\right)=0\rightarrow&\sum\limits_{s \,odd}(-1)^{\frac{s-1}{2}}s^3\kappa_s=0,\\
\label{eq_all}
f''\left(\frac{1}{4}\right)=0\rightarrow& \left( 2\sum\limits_{s \,odd}\kappa_s + 2\sum_{s \,even}\kappa_s(1-(-1)^{\frac{s}{2}})\right)\sum_{s \,even}s^2\kappa_s(-1)^{\frac{s}{2}}-\left(\sum_{s \,odd}s\kappa_s(-1)^{\frac{s-1}{2}}\right)^2=0\,.
\end{align}
One needs to treat the case $m$ odd and even separately. Here we consider only the case $m$ even. The odd case can be treated in a similar way.
Equation  \eqref{eq_odd} gives
\[
\kappa_{m-1}=\dfrac{(-1)^{\frac{m}{2}}}{(m-1)^3}\sum\limits_{s \,odd,s=1}^{m-3}(-1)^{\frac{s-1}{2}}s^3\kappa_s.
\]
If $m=2\ell$ with $\ell$ even, a positive solution $\kappa_{m-1}$ exists, provided that $\kappa_1$ is sufficiently big. If $m=2\ell$ with $\ell$ odd then 
one needs to require $0<\kappa_{s}<\frac{ (s+2)^3}{s^3}\kappa_{s+2}$ for $s=1,5,9,\dots,m-5$.\\
The equation \eqref{eq_all}  is a linear equation in $\kappa_4$ and we solve it for $\kappa_4$ obtaining  
\[
\kappa_4=\frac{1}{32}\dfrac{\left(\sum\limits_{s \,odd,s=1}^{m-3}\kappa_s(-1)^{\frac{s-1}{2}}s(1-\frac{s^2}{(m-1)^2})\right)^2}{\sum\limits_{s \,odd,s=1}^{m-3}\kappa_s(1+\frac{s^3(-1)^{\frac{m+s-1}{2}}}{(m-1)^3}) + \sum\limits_{s \,even,s=2}^{m}\kappa_s(1-(-1)^{\frac{s}{2}})}+\frac{1}{16}\sum_{s \,even, s\neq 4, s=2}^{m}s^2\kappa_s(-1)^{\frac{s-2}{2}}\,.
\]
We observe that the first term in the above expression is always positive, while the second term is positive if we require  that $\kappa_{s}>\frac{ (s+2)^2}{s^2}\kappa_{s+2}>0$ for $s=6,10, 14,\dots,m-2$. The remaining two conditions $f'(1/4) > 0$ and $f^{(iv)}(1/4)\neq 0$ are easy to satisfy: The sign of $f'(1/4)$ agrees with the sign of $\sum_{s \,odd}s\kappa_s(-1)^{\frac{s-1}{2}}$ and can be made positive by choosing $\kappa_1$ sufficiently large. In the situation where
\eqref{eq_odd} and \eqref{eq_all} hold the fourth derivative $f^{(iv)}(1/4)$ does not vanish iff $\sum_{s \,even}s^4\kappa_s(-1)^{\frac{s}{2}} \neq 0$. This can be achieved by adjusting, for example, the value of $\kappa_2$. We have now shown that there exists $\bkappa^{\ast} \in \R_+^m$ such that the first four derivatives of $f$ have all desired properties at $k=1/4$. In order to obtain the $(m-1)$-dimensional solution manifold in parameter space, we apply the Implicit Function Theorem to $F(k, \bkappa) := (f''(k, \bkappa),  f'''(k, \bkappa))$. By a straight forward computation on sees that 
\[
\det \left[ \frac{\partial F}{\partial (k, \kappa_4)} (1/4, \bkappa^{\ast})
\right] = -  f^{(iv)}(1/4, \bkappa^{\ast}) \, \frac{\partial f''}{\partial \kappa_4}(1/4, \bkappa^{\ast}) \, \neq \, 0\,.
\]
We can therefore solve $F(k, \bkappa)=0$ near $(1/4, \bkappa^{\ast})$ by choosing $(k, \kappa_4)$ as functions of the remaining parameters $\kappa_j$ with $j\neq 4$.

\end{proof}
\begin{figure}[ht]
	\centering
	\includegraphics[scale=0.18]{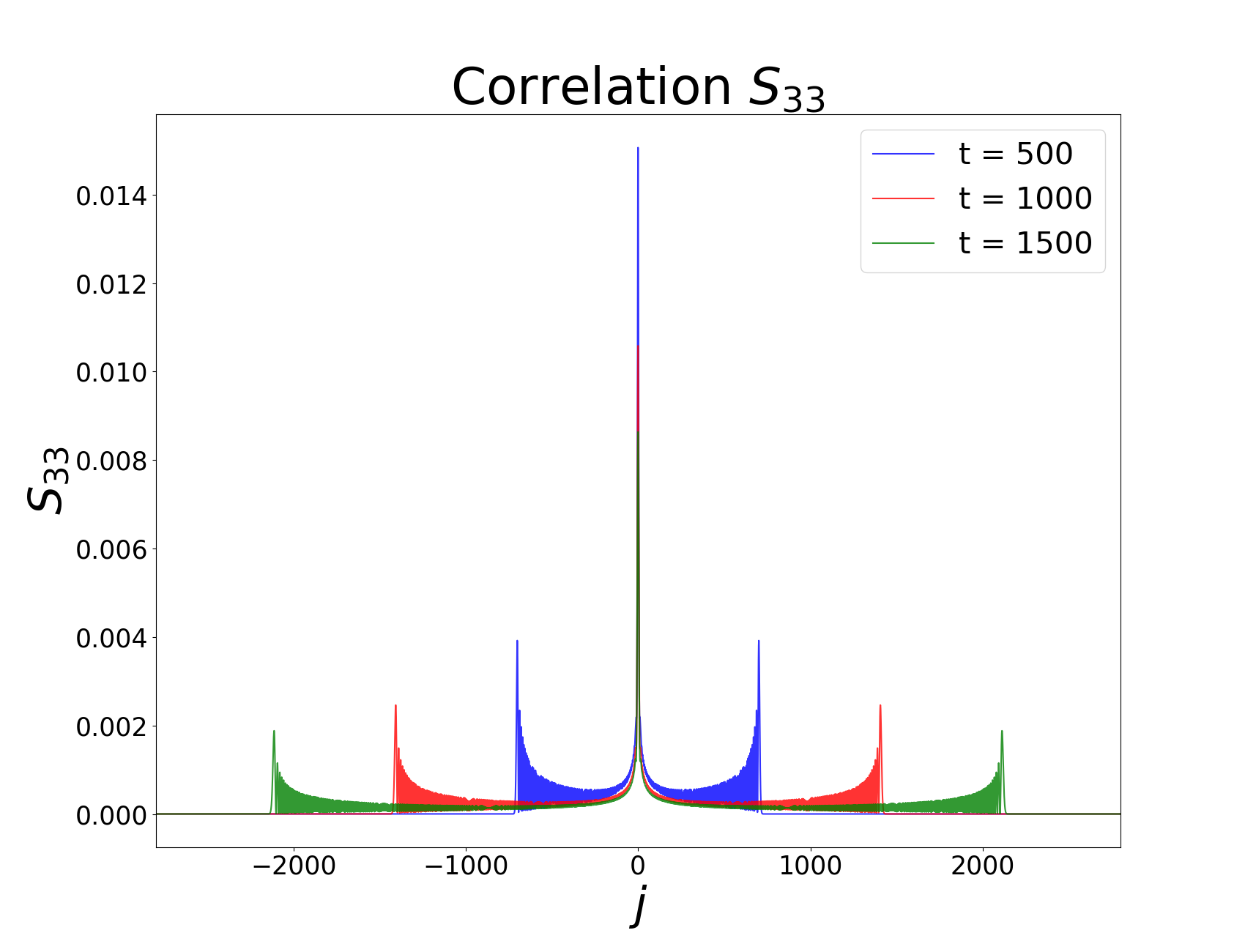}
\includegraphics[scale=0.18]{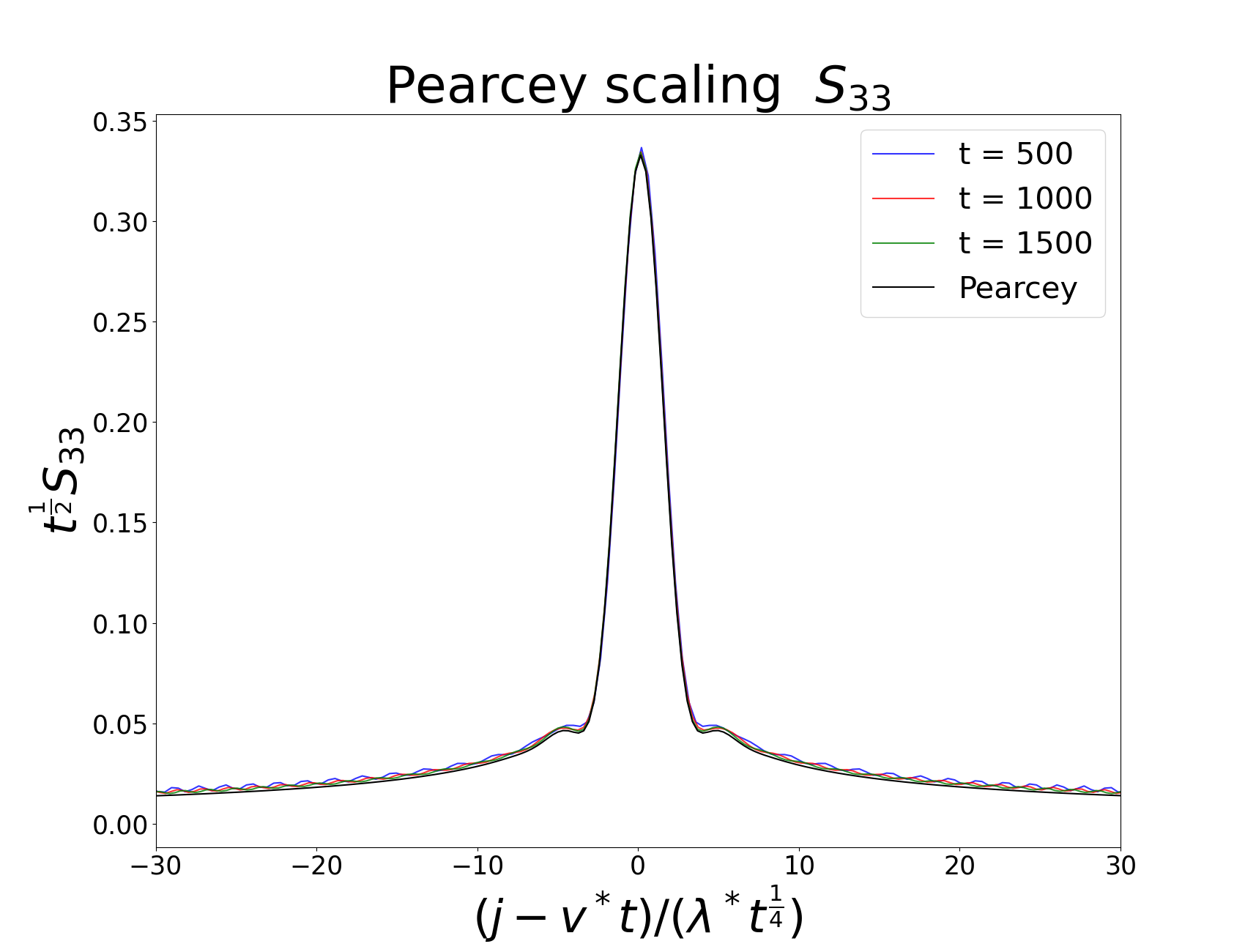}
%
	\caption{Correlation function $S_{33}(j,t)$ for the potential $\kappa_s=1/s^2$ for $m=2$ in Example~\ref{example1} for several values of time on the left.
	On the right one sees that the Pearcey scaling provided in \eqref{Pearcey0} matches perfectly 
	for the central peak of $S_{33}(j,t)$. 
	}
\label{fig_ex1}
\end{figure}


\begin{example}
\label{example1}
{\bf $m$ even}. Choosing $\kappa_s=\frac{1}{s^2}$  for $s=1,\dots, m$ one has that  conditions \eqref{t4_1} are satisfied and 
 $f^{(iv)}\left(\frac{1}{2}\right)<0$.

For $\kappa_s=\frac{1}{s^{\alpha}}$, $s=1,\dots, m-1$,   $2<\alpha<3$, and $\kappa_m$  given by \eqref{kappa_m},   there is $\alpha=\alpha(m)$ such that  $\kappa_{m}<\kappa_{m-1}$.\\
{\bf $m$ odd}.  Choosing $\kappa_s=\frac{1}{s}$,  for $s=1,\dots m-1$, one has  from \eqref{kappa_m} $\kappa_m=\frac{ m-1}{2m^2}<\kappa_{m-1}$  and $f^{(iv)}(\frac{1}{2})>0$.\\
In all these examples the correlation functions  $S_{\alpha\alpha'}(j,t)$, $\alpha,\alpha'=1,2$ decrease as $t^{-\frac{1}{4}}$ near $j=0$. 
\end{example}
\begin{example}\label{example2}
We consider the case $m=3$ and we want to get a potential that  satisfies \eqref{t4_xi} with $v^\ast>0$. We chose as a critical point  of $f(k)$ the point  $k^\ast=\frac{1}{3}$ thus obtaining the equations
\[
\kappa_2=\frac{1}{8}\kappa_1,\quad \kappa_3=\frac{7}{72}\kappa_1\,.
\]
The speed of the peak is $v^\ast=\dfrac{\sqrt{2\kappa_1}}{4}$ and $f^{(iv)}(\frac{1}{3})=-\frac{68\sqrt{6}}{6}\pi^4\sqrt{\kappa_1}$.
 
 The correlation functions  $S_{\alpha\alpha'}(j,t)$, $\alpha,\alpha'=1,2$ decrease as $t^{-\frac{1}{4}}$  and $S_{33}(j,t)$ decreases like $t^{-\frac{1}{2}}$ as $t\to \infty$ and $j\sim v^\ast t$, see Figure~\ref{fig_ex2}. Note that one may obtain a $2$-parameter family of solutions of \eqref{t4_xi} by picking, for example, the particular solution related to $\kappa_1 = 1$ and by showing that the system of equations $(f'',f''')(k,\bkappa)=0$ can be solved near (1/3, 1, 1/8, 7/72) by choosing $k$ and $\kappa_3$ as functions of $\kappa_1$ and $\kappa_2$
 using the Implicit Function Theorem in the same way as at the end of the proof of Theorem~\ref{theoremP}.
 \begin{figure}[ht]
	\centering
	\includegraphics[scale=0.18]{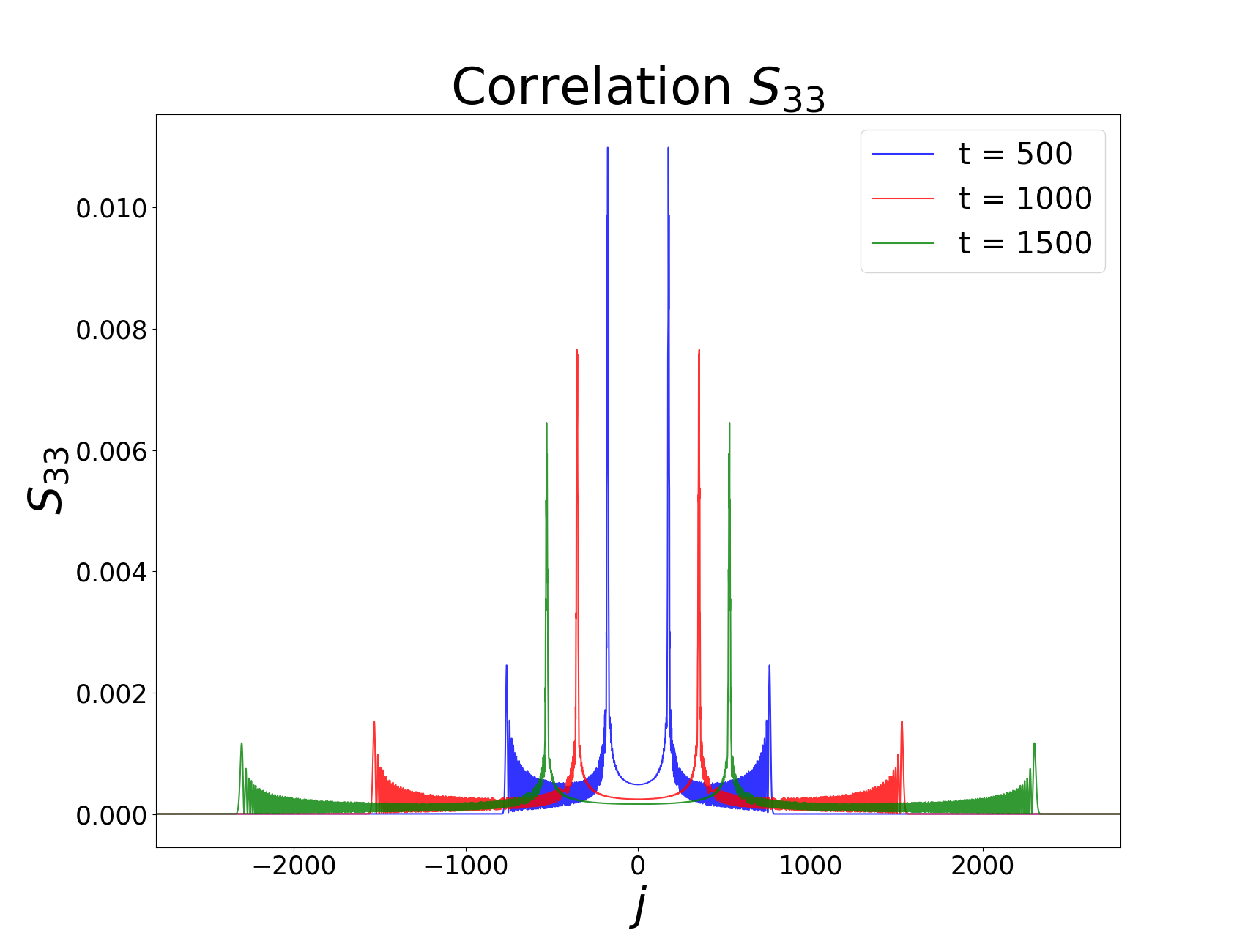}
	\includegraphics[scale=0.18]{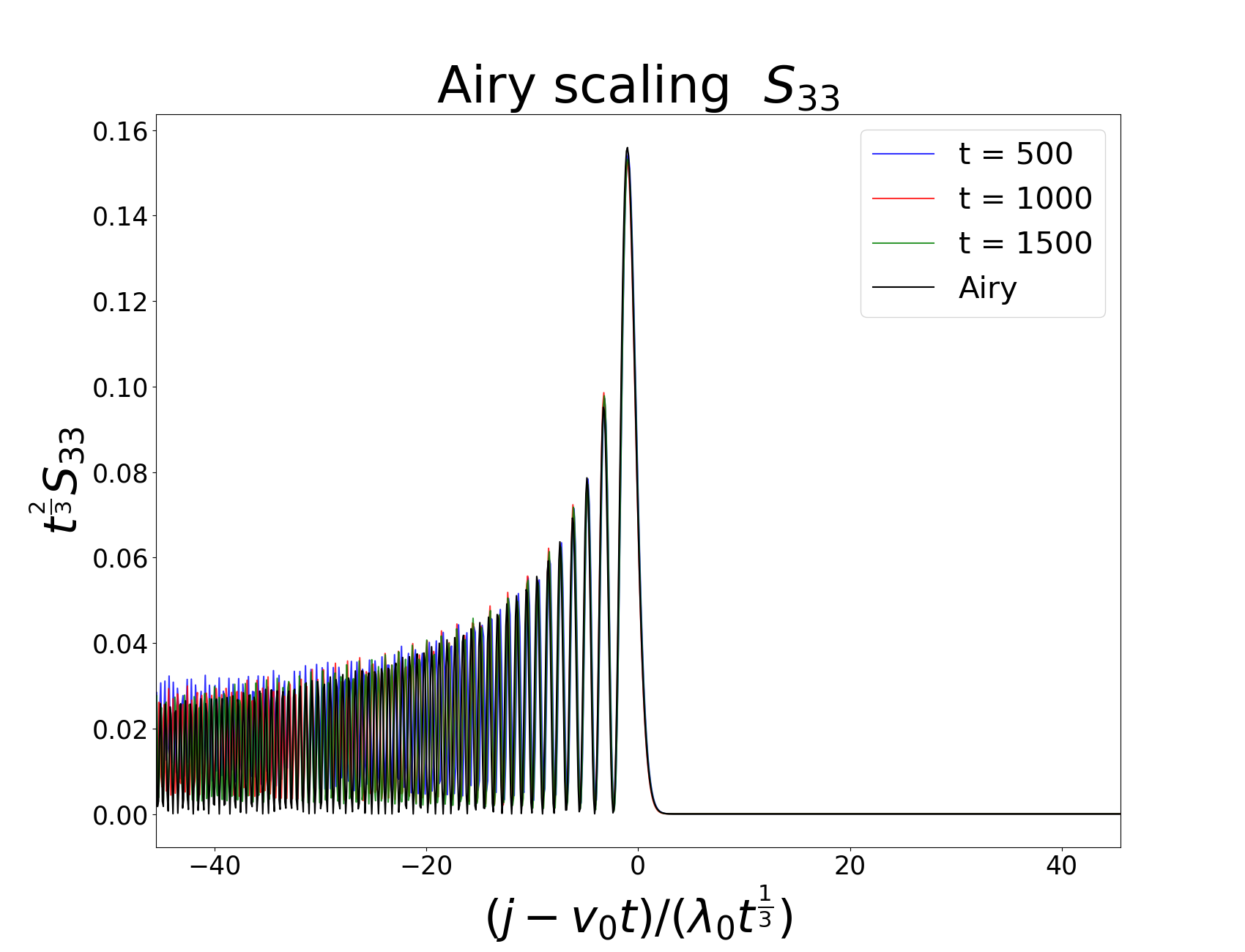}
	\includegraphics[scale=0.18]{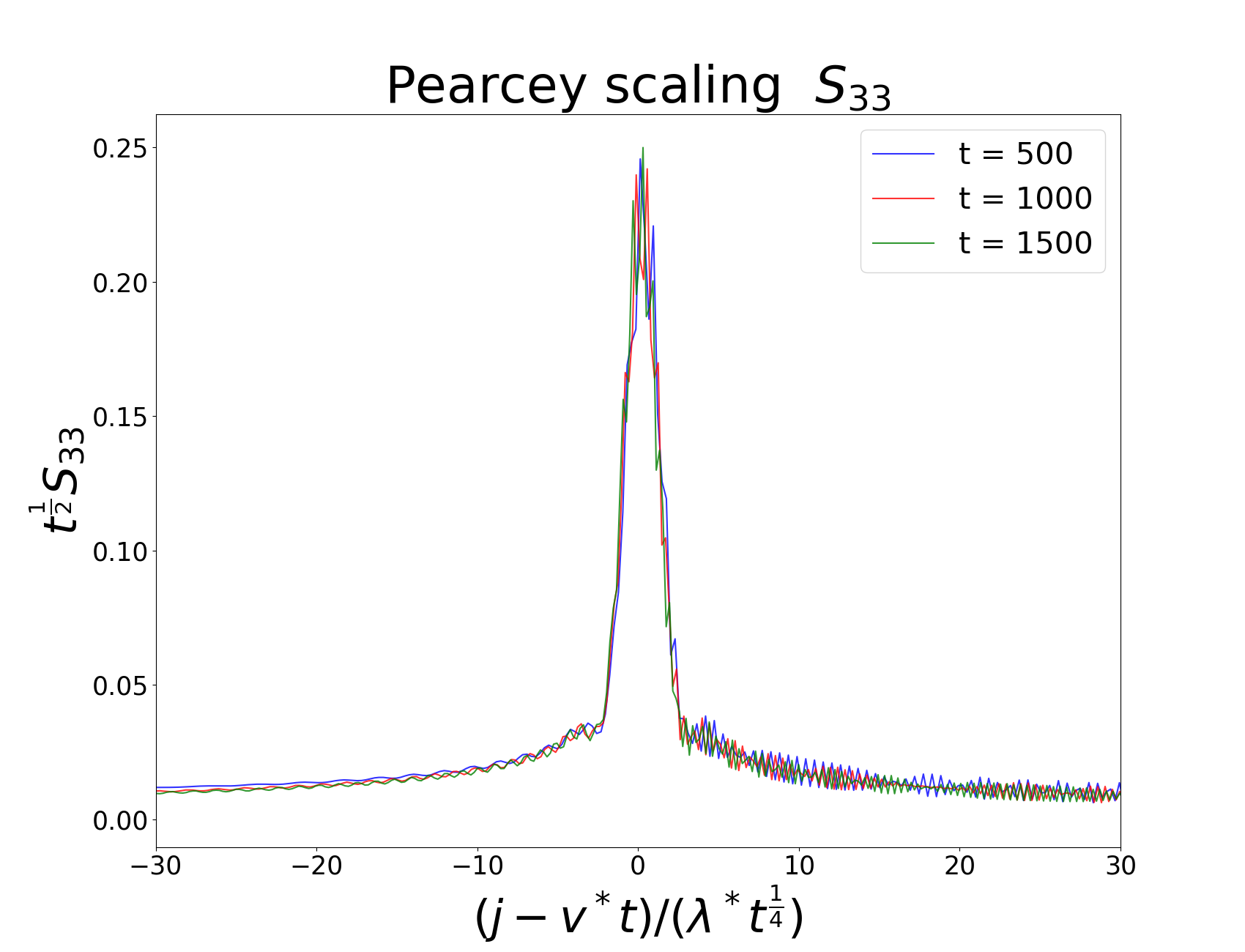}
	\includegraphics[scale=0.18]{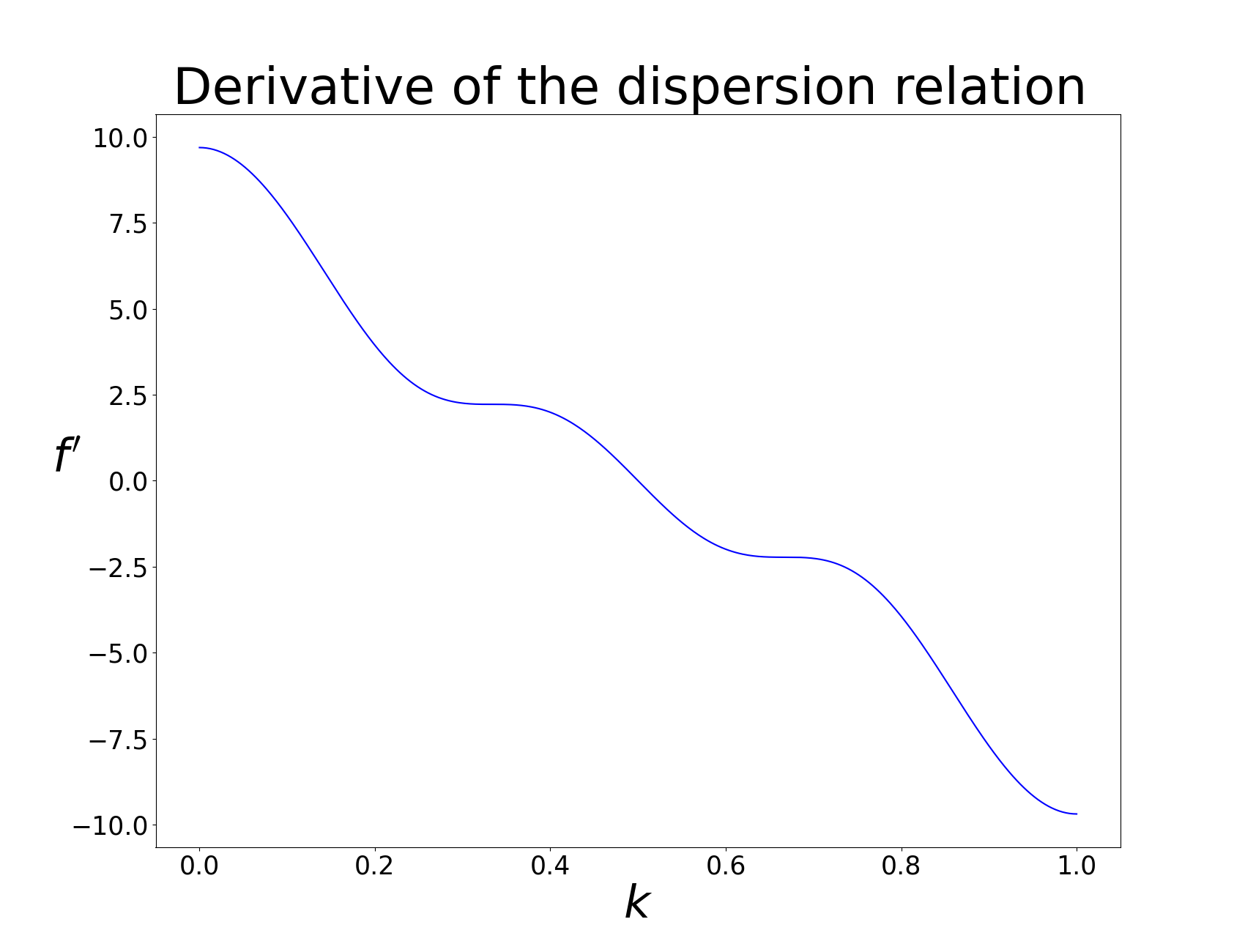}
	\caption{  Potential of Example~\ref{example2}. The top left figure displays $S_{33}(j,t)$  for several values of $t$. The scaling of $S_{33}$ according to the Airy function in Theorem~\ref{th:theorem_slow} for the fastest moving peak and the scaling of the slower moving peak according to
	the Pearcey integral are shown top right and bottom left, respectively. The corresponding critical points of the derivative of the dispersion function can be seen in the bottom right figure. }
\label{fig_ex2}
\end{figure}
 \end{example}

	\section{Complete set of integrals with local densities, currents and potentials, and some numerics for nonlinear versions}
	
	\subsection{Circulant hierarchy of integrals}
	In this section we construct a complete set of conserved quantities that have local densities.
	 The harmonic oscillator with short range interaction is clearly an integrable system. A  set of integrals of motion is given 
	  by the harmonic oscillators in each of  the  Fourier variables:  $\wh H_j=\frac{1}{2}\left( |\wh p_j| ^2 + |\omega_j|^2 |\wh q_j|^2  \right)$, $j=0,\dots \frac{N-1}{2}$.
However, when written in the physical variables $\bp$ and $ \bq$,  the quantities  
$$\wh H_j=\frac{1}{2}\sum_{k,l=0}^{N-1} \cF_{j,k} \overline{\cF_{j,l} }( p_k  p_l+|\omega_j|^2 q_k q_l)$$  depend on all components of the physical variables.
We now construct integrals of motion each having a density that involves only a limited number of components of the physical variables and this number only depends on the range~$m$ of interaction.

	For this purpose we denote by $\{\be_k\}_{k=0}^{N-1}$  the canonical  basis in $\R^N$.
	\begin{theorem}
		\label{thm:first}
		Let us consider the Hamiltonian 		
		\begin{equation}
		\label{eq:general_sys}
			H(\bp,\bq) = \frac{1}{2} \bp^\intercal \bp + \frac{1}{2}\bq^\intercal A \bq\,,
		\end{equation}
		with the symmetric circulant matrix $A$ as in \eqref{eq:A}, \eqref{A}.
	 Define the matrices $\{G_k\}_{k=1}^{M}$ to be the symmetric circulant matrix generated by  the vector  $\frac{1}{2}(\be_k + \be_{N - k})$
	  and $\{S_k\}_{k=1}^{M}$ to be the  antisymmetric  circulant matrix generated by the vector $\frac{1}{2}(\be_k - \be_{N - k})$.
	  Then  the family of Hamiltonians defined as  
			\begin{align}
			\label{eq:even_hamiltonian}
			H_k(\bp,\bq) =& \frac{1}{2} \bp^\intercal G_k \bp + \frac{1}{2}\bq^\intercal T^\intercal G_kT \bq=\frac{1}{2}\sum_{j=0}^{N-1}[p_jp_{j+k}+r_jr_{j+k}]\, ,\\ 
			\label{eq:odd_hamiltonian}
			H_{k+ \frac{N-1}{2}}(\bp,\bq) =&  \bp^\intercal T^\intercal S_k T \bq =\frac{1}{2}\sum_{j=0}^{N-1}\left[\left(\sum_{\ell=0}^m\tau_\ell p_{j+\ell}\right)(r_{j+k}-r_{j-k})\right]\,,\, \quad k=1,\dots,\frac{N-1}{2}
			\end{align} 
				together with $H_0:=H$ forms a complete family $(H_j)_{0\leq j \leq N-1}$ of integrals of motion that, moreover, is in involution.
	\end{theorem}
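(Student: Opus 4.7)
The plan is to exploit two structural facts: (i) all matrices appearing in the $H_k$ (namely $G_k$, $S_k$, $T$, $T^\intercal$, and $A = T^\intercal T$) are circulant, hence pairwise commute and are simultaneously diagonalized by the discrete Fourier matrix $\cF$; (ii) $G_k$ is symmetric whereas $S_k$ is antisymmetric. The second equalities in \eqref{eq:even_hamiltonian}--\eqref{eq:odd_hamiltonian} follow directly by unpacking the circulant structure and recalling $\br = T\bq$. I would establish involution entirely in the physical variables, and pass to Fourier only for the completeness step.

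\textbf{Involution.} A direct Poisson-bracket calculation for two even-type Hamiltonians gives
\[
\{H_k,H_j\} = \bp^\intercal\bigl(G_j T^\intercal G_k T - G_k T^\intercal G_j T\bigr)\bq,
\]
which vanishes by pairwise commutativity of the circulant factors. For two odd-type Hamiltonians,
\[
\{H_{k+M},H_{j+M}\} = \bp^\intercal\bigl[T^\intercal S_k T,\ T^\intercal S_j T\bigr]\bq = 0.
\]
The even--odd bracket $\{H_k,H_{j+M}\}$ produces a $\bp^\intercal X \bp$ term and a $\bq^\intercal Y \bq$ term with $X = G_k T^\intercal S_j T$ and $Y = T^\intercal G_k T\cdot T^\intercal S_j T$; using $S_j^\intercal = -S_j$ together with the commutativity of all circulant factors, both $X$ and $Y$ turn out to be antisymmetric, so the two quadratic forms vanish identically. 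Hence $\{H_i,H_j\}=0$ for all $0\leq i,j\leq N-1$, and in particular each $H_k$ Poisson-commutes with $H_0=H$ and is an integral of motion.

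\textbf{Completeness.} Passing to Fourier coordinates $(\hat\bp,\hat\bq) = (\widebar\cF \bp, \cF\bq)$ and using the symmetry $\hat p_{N-\ell} = \overline{\hat p_\ell}$, $\hat q_{N-\ell} = \overline{\hat q_\ell}$ to pair modes $\ell \leftrightarrow N-\ell$, the even/odd Hamiltonians decouple as
\[
H_k = \hat H_0 + 2\sum_{\ell=1}^{M}\cos\!\left(\tfrac{2\pi k\ell}{N}\right)\hat H_\ell,\qquad H_{k+M} = 2\sum_{\ell=1}^{M}\sin\!\left(\tfrac{2\pi k\ell}{N}\right)\hat L_\ell,
\]
where $\hat H_\ell = \tfrac{1}{2}\bigl(|\hat p_\ell|^2 + |\omega_\ell|^2|\hat q_\ell|^2\bigr)$ and $\hat L_\ell = |\omega_\ell|^2\,\Im(\hat p_\ell\hat q_\ell)$ are two independent commuting integrals of the isotropic oscillator in mode $\ell$ (their conservation and Poisson-commutation are immediate from \eqref{H_eq}). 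The linear map sending the mode invariants $(\hat H_0,\hat H_1,\ldots,\hat H_M,\hat L_1,\ldots,\hat L_M)$ to $(H_0,H_1,\ldots,H_{N-1})$ is block-diagonal: a $(M{+}1)\times(M{+}1)$ discrete cosine block and an $M\times M$ discrete sine block, both invertible. Since the $N$ mode invariants are functionally independent on the $2N$-dimensional phase space, so is the family $(H_k)_{0\leq k \leq N-1}$, establishing Liouville completeness.

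\textbf{Main obstacle.} The technical crux is the even--odd cross bracket: one must carefully track how commutativity of all the circulants combines with $S_k^\intercal = -S_k$ to force $X$ and $Y$ to be antisymmetric. The remaining ingredients (identification of $\hat H_\ell$ and $\hat L_\ell$ as commuting integrals of the mode-$\ell$ oscillator and invertibility of the DCT/DST blocks) are essentially bookkeeping.
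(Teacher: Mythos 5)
Your involution argument coincides with the paper's own proof: the same three-case evaluation of the Poisson brackets in the physical variables, with every expression vanishing because circulant matrices commute and, for the even--odd case, because $S_k^\intercal=-S_k$ makes the quadratic forms $\bp^\intercal X\bp$ and $\bq^\intercal Y\bq$ antisymmetric (this is exactly the paper's ``it suffices to observe'' step, which you have merely spelled out). Where you genuinely go beyond the paper is completeness: the paper's proof stops once the brackets vanish and never establishes functional independence of the $N$ functions $H_0,\dots,H_{N-1}$ --- nor would it follow from the paper's earlier remark about the Fourier energies $\wh H_j$, since those supply only $(N+1)/2$ integrals, and one needs the extra ``angular momentum'' invariants to reach $N$. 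Your Fourier-side argument --- writing the even family as cosine combinations of the mode energies $\hat H_\ell$ and the odd family as sine combinations of $\hat L_\ell=|\omega_\ell|^2\,\Im(\hat p_\ell \hat q_\ell)$, then inverting the cosine and sine blocks --- is correct (a kernel vector of either block, extended evenly resp.\ oddly to $\Z/N\Z$, would have identically vanishing DFT) and it is what turns the word ``complete'' in the statement into a proved assertion. Two minor points you should flag: the eigenvalues of $S_k$ are $\mp i\sin(2\pi k\ell/N)$ depending on the DFT convention, so your mode formulas hold up to inessential signs that do not affect invertibility; and the independence is an almost-everywhere statement on the full phase space $\R^{2N}$, where $\hat H_0=\tfrac12 \hat p_0^2$ --- on the reduced space $\cM$ of \eqref{eq:phase_space} one has $\hat p_0=\hat q_0=0$, hence $\hat H_0\equiv 0$, the $N$ functions then satisfy one linear relation, and what survives is the set of $N-1$ independent commuting integrals appropriate to the dimension $2(N-1)$ of $\cM$.
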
   


\begin{proof}
Observe first that the Hamiltonian $H_0=H$ is included in the description of formula \eqref{eq:even_hamiltonian} as $G_0$ equals the identity matrix.
Using the symmetries $G_k^\intercal =G_k$, $0\leq k\leq(N-1)/2$, the Poisson bracket 
$\{F, G\} = \langle \nabla_{\bq} F, \nabla_{\bp} G\rangle- \langle \nabla_{\bq} G, \nabla_{\bp} F\rangle$
may be evaluated in the form
\[
\begin{array}{lll}
\{H_k, H_\ell\} &= \bq^\intercal \big( T^\intercal G_{k} T G_{\ell} - T^\intercal G_{\ell} T G_{k}
\big) \bp\,,\qquad &\mbox{for $0 \leq k, \ell \leq \frac{N-1}{2}$,} \vspace{.15cm}\\
\{H_k, H_\ell\} &=  \bp^\intercal \big( T^\intercal S_k T T^\intercal S_\ell T - T^\intercal S_\ell T T^\intercal S_k T
\big) \bq\,,\qquad &\mbox{for $\frac{N+1}{2} \leq k, \ell \leq N-1$,}  \vspace{.15cm}\\
\{H_k, H_\ell\} &= \bq^\intercal T^\intercal G_{k} T T^\intercal S_\ell T \bq -  \bp^\intercal T^\intercal S_\ell T G_k \bp
\,,\qquad &\mbox{for $0 \leq k  \leq \frac{N-1}{2}$, $\frac{N+1}{2} \leq \ell \leq N-1$.}
\end{array}
\]
All these expressions vanish. To see this, it suffices to observe that multiplication is commutative for circulant matrices and, for the bottom line, that $S_\ell$ is skew symmetric: $S^\intercal_\ell = -S_\ell$. 

\end{proof}
Now we introduce the local densities corresponding to the just defined integrals of motion
\[
e_j^{(k)}=
\begin{cases}
& \frac{1}{2}\left(p_jp_{j+k} +  r_jr_{j+k}\right)\, , \mbox{for $k=1, \dots,\frac{N-1}{2}$} \\ 
&\left(\sum_{l=0}^m \tau_l p_{j+l}\right) \left(r_{j+k} - r_{j-k}\right),\;\; \mbox{for $k=\frac{N+1}{2}, \dots,N$}\,.	
\end{cases}
\]	
together with their correlation functions
\begin{equation}
\label{eq:loc_field_cor}
		S_{(k+3,n+3)}^{(N)}(j,t) := \la e^{(k)}_j(t)e^{(n)}_0(0) \ra - \la e^{(k)}_j(t)\ra\la e^{(n)}_0(0)\ra\,. \\
	\end{equation}
and limits
\begin{equation}
\label{CIM}
S_{k,n}(j,t)=\lim_{N\to\infty}S^{(N)}_{k,n}(j,t).
\end{equation}
We present explicit formulas for the limits $S_{k, n}$ in Appendix~\ref{Appendix_E} from which one can deduce that they have the same scaling behaviour as the energy-energy correlation  function $S_{33}$ when $t\to\infty$.
\subsection{Currents and potentials} 
In this subsection  we write the evolution with respect to time of $r_j$, $p_j$ and $e_j$ in the form  of a (discrete)  conservation law by introducing the currents.
 Each  conservation law has a potential function that is a Gaussian random variable.
In the final part of this subsection we determine the leading order behaviour of the variance of this  Gaussian random variable as $t\to\infty$ in the case of nearest neighbour interactions.

For introducing the currents we  recall that $\br=T\bq$ with $T$ as in \eqref{form:T}. Then  one has 
\begin{equation}
\begin{split}
\label{dotrp}
&\dot{r}_j=\sum_{\ell}T_{j\,\ell} p_\ell=\sum_{\ell=1}^{{m}}\tau_\ell( p_{j+\ell}-p_j),\quad r_{j+N}=r_j\\
&\dot{p}_j=-\sum_{\ell}T_{\ell \,j}  r_\ell=\sum_{\ell=1}^{{m}}\tau_\ell(r_j- r_{j-\ell}),\quad p_{j+N}=p_j,\;\;j=0,\dots,N-1.
\end{split}
\end{equation}	
To write the above equation in the form of a discrete conservation law we introduce  the local currents
\begin{align}
\label{J_rp}
&\cJ_j^{(r)} :=\sum_{s=0}^{m-1}p_{j+1+s}\sum_{\ell=s+1}^m\tau_\ell \, ,\quad \cJ_j^{(p)} :=   \sum_{s=1}^{m}r_{j+1-s}\sum_{\ell=s}^m\tau_\ell.
\end{align}
Then  the equations  of motion  \eqref{dotrp} can be written in the form
\begin{align}
&\dot{r}_j=\cJ_{j}^{(r)}-\cJ_{j-1}^{(r)}  \\
&\dot{p}_j=\cJ_{j}^{(p)}-\cJ_{j-1}^{(p)},\quad j=0,\dots,N-1.
\end{align}
From the above equations it is clear that   the momentum $p_j$ and the generalized elongation  $r_j$ are locally
conserved. The evolution of the energy 
$e_j:=\dfrac{1}{2}p_j^2+\dfrac{1}{2}r_j^2$ at position $j$ 
takes the form 
\begin{equation}
\label{J_j}
\dot{e}_j=\cJ_j^{(e)}-\cJ_{j-1}^{(e)}\,, \quad \cJ_j^{(e)}=\sum_{s=1}^{m}\tau_s\sum_{\ell=0}^{s-1}r_{j+1-s+\ell}p_{j+1+\ell}.
\end{equation}

We remark that all the  currents $\cJ_{j}^{(r)}$, $\cJ_{j}^{(p)}$ and $\cJ_{j}^{(e)}$ are local quantities in the variables $\bq$ and $\bp$.
We  recall the notation of the introduction  
\[
\boldsymbol{u}(j,t)=(r_j(t),p_j(t),e_j(t)),
\]
and we introduce the vector of currents  $ \boldsymbol{J}(j,t)=(\cJ_{j}^{(r)}(t), \cJ_{j}^{(p)}(t), \cJ_{j}^{(e)}(t))\,.$
The equations of motion take the compact form
\[
\dfrac{d}{dt}\boldsymbol{u}(j,t)=\boldsymbol{J}(j,t)-\boldsymbol{J}(j-1,t).
\]
We define a potential function for the above conservation law
\[
\boldsymbol{\Phi}(j,t):=\int_0^t\boldsymbol{J}(j,t')dt'+\sum_{\ell=0}^j\bu(\ell,0).
\]
Then it is straightforward to verify that $\boldsymbol{\Phi}_t(j,t)=\boldsymbol{J}(j,t)$ and $\boldsymbol{\Phi}(j,t)-\boldsymbol{\Phi}(j-1,t)=\bu(j,t)$.
The quantities  $\Phi_1(j,t)$ and $\Phi_2(j,t)$  can be expressed as sums of independent centered Gaussian random variables and are therefore also  Gaussian random variables with zero mean and variance 
$\langle (\Phi_1(j,t))^2\rangle$  and $\langle (\Phi_2(j,t))^2\rangle$,  
where  all  the averages are taken with respect to the distribution \eqref{eq:measure}, see also \eqref{eq:measureDFT}.
We  calculate the variance for the case of   the harmonic oscillator with nearest neighbour interactions.  In this particular case
\begin{equation}
\begin{split}
\label{Phi}
\Phi_1(j, t) &=\sqrt{\kappa_{1}}\int_{0}^{t} p_{j+1}(t')dt' +\sum_{\ell=0}^{j} r_{\ell}(0) =\sqrt{\kappa_{1}}(q_{j+1}(t)-q_0(0)) \,\\
\Phi_{2}(j,t) &=\sqrt{\kappa_{1}}\int_{0}^{t}r_{j}(t')dt' + \sum_{\ell=0}^{j} p_{\ell}(0) \ . \\
\end{split}
\end{equation}

After some lengthy calculations one obtains:
\begin{align}
\label{Phi1_int}
&\lim_{N \to \infty}
\langle \left(
\Phi_1(j,t) \right)^{2}\rangle= \frac{2\kappa_{1}}{\beta} \int_{0}^{1} |\omega(k)|^{-2} \left[ 1 - \cos{\left( | \omega(k)|t \right)} \cos{\left( 2 \pi (j+1) k \right)} \right] dk\\
\label{Phi2_int}
&\lim_{N \to \infty}\langle\left( \Phi_{2}(j,t) \right)^{2} \rangle= \frac{2\kappa_{1}}{\beta} \int_{0}^{1} |\omega(k)|^{-2} 
( 1 -  \cos{(|\omega(k)|t)}) \cos{\left(2\pi  (j+1)k\right)}dk + \frac{j+1}{\beta} \,.
\end{align}
Evaluating the r.h.s. of the above expressions  in the limit $t\to\infty$  we arrive to the following theorem.
\begin{theorem}
\label{Theorem_Gaussian}
In the limit $N\to\infty$ and $t\to \infty$  the quantities   $\Phi_1(j, t)$   and $\Phi_2(j, t)$ defined in \eqref{Phi} are  Gaussian random variables that have the following large $t$ behaviour:
\begin{equation}
\lim_{N \to \infty}\Phi_1(j, t) ={\mathcal N}(0,\sigma_1^2) \qquad \mbox{and} \qquad \lim_{N \to \infty}\Phi_2(j, t) ={\mathcal N}(0,\sigma_2^2)\,.
\end{equation}
The leading order behaviour of the variances $\sigma_1^2$ and $\sigma_2^2$ agrees. In the physically interesting region $\frac{|j|}{t}\leq\sqrt{\kappa_1}$ it is given by
\begin{equation}
\sigma_1^2=\frac{ t \sqrt{\kappa_{1}}}{\beta} + \cO\big(t^{\frac{1}{3}} \big) = \sigma_2^2 \,.
\end{equation}
\end{theorem}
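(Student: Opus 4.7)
The plan has three parts: (i) establish Gaussianity, (ii) derive the variance integrals \eqref{Phi1_int}--\eqref{Phi2_int}, and (iii) extract the leading term $t\sqrt{\kappa_1}/\beta$ with an $\cO(t^{1/3})$ remainder.

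Parts (i) and (ii) reduce to Fourier bookkeeping. From \eqref{Phi}, $\Phi_1(j,t)=\sqrt{\kappa_1}(q_{j+1}(t)-q_0(0))$ is a linear functional of the initial data $(\bp(0),\bq(0))$, and the same holds for $\Phi_2$ once $\int_0^t r_j(t')\,dt'$ is evaluated via \eqref{evolution_pq}. Since the Gibbs measure \eqref{eq:measureDFT} is Gaussian with independent modes satisfying $\langle|\wh p_\ell(0)|^2\rangle=1/\beta$ and $\langle|\wh q_\ell(0)|^2\rangle=1/(\beta|\omega_\ell|^2)$ for $\ell\neq 0$, both $\Phi_1$ and $\Phi_2$ are centered Gaussians. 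Writing $\sigma_1^2=\kappa_1\langle(q_{j+1}(t)-q_0(0))^2\rangle$, expanding in Fourier modes using \eqref{evolution_pq}, and sending $N\to\infty$ produces \eqref{Phi1_int}; an analogous computation yields \eqref{Phi2_int}. A useful side observation is
\[
\frac{2\kappa_1}{\beta}\int_0^1\frac{1-\cos(2\pi(j+1)k)}{|\omega(k)|^2}\,dk=\frac{1}{\beta}\int_0^1\frac{\sin^2(\pi(j+1)k)}{\sin^2(\pi k)}\,dk=\frac{j+1}{\beta},
\]
a consequence of Dirichlet-kernel orthogonality, which exactly compensates the extra term $(j+1)/\beta$ in \eqref{Phi2_int}; hence $\sigma_1^2=\sigma_2^2$ identically, and only $\sigma_1^2$ needs asymptotic analysis.

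For (iii) I would apply the identity $1-\cos A\cos B=\tfrac{1}{2}[(1-\cos(A-B))+(1-\cos(A+B))]$ with $A=|\omega(k)|t=2\sqrt{\kappa_1}\,t\sin(\pi k)$ and $B=2\pi(j+1)k$. The weight $1/|\omega(k)|^2$ concentrates the integral at the endpoints $k=0$ and $k=1$, which contribute equally by the symmetry $k\mapsto 1-k$. Substituting $v=2\pi tk$ near $k=0$ and expanding $|\omega(k)|t=\sqrt{\kappa_1}v-\sqrt{\kappa_1}\pi^2v^3/(6t^2)+\cO(v^5/t^4)$ brings the leading integrand to
\[
\frac{t}{2\pi\kappa_1 v^2}\bigl[(1-\cos((\sqrt{\kappa_1}-\xi)v))+(1-\cos((\sqrt{\kappa_1}+\xi)v))\bigr],\quad \xi=\tfrac{j+1}{t}.
\]
Applying $\int_0^\infty(1-\cos(\alpha v))/v^2\,dv=\pi|\alpha|/2$ and using $|\sqrt{\kappa_1}-\xi|+|\sqrt{\kappa_1}+\xi|=2\sqrt{\kappa_1}$ for $|\xi|\leq\sqrt{\kappa_1}$, each endpoint contributes $t\sqrt{\kappa_1}/(2\beta)$ to $\sigma_1^2$, giving the claimed leading term.

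The main obstacle is the uniform $\cO(t^{1/3})$ control of the remainder. In the bulk $k\in[\delta,\tfrac{1}{2}-\delta]$ standard stationary-phase arguments give rapid decay. The delicate regime is near the endpoints when $\xi$ approaches the sound speed $\pm\sqrt{\kappa_1}$: the cubic correction $\sqrt{\kappa_1}\pi^2v^3/(6t^2)$ becomes comparable to the linear phase, a cubic saddle at $v=0$ develops with natural scale $v\sim t^{1/3}$, and an Airy-type contribution of order $t^{1/3}$ appears, uniformly in $\xi$. I would conclude by combining a smooth cutoff near the endpoints, Airy/steepest-descent asymptotics at the sound-cone boundary (analogous to those used in the proof of Theorem~\ref{th:theorem_slow}), and standard stationary-phase bounds in the bulk, to obtain the uniform $\cO(t^{1/3})$ remainder claimed by the theorem.
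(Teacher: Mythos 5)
Your proposal is correct in substance and reaches the theorem by a genuinely different route than the paper. The paper handles the $\cO(t^{2})$ blow-up of the integrand of \eqref{Phi2_int} near $k=0$ by replacing the weight $|\omega(k)|^{-2}$ with a Cauchy-type kernel $(k-z)^{-2}$, analysing the resulting integral $F_0(z)$ by stationary phase plus contour deformation for $|j|/t<(1-\epsilon)\sqrt{\kappa_1}$, and invoking a local parametrix construction following \cite{KriechKuijMcLMiller08} in the difficult regime $|j|/t\approx\sqrt{\kappa_1}$; the difference between \eqref{Phi1_int} and \eqref{Phi2_int} is then also estimated by a further stationary-phase argument. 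Your treatment is more elementary on two counts. First, your Fej\'er-kernel identity $\frac{2\kappa_1}{\beta}\int_0^1|\omega(k)|^{-2}\left(1-\cos(2\pi(j+1)k)\right)dk=\frac{j+1}{\beta}$ (valid for integer $j\geq 0$) shows $\sigma_1^2=\sigma_2^2$ \emph{exactly}, where the paper only asserts agreement at leading order via asymptotics --- a genuine simplification, since that difference integral contains no oscillatory factor in $t$ at all. Second, your extraction of the leading term via the splitting $1-\cos A\cos B=\frac{1}{2}\left[(1-\cos(A-B))+(1-\cos(A+B))\right]$, the rescaling $v=2\pi tk$, and the formula $\int_0^\infty(1-\cos(\alpha v))v^{-2}dv=\pi|\alpha|/2$ makes the plateau value transparent: $|\sqrt{\kappa_1}-\xi|+|\sqrt{\kappa_1}+\xi|=2\sqrt{\kappa_1}$ precisely when $|\xi|\leq\sqrt{\kappa_1}$, which also explains why the plateau terminates at the sound speed. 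What the paper's route buys instead is direct access to existing parametrix technology for the uniform error bound as $\xi\to\pm\sqrt{\kappa_1}$; at that point your proposal, like the paper itself, is only a sketch (the paper explicitly omits these details), so your claim of a uniform $\cO(t^{1/3})$ remainder rests on the same deferred analysis and is not a gap relative to the paper's own level of rigor.

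Three small slips, none structural: with $v=2\pi tk$ the cubic phase correction is $\sqrt{\kappa_1}\,v^3/(24t^2)$, not $\sqrt{\kappa_1}\,\pi^2v^3/(6t^2)$; the rescaled prefactor is $t/(2\pi\beta v^2)$, not $t/(2\pi\kappa_1 v^2)$; and the Airy saddle lives at the scale $v\sim t^{2/3}$ (equivalently $k\sim t^{-1/3}$, matching Theorem~\ref{th:theorem_slow}), not $v\sim t^{1/3}$ --- though the resulting $\cO(t^{1/3})$ size of that contribution, coming from $\int_{v\sim t^{2/3}}t\,v^{-2}dv$, is correct as you state.
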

The proof of the above theorem relies on steepest descent analysis of the oscillatory integrals in 
\eqref{Phi2_int}.
But because the integrand is actually quite large ( $ \sim C t^{2}$) near $k=0$, we consider the following Cauchy-type integral instead,
\begin{eqnarray}
F_{0}(z) =\frac{1}{2 \pi^{2} \beta} \int_{-1/2}^{1/2} \frac{ 1 -  \cos{(|\omega(k)|t)}}{(k-z)^{2}} \cos{\left(2\pi  (j+1)k\right)}dk \ ,
\end{eqnarray}
which gives the leading order asymptotic behaviour of the integrals appearing in
(\ref{Phi2_int}), since 
\begin{eqnarray}
\frac{ 2 \kappa_{1}}{\beta}
\int_{-1/2}^{1/2} |\omega(k)|^{-2}
( 1 -  \cos{(|\omega(k)|t)}) \cos{\left(2\pi  (j+1)k\right)}dk - F_{0}(0) \rightarrow \ 0 \ \ \mbox{ as } t, j \rightarrow \infty \ .
\end{eqnarray}
For $\frac{|j|}{t}  < (1 - \epsilon ) \sqrt{\kappa_{1}}$, $\epsilon > 0$, the analysis of $F_{0}(z)$ is quite straightforward - a standard stationary phase calculation combined with a contour deformation to permit the evaluation at $z=0$.  For $t$ and $j$ growing to $\infty$ such that $\frac{|j|}{t} \approx \sqrt{\kappa_{1}}$, the analysis is more complicated because the point of stationary phase is encroaching upon the origin, where the integrand itself is actually large as $t \to \infty$.  For this case, one must construct a local parametrix, following quite closely the analysis presented in \cite{KriechKuijMcLMiller08}, and we omit the details of this analysis. In order to analyse $\Phi_1$ observe that the difference of the integrals in relations \eqref{Phi1_int} and \eqref{Phi2_int} is given by $\int_{0}^{1} |\omega(k)|^{-2} \left[ 1 -  \cos{\left( 2 \pi (j+1) k \right)} \right] dk$ which can also be treated by a stationary phase calculation combined with a contour deformation.  

\subsection{Nonlinear Regime}\label{sect3.2}
In this section we consider a nonlinear perturbation of the harmonic oscillators with short range interactions of the form

\begin{equation}
\label{HN} H(\bp,\bq)=\sum_{j=0}^{N-1}\frac{p_j^2}{2}+\sum_{s=1}^m\kappa_s\left(\frac{1}{2}\sum_{j=0}^{N-1}(q_j-q_{j+s})^2  + \frac{\chi}{3}\sum_{j=0}^{N-1}(q_j-q_{j+s})^3 + \frac{\gamma }{4}\sum_{j=0}^{N-1}(q_j-q_{j+s})^4\right) \,.
 \end{equation}

We consider  Example~\ref{example1} and Example~\ref{example2}  with different strengths of nonlinearity  namely
\[
\mbox{  $m=2, \, \kappa_1 = 1,$   $\kappa_2 = \frac{1}{4}$,}\;\;
\begin{cases} 
	&\mbox{$\chi=0.01$ and $\gamma=0.001$}\\
	&\mbox{$\chi=0.1$ and $\gamma=0.01$ }
\end{cases}
\]
\[
\mbox{  $m=3, \, \kappa_1 = 1,$   $\kappa_2 = \frac{1}{8}$, $\kappa_2 = \frac{7}{72}$, }\;\;
\begin{cases} 
	&\mbox{$\chi=0.01$ and $\gamma=0.001$}\\
	&\mbox{$\chi=0.1$ and $\gamma=0.01$ }
\end{cases}\, .
\]
We numerically compute and study the correlatios functions for these systems sampling the initial conditions according to the Gibbs measures of just their harmonic part at temperature $\beta^{-1} = 1$.

\begin{figure}[ht]
	\centering
	\includegraphics[scale=0.16]{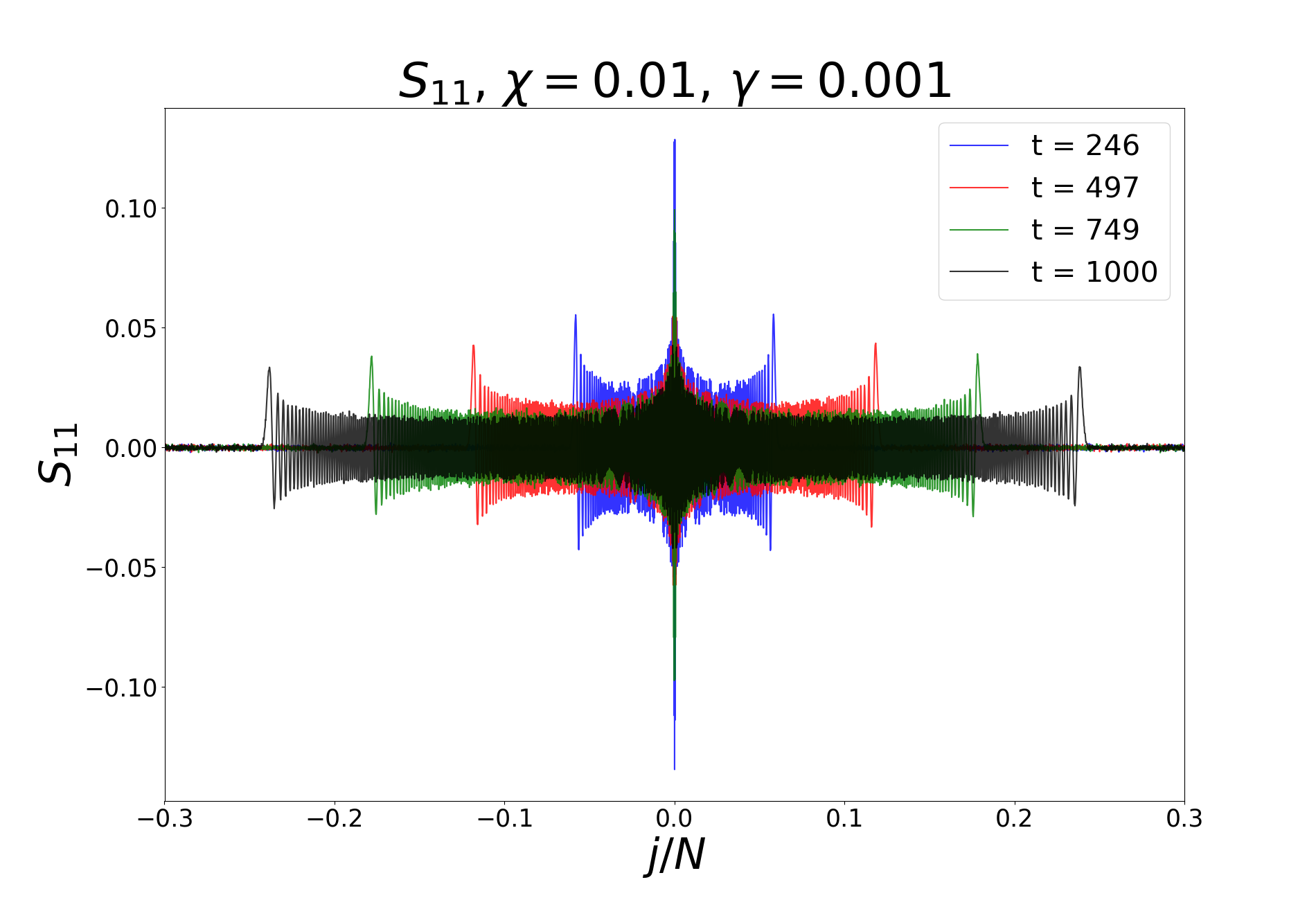}
	\includegraphics[scale=0.16]{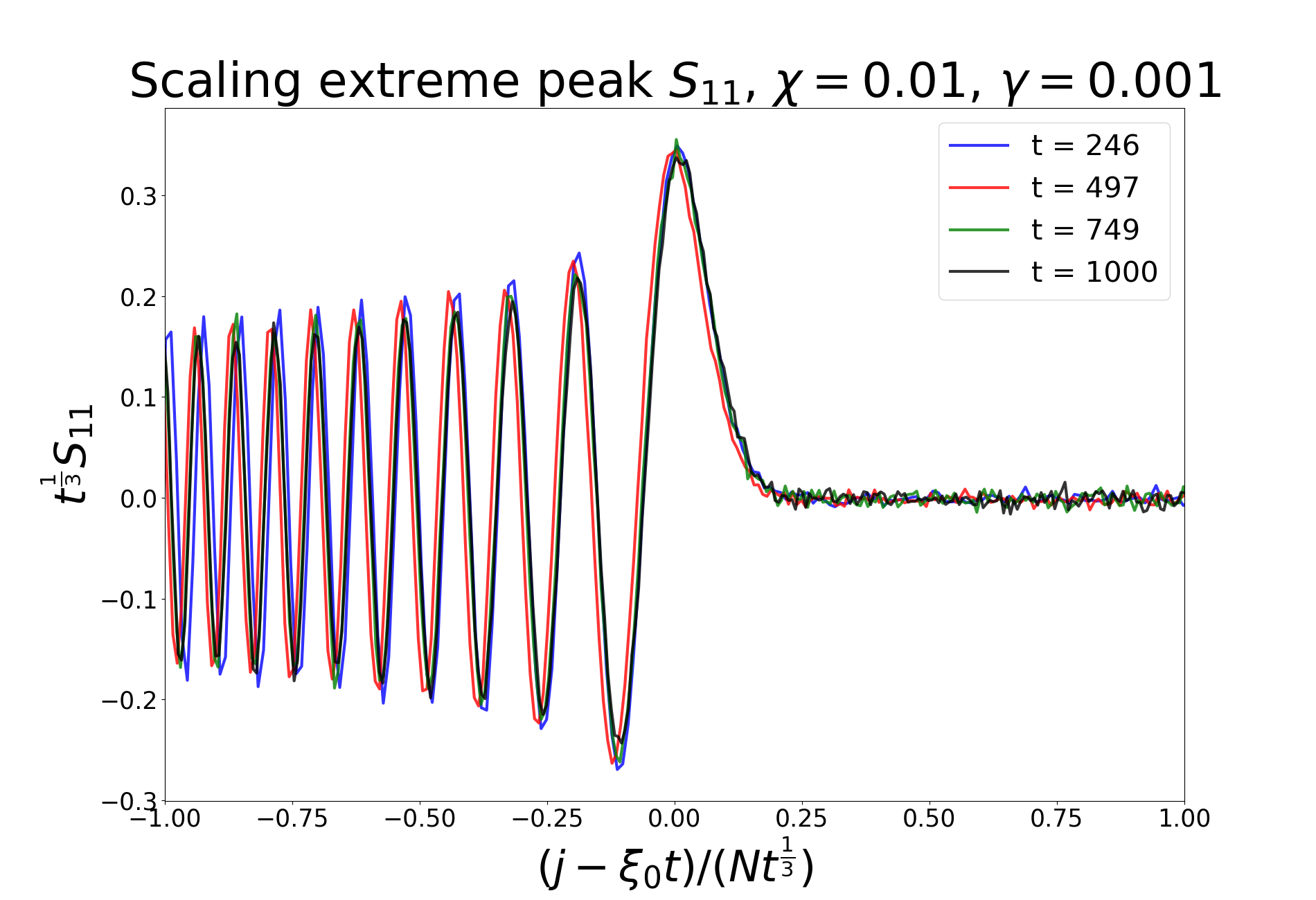}
	\includegraphics[scale=0.16]{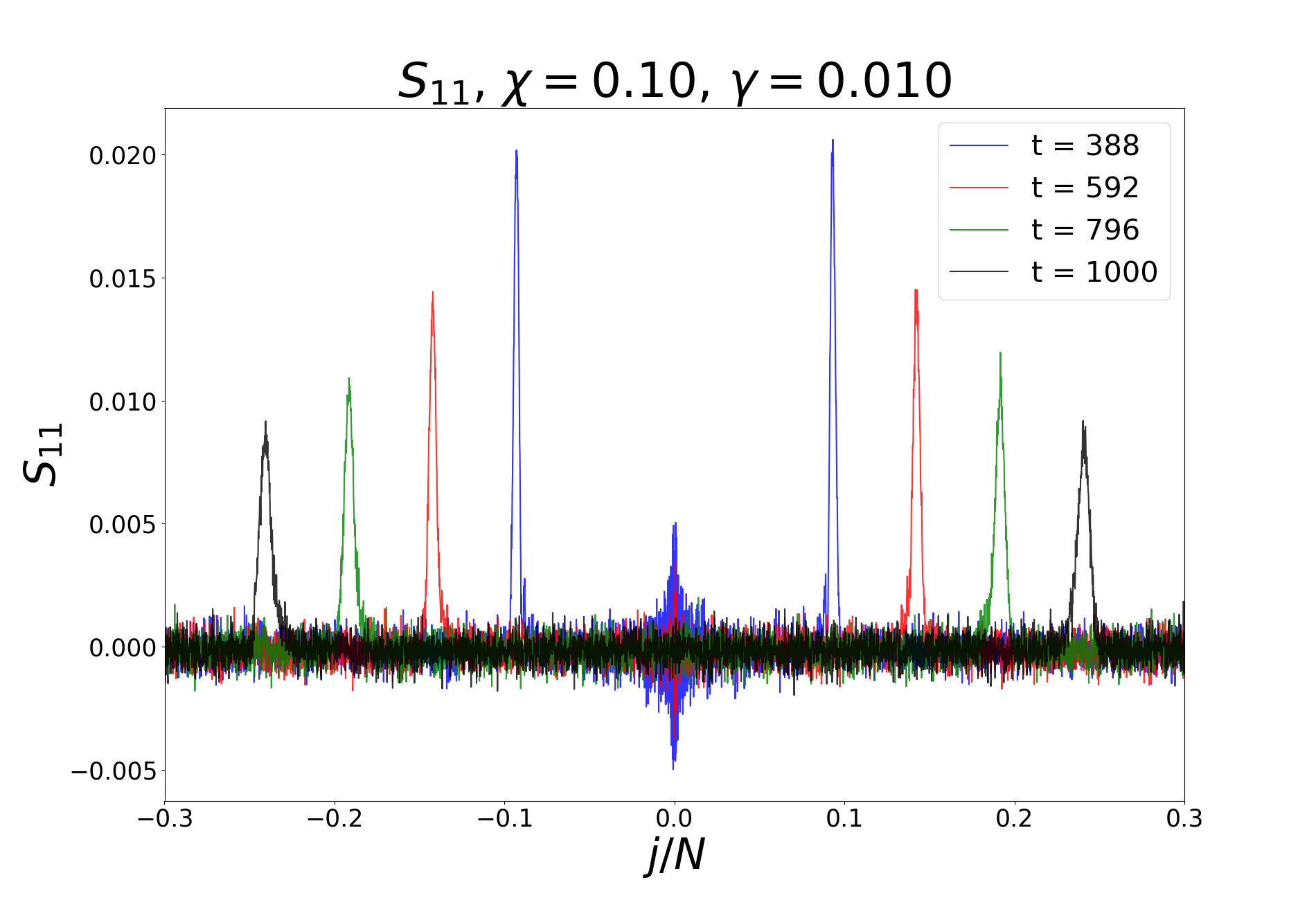}
	\includegraphics[scale=0.16]{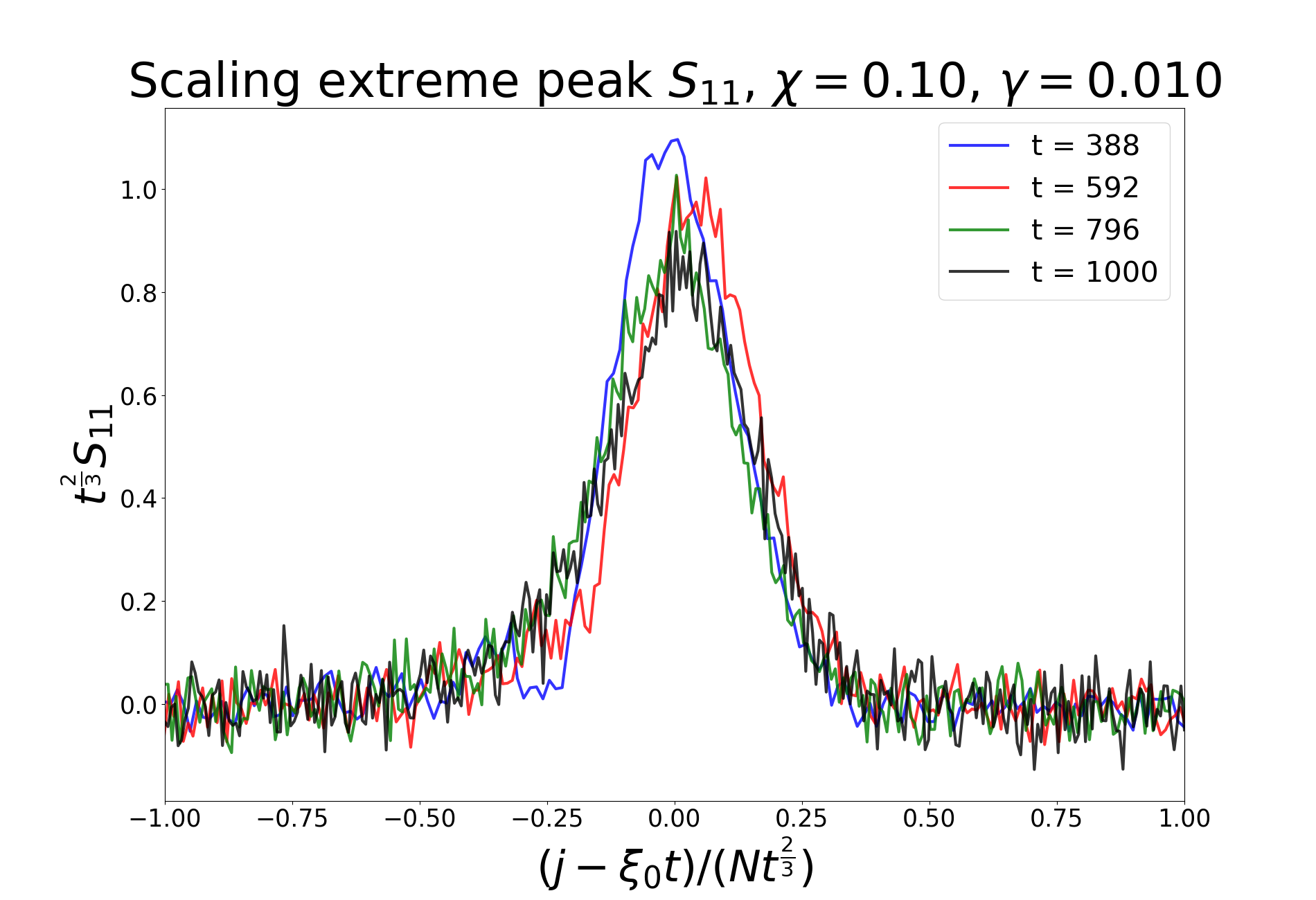}
		\caption{ Correlation function $S^{(N)}_{11}(j,t)$  for several values of times and for the  Hamiltonian \eqref{HN} with $\kappa_s$ as in  Example~\ref{example1},  $\chi=0.01$ and $\gamma=0.001$ in the top figure  and  $\chi=0.1$ and $\gamma=0.01$ in the lower figure.
		On the right  top  figure, the scaling of the fastest peak according to Airy parametrix (see Theorem~\ref{th:theorem_slow}  and Figure~\ref{Figure1})  and according to $t^{-2/3}$ in the lower figure. The speed $\xi_0$  of the fastest peak  is determined numerically.
		One can see that the central peak  has a low decay  in the top left figure, while  in the left bottom figure it is destroyed by  the  relatively  stronger nonlinearity.
		}
\label{fig_ex1_non}
\end{figure}
In  the weakly nonlinear case, the fastest  peaks of the correlation functions scale numerically according to the Airy parametrices  (cf.~Theorem~\ref{th:theorem_slow}) as can be deduced from 
the    top pictures in  Figures~\ref{fig_ex1_non}, \ref{fig_ex2_non}
while for stronger nonlinearity the fastest  peaks  seem to scale like $t^{\frac{2}{3}}$ in equation \eqref{TW}, see  bottom figures in   Figures~\ref{fig_ex1_non}, \ref{fig_ex2_non}.
The non generic peaks that are present in the linear cases and scale like $t^{1/4}$ have a fast decay  in the case of strong nonlinearity.
However for  weak nonlinearities, the central peak in the top left  Figure~\ref{fig_ex1_non}, still scales in time  like $t^{-\frac{1}{4}}$. Indeed performing a regression analysis of the log-log plot one can see 
a scaling like $t^{-0.267}$ that is slightly faster then $t^{-\frac{1}{4}}$  (see Figure~\ref{fig_ex3_non}).

  \begin{figure}[ht]
	\centering
			\includegraphics[scale=0.16]{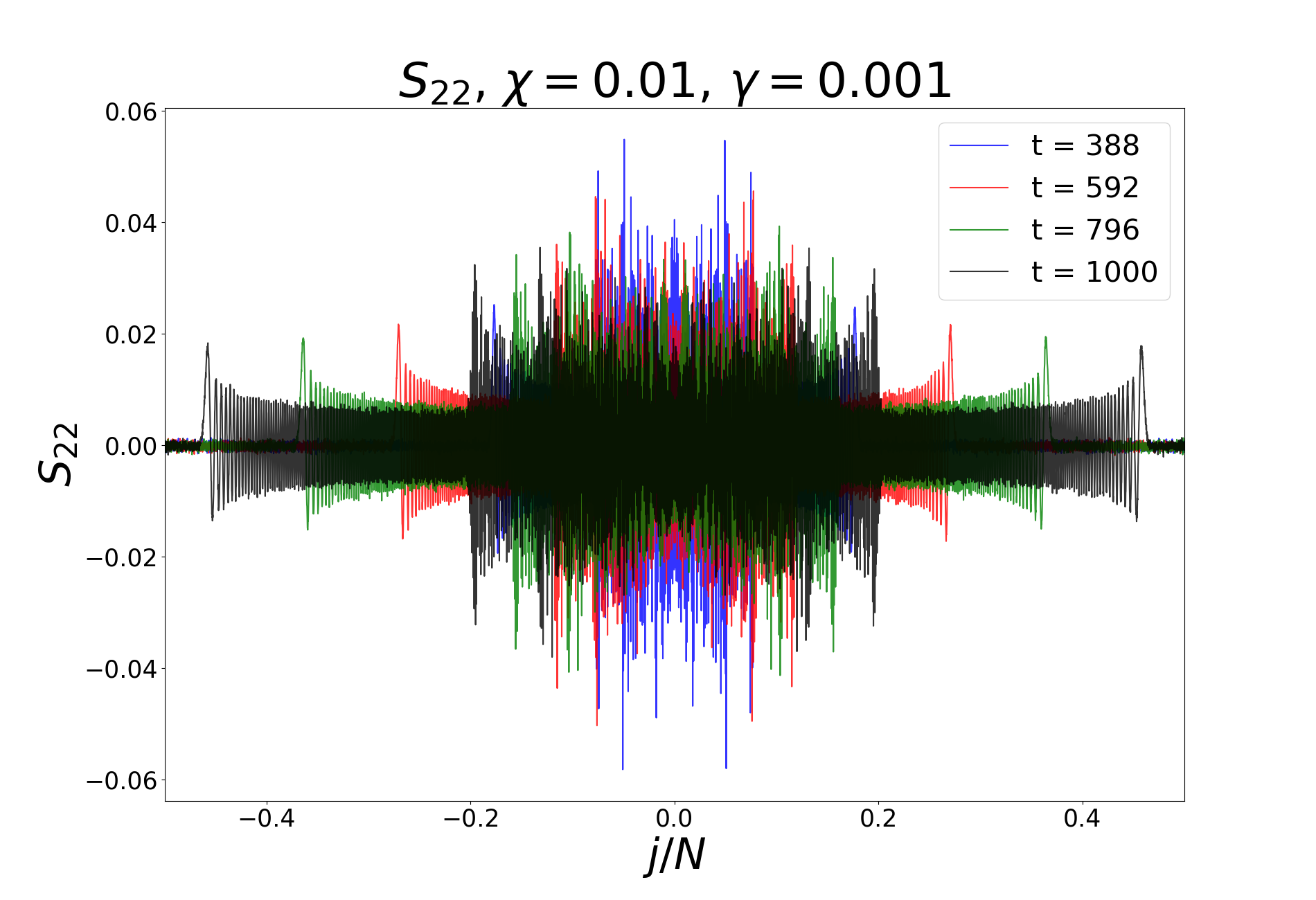}
		\includegraphics[scale=0.16]{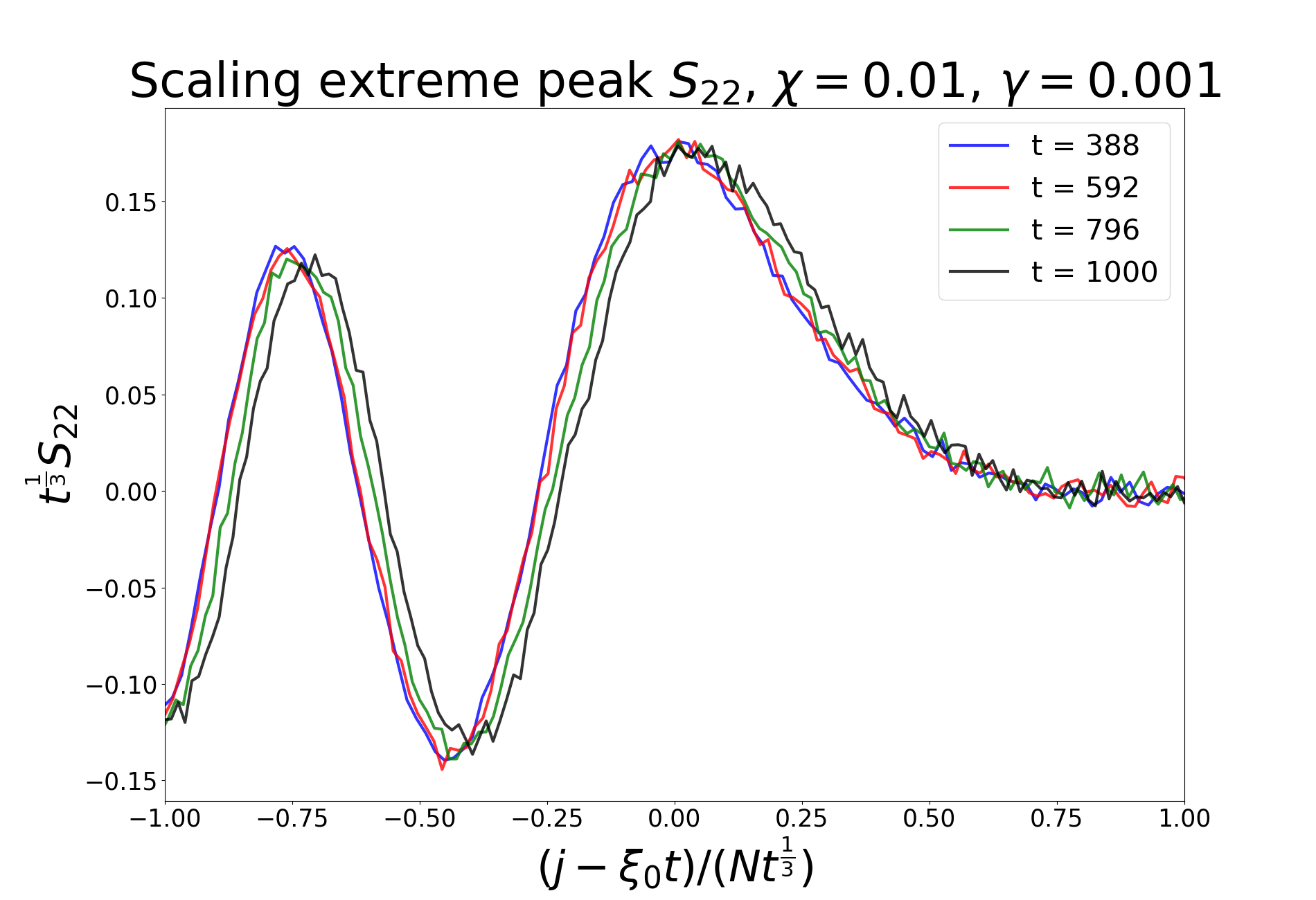}
		\includegraphics[scale=0.16]{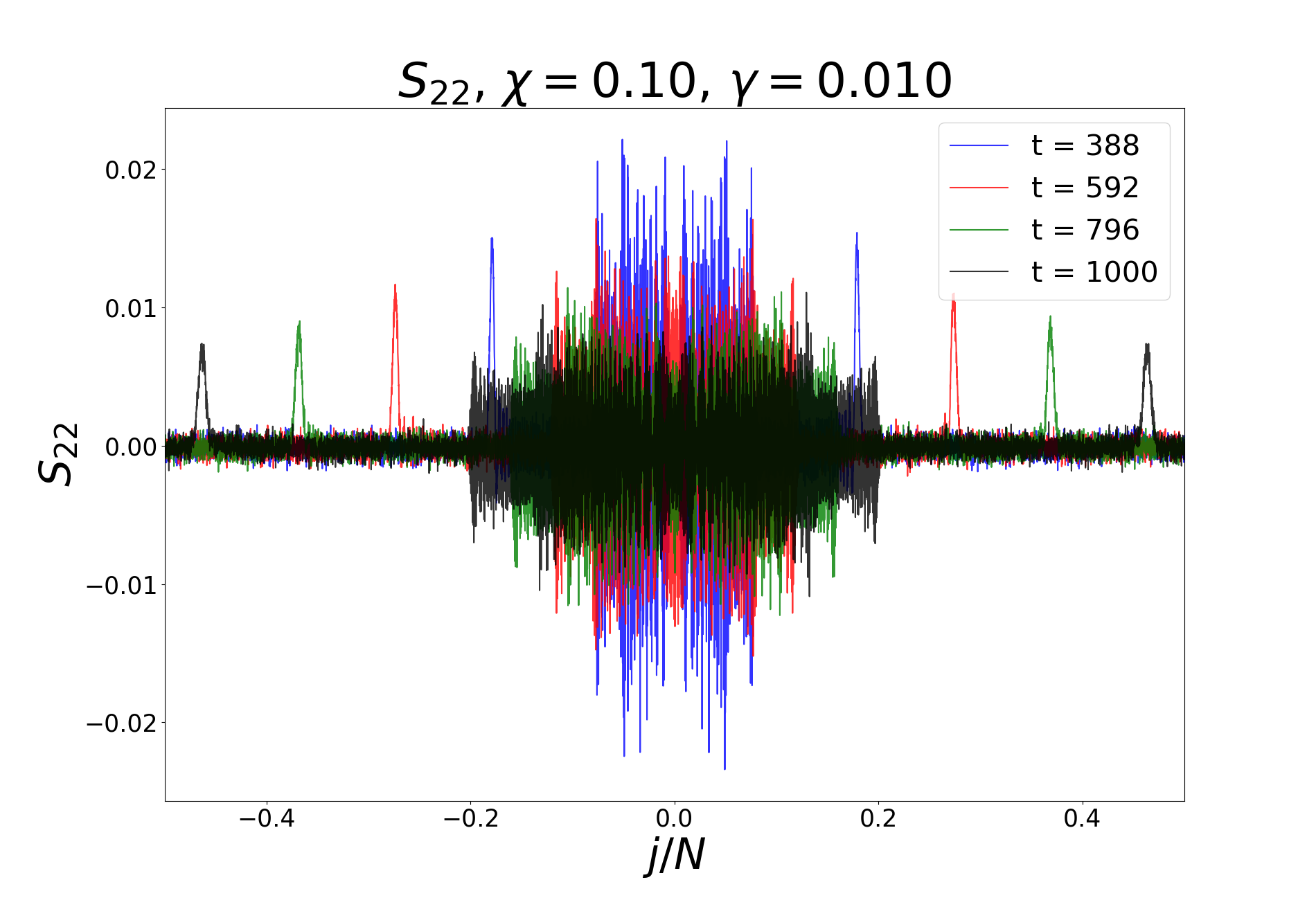}
		\includegraphics[scale=0.16]{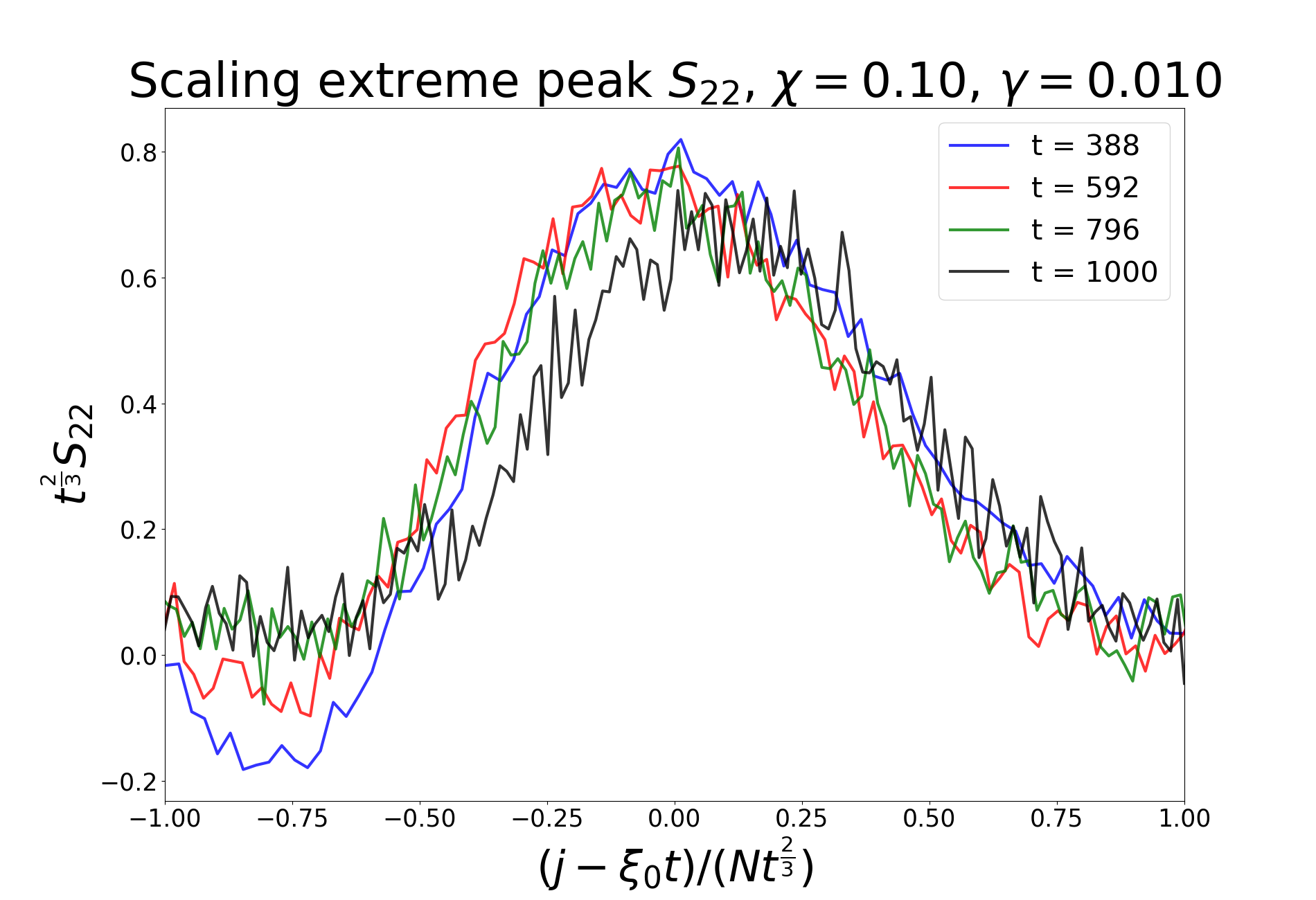}
		\caption{ Correlation function $S^{(N)}_{22}(j,t)$  for several values of times and for the  Hamiltonian \eqref{HN} with $\kappa_s$ as in  Example~\ref{example2},  $\chi=0.01$ and $\gamma=0.001$ in the top figure  and  $\chi=0.1$ and $\gamma=0.01$ in the lower figure.
		The  right  top  figure shows  the scaling of the fastest peak compatible  with the Airy   parametrix  and according to $t^{-2/3}$ in the lower figure. The speed $\xi_0$  of the fastest peak  is determined numerically. The decay rate of the  slower moving peaks  that are scaling like $t^{-1/4}$ in the linear case  (see Figure~\ref{Figure1}), is not very clear due to their highly oscillatory behaviour.
	}
\label{fig_ex2_non}
\end{figure}
\begin{figure}[ht]
	\centering
			\includegraphics[scale=0.16]{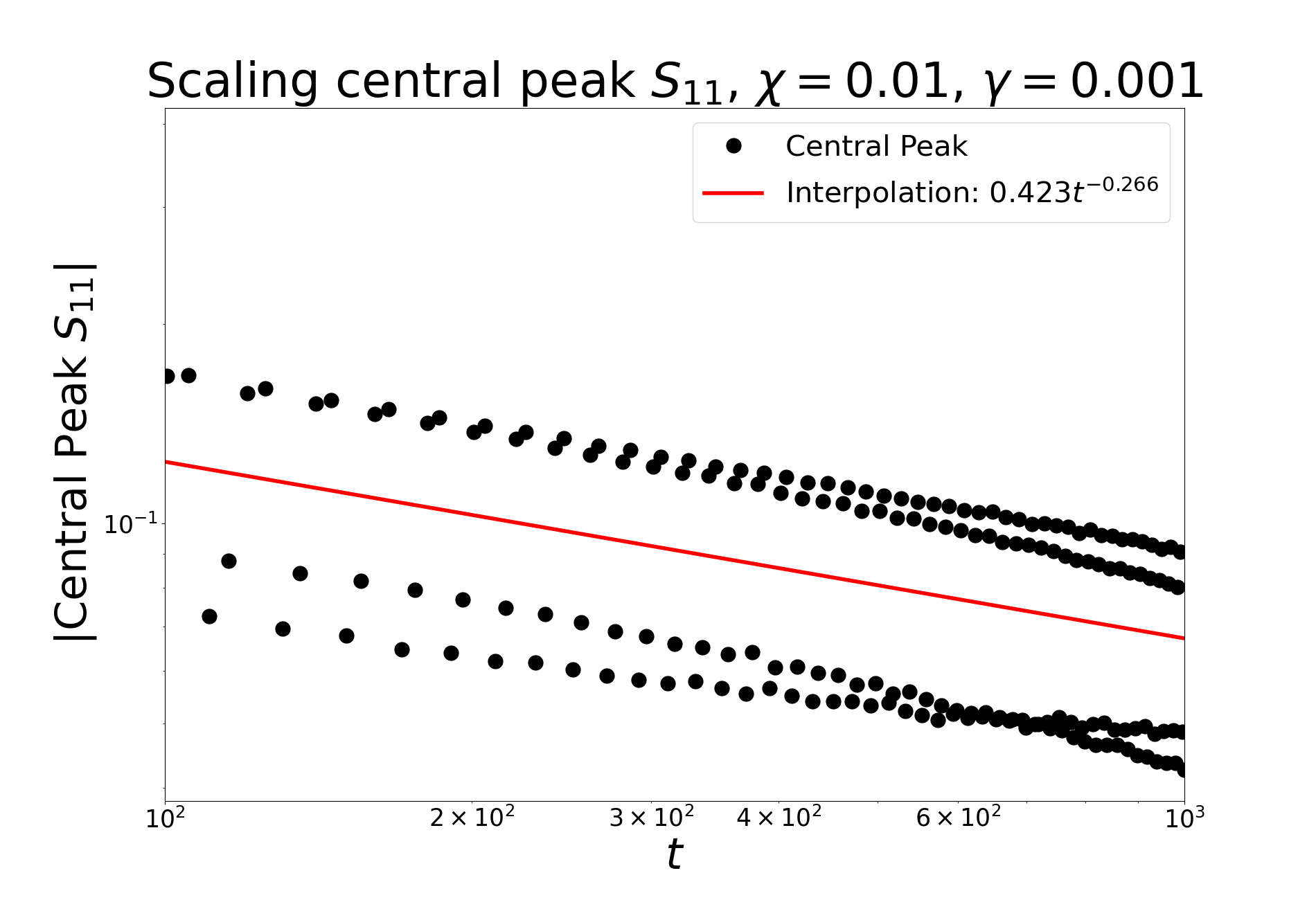}
		\includegraphics[scale=0.16]{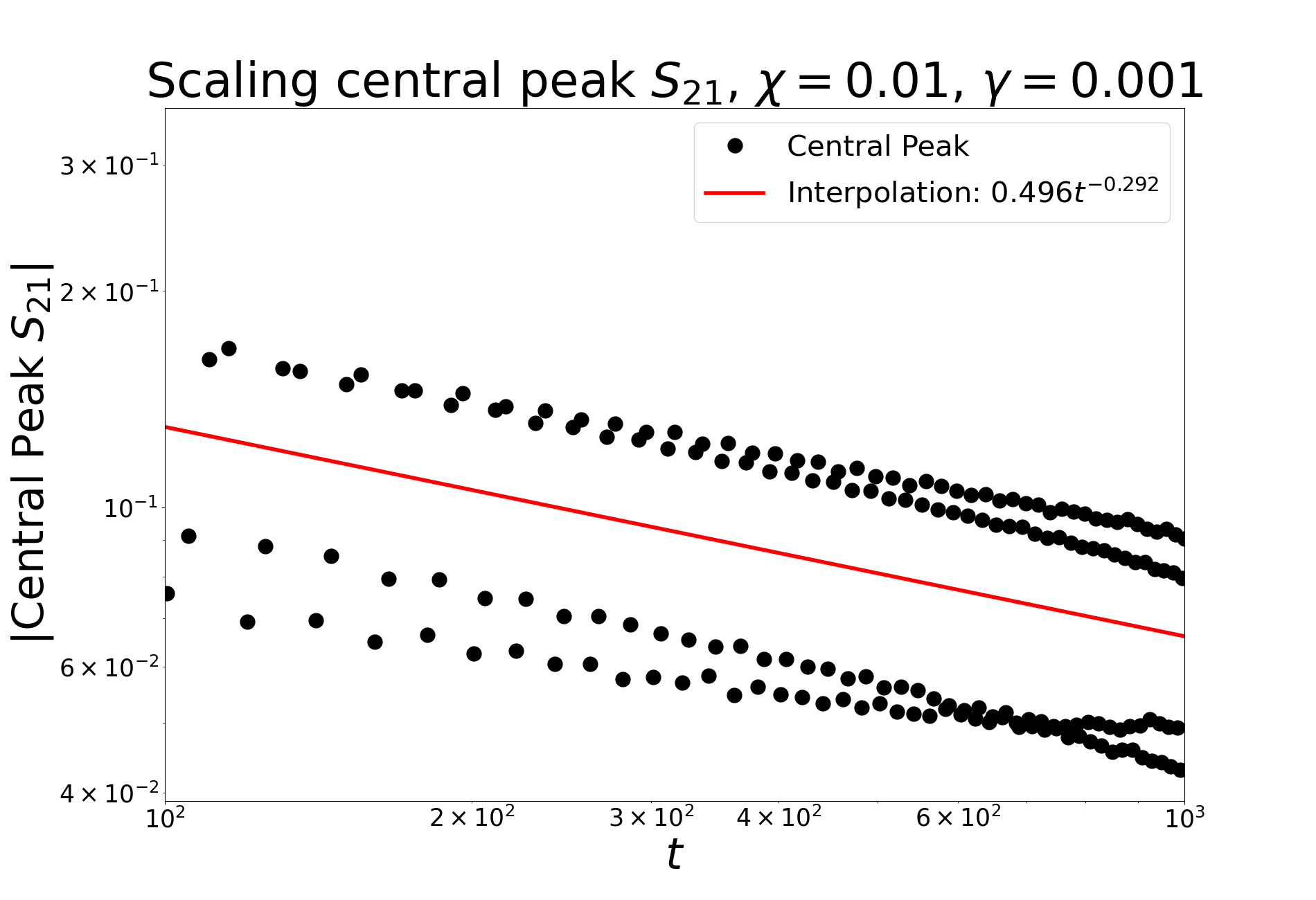}
		\caption{  Logarithmic plot  of the central peak of the example in Figure~\ref{fig_ex1_non} for $S_{11}(j,t)$ and $S_{21}(j,t)$	and several values of times. The peak is highly oscillatory and the oscillations are interpolated by the red line  that suggests a scaling of the correlation function $S_{11}(j,t)$ and  $S_{21}(j,t)$ near $j\sim0$ compatible with  $t^{-\frac{1}{4}}$. }
\label{fig_ex3_non}
\end{figure}


\begin{appendices}
\section{Proof of Proposition~\ref{prop:matrix_root}\label{Appendix_A}}
\begin{proof}
In view of the notation introduced in \eqref{A}, \eqref{v_a}, and \eqref{form:T} the proof of Proposition \ref{prop:matrix_root} amounts to showing the existence of $\tau_0,\ldots,\tau_m\in\R$ satisfying $\sum_{s=0}^m \tau_s =0$ such that
\begin{eqnarray}\label{eq:polynomial_eq}
Q(z^{-1}) \, Q(z) &=& \ell (z) \qquad \mbox{for all $z\in\C\setminus\{0\}$, where}\\ \nonumber
Q(z) = \tau_0 + \tau_1 z + \ldots + \tau_m z^m &\mbox{and}&\ell (z) = - \kappa_{m} z^{-m} - \ldots - \kappa_{1} z^{-1} + a_0 - \kappa_{1} z - \ldots - \kappa_{m} z^{m}\,.
\end{eqnarray}
The existence of the $\tau_j$'s is a consequence of the Fej\'er-Riesz lemma. For the convenience of the reader we present a proof following the presentation in~\cite[pg.~117~f]{book_Riesz_Nagy}. Denote by $P$ the polynomial of degree $2m$ given by $P(z) := z^m \ell (z)$. Observe that for all $x\in\R$ we have
$$\ell (e^{ix}) = a_0 - 2 \sum_{j=1}^m \kappa_j \cos (j x) \; \geq a_0 -2   \sum_{j=1}^m \kappa_j \; = \;  0.$$
By the positivity of $\kappa_1$ equality holds in the inequality above iff $\cos(x) = 1$.
This implies that $P$ has no zeros on the unit circle $|z|=1$ except for $z=1$. We denote by $\eta_k$, $1 \leq k \leq r_<$, the zeros of $P$ that lie within the unit disc $|\eta_k|<1$ and by $\xi_k$, $1 \leq k \leq r_>$, the zeros of $P$ with $|\xi_k|>1$, recorded repeatedly according to their multiplicities, so that
\begin{equation}\label{eq:polynomial_factorization}
P(z) = - \kappa_m (z-1)^{r_0} \prod_{k=1}^{r_<} (z - \eta_k) \prod_{k=1}^{r_>} (z - \xi_k)\,.
\end{equation}
Using the uniqueness of such a factorization for any polynomial together with the relation $z^{2m}P(z^{-1})=P(z)$ one obtains that $r_< = r_>$ and that the zeros can be listed in such a way that $\eta_k=\xi_k^{-1}$ for all $1 \leq k \leq r_<$. Moreover, we learn that $r_0$ is even with $1\leq \varrho := r_0/2 = m- r_<$. Now it follows from formula \eqref{eq:polynomial_factorization} that
\begin{equation*}\label{eq:factorization_ell}
l(z) \; = \; z^{-m} P(z) \; = \; c \, (z^{-1}-1)^{\varrho} (z-1)^{\varrho}\prod_{k=1}^{r_<} (z^{-1} - \xi_k) \prod_{k=1}^{r_<} (z - \xi_k)\qquad \mbox{with } \; c := -\kappa_m (-1)^{\varrho} \prod_{k=1}^{r_<} (-\xi_k^{-1}) \neq 0 \,.
\end{equation*}
Choosing $d \in \C$ with $d^2=c$ we see that $Q(z) := d (z-1)^{\varrho} \prod_{k=1}^{r_<} (z - \xi_k)$ satisfies \eqref{eq:polynomial_eq}. Next we show that the coefficients of the polynomial $Q$ are real. To this end observe that $P$ has real coefficients and therefore all non-real zeros of $P$ come in complex conjugate pairs with equal multiplicities. Therefore the polynomial $d^{-1}Q(z) =  \sum_{j=0}^{m} s_j z^j$ has only real coefficients $s_j$. Relation \eqref{eq:polynomial_eq} implies $a_0 = d^2 \sum_{j=0}^{m} s_j^2$. Consequently, $d^2$ is the quotient of two positive numbers and $d$ must be real. Thus we have $\tau_j = d s_j \in \R$ for all $0 \leq j \leq m$.
We complete the proof by arguing that $\sum_{s=0}^m \tau_s =0$ and $(\sum_{s=1}^m s \tau_s)^2 =\sum_{s=1}^m s^2 \kappa_s$ hold true.
This can be deduced from \eqref{eq:polynomial_eq} via $Q(1)^2=\ell(1)=0$ and $-2Q'(1)^2=\ell''(1)=-\sum_{s=1}^m 2s^2 \kappa_s$.
\end{proof}

\section{Pearcey integral\label{Appendix_D}}
The general Pearcey integral is defined as 
\begin{equation}\label{pea}
\bar P(b,a):=\int_{-\infty}^\infty e^{i(t^4 +bt^2+at)}dt,\quad  0\leq \arg b\leq \pi,\;\; a\in\mathbb{R}.
\end{equation}
This integral decribes  cusp singularities in physical phenomena, like  the semiclassical limit of the   linear  Schr\"odinger equation.
The integral \eqref{pea}, after a rotation of the integration path through an angle of $\pi/8$ that removes the rapidly oscillatory term $e^{it^4}$, can  be written in the form $\bar P(b,a)=2e^{i\pi/8}P(be^{-i\pi/4},ae^{i\pi/8})$, with
\begin{equation}\label{pearcy}
P(b,a):=\int_0^\infty e^{-t^4 -bt^2}\cos(at)dt.
\end{equation}
We are interested in the case $b=0$. The corresponding integral is absolutely convergent for all complex values of $a$ and represents the analytic continuation of the Pearcey integral. For the Pearcey integral ${\mathcal P}_{+}(a)$ defined in \eqref{Pearcey} we obtain
\[
{\mathcal P}_{+}(a) = 2e^{i\pi/8}P(0,ae^{i\pi/8})\,.
\]
Note that the integral ${\mathcal P}_{-}(a)$, also defined in \eqref{Pearcey}, can be related to the function $P$ by rotating the integration path by an angle of $-\pi/8$. This gives ${\mathcal P_{-}}(a) = 2e^{-i\pi/8}P(0,ae^{-i\pi/8})$. From this we learn that ${\mathcal P_{-}}(\bar a)=\overline{\mathcal P_{+}(a)}$. On the reals we therefore have
\[
{\mathcal P}_{-}(a)=\overline{{\mathcal P}_{+}(a)}\,,\quad a\in\mathbb{R}\,.
\]
\section{Integrals of motion correlation functions}\label{Appendix_E}
Here, for completeness, we report the limiting correlation functions for the integral of motions as defined in \eqref{CIM}. Using the notation $f(k)=|\omega(k)|$ introduced in Lemma~\ref{fandtheta} they are:
\begin{equation}
\begin{split}
	S_{k+3,n+3}(j,t)= &\frac{1}{2\beta^2}\int_{0}^{1}\int_{0}^{1}\cos\left(f(x)t\right)\cos\left(f(y)t\right)\cos\left(2\pi x(j-n)\right)\cos\left(2\pi y(j+k)\right) \\ 
	&+ \cos\left(f(x)t\right)\cos\left(f(y)t\right)\cos\left(2\pi xj\right)\cos\left(2\pi y(j+k-n)\right) \\
	& + \sin\left(f(x)t\right)\sin\left(f(y)t\right)\cos\left(2\pi x(j-n)\right)\cos\left(2\pi y(j+k)\right)\cos(\theta(x))\cos(\theta(y)) \\
	& + \sin\left(f(x)t\right)\sin\left(f(y)t\right)\sin\left(2\pi x(j-n)\right)\sin\left(2\pi y(j+k)\right)\sin(\theta(x))\sin(\theta(y)) \di x \di y
	\end{split}
\end{equation}
for $k,n\leq \frac{N-1}{2}$,
\begin{equation}
\begin{split}
S_{n+3,k+3}(j,t) =& 
\frac{1}{2\beta^2} \int_0^1 \int_0^1 f(x)f(y)\sin\left(f(x)t\right)\sin\left(f(y)t\right)\sin\left(2\pi xj\right)\sin\left(2\pi yj\right)\sin\left(2\pi xn\right)\sin\left(2\pi yk\right) \\ 
& + f^2(x)\cos(f(x)t)\cos(f(y)t)\cos\left(2\pi xj\right)\cos\left(2\pi yj\right)\sin\left(2\pi yn\right)\sin\left(2\pi yk\right) \di x \di y
\end{split}
\end{equation}
for $k,n>\frac{N-1}{2}$ and 
\begin{equation}
	\begin{split}
		S_{n+3,k+3}(j,t) &=
		\frac{1}{2\beta^2} \int_0^1\int_0^1 \cos\left(2\pi xj - \theta(x)\right)\cos\left(2\pi yj\right) \sin\left(2\pi yk\right)\sin\left(2\pi yn\right)\sin((f(x)+ f(y))t ) \\ 
		&+ \cos\left(2\pi xj - \theta(x)\right)\sin\left(2\pi yj\right) \sin\left(2\pi yk\right)\cos\left(2\pi yn\right)\sin((f(x) -f(y))t ) \di x \di y
	\end{split}
\end{equation}
for $k>\frac{N-1}{2}, n \leq \frac{N-1}{2}$.



\section{Numerical Computation}
The  numerical computations have been implemented with  \texttt{Python} software, all codes are available on GitHub \cite{FPUT_repo}.
Fig. \ref{Figure1}--\ref{fig_ex2} are the result of the numerical evaluation via the standard routine \texttt{numpy.trapz} of  the integrals in  \eqref{eq:expCpp}--\eqref{S33} for various values of $j$ and $t$ and then we just added the Airy function \eqref{Airy0} and the Pearcey integral \eqref{Pearcey0}.

To obtain Fig. \ref{fig_ex1_non} we proceed in the following way. First we sampled a random initial data according to the Gibbs measure defined by the corresponding harmonic part of \eqref{HN}, namely the Hamiltonian of Example \ref{example1} with $m=2$. We let these data evolve according to the Hamilton equations of \eqref{HN} and compute the values of the correlations function. Then we repeated this procedure  $4\times 10^6$ times and we averaged the values of the correlations functions. On the left panel we plot the correlation functions,  instead on the right one we focus on the extreme peak and we guess a proper scaling depending on the size of the perturbation. Fig. \ref{fig_ex2_non} is made in a similar way, where now  the nonlinear potential has the same harmonic part as Example \ref{example2}.

In Fig. \ref{fig_ex3_non} we focus our attention on the central peak of the chain with potential as is Fig. \ref{fig_ex1_non}. We follow the same procedure as before and plot in logarithmic scale the average scaling of the highest peak in the center of the chain. We decide to plot the  average height of this peak since it is highly  oscillatory and it is difficult to precisely track the oscillations. 
\end{appendices}

\vskip 0.8cm
{\noindent \bf Acknowledgments.}
This manuscript was initiated during the research in pairs that took place in May 2019 at the Centre International des Rescontres math\'ematiques (CIRM), Luminy, France  during the chair Morlet semester "Integrability and randomness in mathematical physics".
 The authors  thank CIRM for the generous support, excellent work environment, and kind hospitality.
  K.M. was supported in part by the National Science Foundation under grant DMS-1733967.
T.G. and G.M. acknowledge support from the European Union's H2020 research and innovation program under the Marie Sk\l owdoska--Curie grant No. 778010 {\em  IPaDEGAN}.\\  
 We thank  Manuela Girotti for related initial calculations in the case of nearest-neighbor interactions. 
 We also thank Giuseppe Pitton for sharing his codes with us, and for useful discussions.


\vskip 0.7cm
	\bibliographystyle{siam}
	\bibliography{BIB_Longrange.bib}
%
%
%
%
%
%
%
%
%

\end{document}